\numberwithin{equation}{section}
\newtheorem{thrm}{Theorem}[section]
\newtheorem{prop}{Proposition}[section]
\newtheorem{defn}{Definition}[section]
\theoremstyle{definition}
\newtheorem{rmk}{Remark}[section]
\newcommand{\nc}{\newcommand}
\nc{\al}{\alpha}
\nc{\be}{\beta}
\nc{\eps}{\epsilon}
\nc{\veps}{\varepsilon}
\nc{\ga}{\gamma}
\nc{\Ga}{\Gamma}
\nc{\ka}{\kappa}
\nc{\la}{\lambda}
\nc{\La}{\Lambda}
\nc{\del}{\delta}
\nc{\om}{\omega}
\nc{\si}{\sigma}
\nc{\Ups}{\upsilon}
\nc{\vphi}{\varphi}
\nc{\id}{\mathrm{id}}
\nc{\gr}{\mathrm{gr}}
\nc{\Aut}{\mathrm{Aut}}
\nc{\tl}{\tilde}
\nc{\mbA}{\mathbf{A}}
\nc{\mbb}{\mathbf{b}}
\nc{\mbB}{\mathbf{B}}
\nc{\mbc}{\mathbf{c}}
\nc{\mbC}{\mathbf{C}}
\nc{\mbd}{\mathbf{d}}
\nc{\mbD}{\mathbf{D}}
\nc{\mbe}{\mathbf{e}}
\nc{\mbE}{\mathbf{E}}
\nc{\mbf}{\mathbf{f}}
\nc{\mbF}{\mathbf{F}}
\nc{\mbg}{\mathbf{g}}
\nc{\mbH}{\mathbf{H}}
\nc{\mbh}{\mathbf{h}}
\nc{\mbi}{\mathbf{i}}
\nc{\mbI}{\mathbf{I}}
\nc{\mbj}{\mathbf{j}}
\nc{\mbJ}{\mathbf{J}}
\nc{\mbk}{\mathbf{k}}
\nc{\mbK}{\mathbf{K}}
\nc{\mbL}{\mathbf{L}}
\nc{\mbM}{\mathbf{M}}
\nc{\mbQ}{\mathbf{Q}}
\nc{\mbq}{\mathbf{q}}
\nc{\mbr}{\mathbf{r}}
\nc{\mbT}{\mathbf{T}}
\nc{\mbu}{\mathbf{u}}
\nc{\mbU}{\mathbf{U}}
\nc{\mbv}{\mathbf{v}}
\nc{\mbV}{\mathbf{V}}
\nc{\mbw}{\mathbf{w}}
\nc{\mbW}{\mathbf{W}}
\nc{\mbX}{\mathbf{X}}
\nc{\mbY}{\mathbf{Y}}
\nc{\mbZ}{\mathbf{Z}}
\nc{\mbbA}{\mathbb{A}}
\nc{\mbbB}{\mathbb{B}}
\nc{\mbbD}{\mathbb{D}}
\nc{\mbbF}{\mathbb{F}}
\nc{\mbbV}{\mathbb{V}}
\nc{\mbbH}{\mathbb{H}}
\nc{\mbbK}{\mathbb{K}}
\nc{\mbbL}{\mathbb{L}}
\nc{\mbbP}{\mathbb{P}}
\nc{\mbbU}{\mathbb{U}}
\nc{\mcA}{\mathcal{A}}
\nc{\mcB}{\mathcal{B}}
\nc{\mcC}{\mathcal{C}}
\nc{\mcD}{\mathcal{D}}
\nc{\mcE}{\mathcal{E}}
\nc{\mcF}{\mathcal{F}}
\nc{\mcH}{\mathcal{H}}
\nc{\mcK}{\mathcal{K}}
\nc{\mcO}{\mathcal{O}}
\nc{\mcQ}{\mathcal{Q}}
\nc{\mcS}{\mathcal{S}}
\nc{\mcP}{\mathcal{P}}
\nc{\mcU}{\mathcal{U}}
\nc{\mcT}{\mathcal{T}}
\nc{\mcV}{\mathcal{V}}
\nc{\mcY}{\mathcal{Y}}
\nc{\mcZ}{\mathcal{Z}}
\nc{\mfa}{\mathfrak{a}}
\nc{\mfA}{\mathfrak{A}}
\nc{\mfb}{\mathfrak{b}}
\nc{\mfB}{\mathfrak{B}}
\nc{\mfC}{\mathfrak{C}}
\nc{\mfd}{\mathfrak{d}}
\nc{\mfD}{\mathfrak{D}}
\nc{\mfe}{\mathfrak{e}}
\nc{\mfE}{\mathfrak{E}}
\nc{\mff}{\mathfrak{f}}
\nc{\mfF}{\mathfrak{F}}
\nc{\mfg}{\mathfrak{g}}
\nc{\mfgl}{\mathfrak{g}\mathfrak{l}}
\nc{\mfh}{\mathfrak{h}}
\nc{\mfH}{\mathfrak{H}}
\nc{\mfJ}{\mathfrak{J}}
\nc{\mfk}{\mathfrak{k}}
\nc{\mfK}{\mathfrak{K}}
\nc{\mfl}{\mathfrak{l}}
\nc{\mfL}{\mathfrak{L}}
\nc{\mfM}{\mathfrak{M}}
\nc{\mfm}{\mathfrak{m}}
\nc{\mfn}{\mathfrak{n}}
\nc{\mfN}{\mathfrak{N}}
\nc{\mfo}{\mathfrak{o}}
\nc{\mfP}{\mathfrak{P}}
\nc{\mfQ}{\mathfrak{Q}}
\nc{\mfS}{\mathfrak{S}}
\nc{\mfsl}{\mathfrak{s}\mathfrak{l}}
\nc{\mfso}{\mathfrak{s}\mathfrak{o}}
\nc{\mfsp}{\mathfrak{s}\mathfrak{p}}
\nc{\mft}{\mathfrak{t}}
\nc{\mfU}{\mathfrak{U}}
\nc{\mfu}{\mathfrak{u}}
\nc{\mfV}{\mathfrak{V}}
\nc{\mfX}{\mathfrak{X}}
\nc{\mfY}{\mathfrak{Y}}
\nc{\mfz}{\mathfrak{z}}
\nc{\msA}{\mathsf{A}}
\nc{\msB}{\mathsf{B}}
\nc{\msC}{\mathsf{C}}
\nc{\msc}{\mathsf{c}}
\nc{\msD}{\mathsf{D}}
\nc{\msd}{\mathsf{d}}
\nc{\mse}{\mathsf{e}}
\nc{\msw}{\mathsf{w}}
\nc{\msq}{\mathsf{q}}
\nc{\msg}{\mathsf{g}}
\nc{\msE}{\mathsf{E}}
\nc{\msf}{\mathsf{f}}
\nc{\msF}{\mathsf{F}}
\nc{\msh}{\mathsf{h}}
\nc{\msH}{\mathsf{H}}
\nc{\msI}{\mathsf{I}}
\nc{\msJ}{\mathsf{J}}
\nc{\msK}{\mathsf{K}}
\nc{\msL}{\mathsf{L}}
\nc{\msP}{\mathsf{P}}
\nc{\msQ}{\mathsf{Q}}
\nc{\msR}{\mathsf{R}}
\nc{\mss}{\mathsf{s}}
\nc{\msS}{\mathsf{S}}
\nc{\msT}{\mathsf{T}}
\nc{\msU}{\mathsf{U}}
\nc{\msV}{\mathsf{V}}
\nc{\msX}{\mathsf{X}}
\nc{\msY}{\mathsf{Y}}
\nc{\msZ}{\mathsf{Z}}
\nc{\End}{\mathrm{End}}
\nc{\Ext}{\mathrm{Ext}}
\nc{\Hom}{\mathrm{Hom}}
\nc{\Ima}{\mathrm{Image}}
\nc{\Ind}{\mathrm{Ind}}
\nc{\Ker}{\mathrm{Ker}}
\nc{\RHom}{\mathrm{RHom}}
\nc{\Sym}{\mathrm{Sym}}
\nc{\mf}{\mathfrak}
\nc{\mc}{\mathcal}
\nc{\ms}{\mathsf}
\nc{\bb}{\mathbb}
\nc{\wh}{\widehat}
\nc{\wt}{\widetilde}
\nc{\Q}{\mathbb{Q}}
\nc{\C}{\mathbb{C}}
\nc{\N}{\mathbb{N}}
\nc{\Z}{\mathbb{Z}}
\nc{\ot}{\otimes}
\nc{\op}{\oplus}
\nc{\ol}{\overline}
\nc{\lan}{\langle}
\nc{\ran}{\rangle}
\nc{\spl}[1]{\begin{equation}\begin{aligned}#1\end{aligned}\end{equation}}
\nc{\eqa}[1]{\begin{align}#1\end{align}}
\nc{\eq}[1]{\begin{equation}#1\end{equation}}
\nc{\eqn}[1]{\begin{align*}#1\end{align*}}
\nc{\tx}[1]{\qu\text{#1}\qu}
\nc{\sqb}[1]{\text{\tiny$($}{#1}\text{\tiny$)$}} 
\nc\el{\nonumber\\}
\nc\nn{\nonumber}
\nc{\sm}[1]{\text{\tiny{\rm #1}}}
\nc{\qu}{\quad}
\nc{\qq}{\qquad}
\nc*{\dt}[1]{%
  \accentset{\mbox{\large\bfseries .}}{#1}}
\nc{\red}{\color{red}}
\nc{\blu}{\color{blue}}
\begin{document}

\begin{flushright}
DMUS-MP-15/05
\end{flushright}

\vspace{.5cm}

\title{Yangian of $AdS_3 /CFT_2$ and its deformation}

\author{Vidas Regelskis}

\address{Department of Mathematics, University of Surrey, Guildford, GU2 7YX, U.K.}

\email{v.regelskis@surey.ac.uk}

\begin{abstract}

We construct highest-weight modules and a Yangian extension of the centrally extended $\mf{sl}(1|1)^2$ superalgebra, that is a symmetry of the worldsheet scattering associated with the $AdS_3 /CFT_2$ duality. We demonstrate that the R-matrix intertwining atypical modules has an elegant trigonometric parametrization. We also consider a quantum deformation of this superalgebra, its modules, and obtain a quantum affine extension of the Drinfeld-Jimbo type that describes a deformed worldsheet scattering. 

\end{abstract}

\subjclass[2010]{Primary 81R10, 81R12; Secondary 81T30, 81Q80}

\maketitle

\thispagestyle{empty}


\section{Introduction}

Recent progress in exploring integrability in $AdS/CFT$ dualities has led to the discovery of many new algebraic structures. One of the most notable is a class of the so-called $u$-deformed Hopf superalgebras, which emerge in the worldsheet scattering theory in various backgrounds \cite{BSS1,BSSST1,GH,HPT,PST,Sf}. These superalgebras are deformed in the direction of their central extensions and lead to $R$-matrices of a non-relativistic type closely resembling that of the one-dimensional Hubbard chain \cite{Be2,MaMe}. 

For example, the worldsheet superalgebra of the $AdS_5 /CFT_4$ duality is the centrally extended superalgebra $\mfsl({2|2})\op{\C u^{\pm}}$ admitting a $u$-deformed Hopf algebra structure \cite{PST} and having a non-conventional representation theory \cite{MaMo}. Interestingly, it admits a non-standard $u$-deformed Yangian extension, which was constructed in various realizations: the Drinfeld $J$ presentation \cite{Be3}, Drinfeld New presentation \cite{ST}, and $RTT$-presentation \cite{BL}. Moreover, this superalgebra can be further deformed in the Cartan direction. In such a way one obtains a double-deformed Hopf superalgebra. In particular, a $q$-deformation of the $u$-deformed $\mfsl({2|2})\op{\C u^{\pm}}$ and its affinization of the Drinfeld-Jimbo type were constructed in \cite{BK} and \cite{BGM}, respectively.  

In this paper we consider the centrally extended superalgebra $\C h_0\ltimes\mfsl({1|1})^2\op{\C u^{\pm}}$, which was shown to be a symmetry of the worldsheet scattering in the $AdS_3 /CFT_2$ duality \cite{BSZ} for the $AdS_3 \times S^3 \times S^3 \times S^1$ background \cite{BSS1,BSSS2}. Moreover, its quantum deformation was shown to be a symmetry of the deformed worldsheet scattering \cite{Ho}. This superalgebra also serves as a prototype for the symmetries of the duality on the $AdS_3 \times S^3 \times T^4$ background \cite{BSSS1,BSSST1}, which essentially contains two copies of this superalgebra with their central elements identified. For this reason, in this paper we will focus on the algebraic constructions associated with the former case only. 

This paper contains two parts. In the first part we construct highest-weight modules and a \mbox{$u$-deformed} Yangian extension of the extended superalgebra $\C h_0\ltimes\mfsl({1|1})^2\op{\C u^{\pm}}$. In particular, we construct typical and atypical Kac modules $K(\la_1,\la_2,\nu)$ and $A(\la_1,\la_2,\nu)$, and the one-dimensional module $\ms{1}$ of $\mfsl({1|1})^2\op{\C u^{\pm}}$. We show that the tensor product of two atypical modules is isomorphic to the typical one and obtain the corresponding $R$-matrix. In a suitable parametrization, this $R$-matrix has a trigonometric form described by three independent variables. We also construct an evaluation homomorphism from the newly constructed Yangian to the universal enveloping algebra of $\C h_0\ltimes\mfsl({1|1})^2\op{\C u^{\pm}}$.

In the second part of the paper we consider a quantum deformation of the $\C h_0\ltimes\mfsl({1|1})^2\op{\C u^{\pm}}$ superalgebra and obtain its affine extension of the Drinfeld-Jimbo type. The structure of the second part closely resembles that of the first part. We construct quantum deformed analogues of the highest-weight modules constructed before and obtain the deformed $R$-matrix. The affinization presented in this paper is inspired by a similar double-deformed construction presented in \cite{BGM}.

The main results of this paper are presented in Sections 3 and 5, where the Yangian extension of the superalgebra $\C h_0\ltimes\mfsl({1|1})^2\op{\C u^{\pm}}$ and an affinization of its quantum deformation are presented. Sections 2 and 5 serve as the necessary preliminaries. Appendices A and D contain some additional computations and formulae that were omitted in the main parts of the manuscript. Appendices B and C explain the connection between the notation used in the present paper and the traditional notation which uses the $x^\pm$ variables. 

The goal of the present study was to obtain new infinite dimensional superalgebras and deformed superalgebras that can be further used to study the highest-weight representation theory along the lines of \cite{HZh,RZh1,RZh2}. Such representations would be important in studying integrability of the $AdS_3/CFT_2$ duality using the techniques of the Bethe Ansatz similar to the ones introduced in \cite{BR,MaMe,RS}, and progress towards the Baxter Q-operators using the methods introduced in \cite{BFLMS,FH,HJ}. 

There is also a number of other important aspects of the integrability in the $AdS_3/CFT_2$ duality, where extended symmetries could play an important role: the description of the massless modes \cite{BSSS1,BSSS2,SST}, determination of the so-called dressing phases \cite{BSSST2}, integrability in the presence of mixed-flux backgrounds \cite{BSSS2,LSSS,HST,HT} and deformations \cite{HRT,Ho}. The latter questions go beyond the scope of the present paper and will not be considered. We leave these question for further study.


\section{The superalgebra $\C \ltimes \mfsl({1|1})^2\op{\C u^{\pm}}$ and its highest-weight modules} \label{Sec:2}

In this section we present the superalgebra $\mfa=\C \ltimes \mfsl({1|1})^2\op{\C u^{\pm}}$ and the associated Hopf algebra which arises as a symmetry of the worldsheet scattering in the $AdS_3 /CFT_2$ duality \cite{BSS1}. We then construct highest-weight modules of this algebra that are important in the aforementioned duality. 

\subsection{Algebra}

Let $[\cdot\hspace{.5mm},\cdot]$ denote the $\Z_2$-graded commutator, i.e.\ $[a,b]=ab-(-1)^{p(a)p(b)}ba$ for $\forall a,b\in\mfg$, where $\mfg$ is a Lie superalgebra and $p=\deg_2$ denotes the $\Z_2$-grading on $\mfg$. We will also use the notation $\C^\times = \C \backslash \{0\}$. 

\medskip

We start by considering the centrally extended superalgebra $\C \ltimes \mfsl({1|1})^2\op \C^2$, where $\C^2 = \C k_1 \op \C k_2$. We then obtain $\C \ltimes \mfsl({1|1})^2\op{\C u^{\pm}}$ as the quotient of an extension of the former algebra. The motivation for this approach is explained in Remark \ref{R:2.2} given below.

\begin{defn}
The centrally extended superalgebra $\C \ltimes \mfsl({1|1})^2\op\C^2$ is generated by elements $e_i$, $f_i$, $h_0$ and central elements $h_i$, $k_i$ with $i,j\in\{1,2\}$ satisfying 
\eq{
[ e_i , f_j ] = \del_{ij} h_i + (1-\del_{ij})\, k_i, \qu [ h_0, f_i ] = - f_i, \qu [ h_0, e_i ] = e_i .   \label{a:Lie}
}
The remaining relations are trivial. The $\Z_2$-grading is given by $\deg_2(h_0)=\deg_2(h_i)=\deg_2(k_i)=0$ and $\deg_2(e_i)=\deg_2(f_i)=1$.
\end{defn}

\begin{rmk} \label{R:aut-a}
The algebra $\C \ltimes \mfsl({1|1})^2\op\C^2$ has outer-automorphism group $GL(2)^2$ acting by
\eq{
\left(\!\!\begin{array}{c} e_1 \\ e_2 \end{array}\!\!\right) \mapsto A \left(\!\!\begin{array}{c} e_1 \\ e_2 \end{array}\!\!\right) , \qu 
\left(\!\!\begin{array}{c} f_1 \\ f_2 \end{array}\!\!\right) \mapsto B \left(\!\!\begin{array}{c} f_1 \\ f_2 \end{array}\!\!\right) , \qu
\left(\!\!\begin{array}{cc} h_1 & k_1 \\ k_2 & h_2 \end{array}\!\!\right) \mapsto A \left(\!\!\begin{array}{cc} h_1 & k_1 \\ k_2 & h_2 \end{array}\!\!\right) B^t , \label{aut-a}
}
for any $(A,B)\in GL(2)^2$. Here $B^t$ denotes the transposed matrix. The element $h_0$, which acts as an outer-automorphism on the subalgebra $\mfsl({1|1})^2\op\C^2$, is invariant under the action of $GL(2)^2$.
\end{rmk}

Our focus will be on the tensor product of two atypical highest-weight modules and the $R$-matrix. Bearing in mind this goal we extend the algebra above by the ring $\C u^{\pm} (=\C u^+ \op \C u^-)$ such that $u^\pm u^\mp=1$ and introduce a book-keeping notation $\mfa_0 = \C \ltimes \mfsl({1|1})^2\op\C^2\op\C u^{\pm}$. Note that $u^\pm$ are central in $\mfa_0$. 
Let $U(\mfa_0)$ denote the universal enveloping algebra of $\mfa_0$. The next observation is immediate from the defining relations of the algebra.

\begin{prop}
The vector space basis of $U(\mfa_0)$ is given in terms of monomials
\spl{ \label{PBW}
& (f_2)^{r_2} (f_1)^{r_1} (h_0)^{l_0}(h_1)^{l_1} (h_2)^{l_2} (k_1)^{l_3} (k_2)^{l_4} (u)^{t} (e_1)^{s_1} (e_2)^{s_2} 
}
with $r_i,s_i \in \{0,1\}$, $l_i\in \Z_{\ge0}$ and $t\in\Z$.
\end{prop}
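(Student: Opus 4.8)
The plan is to prove this is a Poincar\'e--Birkhoff--Witt (PBW) type basis. The superalgebra $\mfa_0$ is a Lie superalgebra (with the central ring $\C u^{\pm}$ adjoined), so the natural tool is the PBW theorem for Lie superalgebras: once we fix a total order on a homogeneous basis of $\mfa_0$, the ordered monomials (with the odd generators appearing to at most the first power, and the even generators to arbitrary nonnegative powers) form a vector space basis of $U(\mfa_0)$. The task therefore reduces to two things: first, identifying the correct ordered basis of $\mfa_0$ that produces exactly the monomials \eqref{PBW}; and second, justifying that the ring $\C u^{\pm}$, where $u^{\pm}$ are invertible rather than polynomial generators, can be handled within this framework so that the exponent $t$ ranges over all of $\Z$ rather than $\Z_{\ge 0}$.

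For the first point, I would order the homogeneous generators so that the odd elements $f_2, f_1$ come first, followed by the even elements $h_0, h_1, h_2, k_1, k_2$, then the central unit $u$, and finally the odd elements $e_1, e_2$. The super-PBW theorem then asserts that monomials $(f_2)^{r_2}(f_1)^{r_1}(h_0)^{l_0}(h_1)^{l_1}(h_2)^{l_2}(k_1)^{l_3}(k_2)^{l_4}(u)^t(e_1)^{s_1}(e_2)^{s_2}$, with odd exponents $r_i, s_i \in \{0,1\}$ (since $f_i^2 = e_i^2 = 0$ is forced by $[f_i,f_i]=[e_i,e_i]=0$ for odd elements of an ordinary Lie superalgebra) and even exponents $l_i \in \Z_{\ge 0}$, span $U(\mfa_0)$ and are linearly independent. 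The restriction $r_i, s_i \in \{0,1\}$ is exactly the oddness constraint, and the positions of the factors match \eqref{PBW} by construction.

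For the second point, the extension by $\C u^{\pm}$ with the relation $u^+ u^- = 1$ means $u$ is not an ordinary polynomial variable but an invertible central element. I would handle this by viewing $U(\mfa_0)$ as $U(\C \ltimes \mfsl(1|1)^2 \op \C^2) \ot_{\C} \C[u^+, u^-]/(u^+u^- - 1)$, where the second factor is the Laurent polynomial ring $\C[u, u^{-1}]$ in a single central variable $u := u^+$ (with $u^- = u^{-1}$). Since $u$ is central and even, it simply commutes past everything, and the Laurent ring has $\{u^t : t \in \Z\}$ as a $\C$-basis; this is precisely where the unrestricted range $t \in \Z$ comes from. Thus the full set of monomials \eqref{PBW} is the tensor product of the PBW basis of the enveloping algebra of the finite-dimensional piece with the Laurent monomial basis, and is therefore a $\C$-basis of $U(\mfa_0)$.

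The main obstacle I anticipate is bookkeeping rather than conceptual: verifying that the defining relations \eqref{a:Lie} are genuinely those of a Lie superalgebra (so that the super-PBW theorem applies verbatim) and that no additional relations are hidden. In particular one must check that $h_0$ acts by derivations consistent with the grading and that the bracket satisfies the graded Jacobi identity, so that $\mfa_0$ is a bona fide Lie superalgebra and $U(\mfa_0)$ its enveloping algebra. Once that is confirmed, the statement follows directly from the super-PBW theorem together with the Laurent-ring structure for $u$, and no delicate argument is required. This is why the excerpt calls the result ``immediate from the defining relations.''
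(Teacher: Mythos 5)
Your proposal is correct and follows essentially the same route the paper intends: the paper offers no explicit proof, calling the result ``immediate from the defining relations'' and describing \eqref{PBW} as a Poincar\'e--Birkhoff--Witt type basis obtained ``by the standard arguments,'' which is precisely the super-PBW theorem (with $e_i^2=f_i^2=0$ forced by the trivial odd brackets) tensored with the Laurent ring $\C[u,u^{-1}]$ that you spell out. You have simply made explicit the details the paper leaves implicit.
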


The monomials \eqref{PBW} give a Poincar\'e--Birkhoff--Witt type basis of $U(\mfa_0)$. Moreover, by the standard arguments, $U(\mfa_0)\cong U^-_0.U^0_0 .U^+_0$ as vector spaces, where $U^-_0$ and $U^+_0$  are the nilpotent subalgebras generated by elements $f_i$ and $e_i$ with $i\in\{1,2\}$, respectively, and $U^0_0$ is generated by the remaining elements of $\mfa_0$.

\begin{rmk}  \label{R:Z5}
The algebra $U(\mfa_0)$ also admits a $\Z$-grading given by
$$
\deg(h_0) = \deg(h_i) = \deg(u^{\pm}) = 0, \qu \deg(e_i)= \dt\pm1, \qu \deg(f_{i})=\dt\mp1, \qu \deg(k_i)= \dt\pm 2,
$$
where the upper sign in $\dt\pm$ and $\dt\mp$ is for $i=1$ and the lower sign is for $i=2$ (we will use this dotted notation throughout this paper). Note that we could equivalently define the $\Z$-grading by inverting the grading, namely $\deg \to -\deg$. 
\end{rmk}

Let $I_0$ be the ideal of $U(\mfa_0)$ generated by the relation
\eq{ \label{I0}
k_i = \al_i (u^2-u^{-2}),
}
where $\al_i\in\C^\times$. Define the quotient algebra $U(\mfa) = U(\mfa_0)/I_0$. Then one can introduce a Hopf algebra structure on $U(\mfa)$ given by the coproduct for $i\in\{1,2\}$ (and $i=0$ for $h_i$)
\spl{ \label{a:cop}
\Delta(e_i) &= e_i \ot u^{\dt\mp} + u^{\dt\pm} \ot e_i , \qq  \Delta(k_i) = k_i \ot u^{\dt\mp2} + u^{\dt\pm2} \ot k_i , \\
\Delta(f_i) &= f_i \ot u^{\dt\pm} + u^{\dt\mp} \ot f_i , \qq \Delta(h_i) = h_i \ot 1 + 1 \ot h_i ,\qq \Delta(u^\pm)=u^\pm\ot u^\pm ,
}
the counit $\veps(a)=0$ and the antipode $S(a) = - a$ for all $a\in\mfa$ except $\veps(u^\pm)=1$, $S(u^\pm)=u^\mp$. 

\medskip

Let us pinpoint some properties of the Hopf algebra $U(\mfa)$. First, the $\Z$-grading introduced in Remark \ref{R:Z5} agrees with the powers of $u^\pm$ in the coproduct \eqref{a:cop}. Second, we want to explain the role of the ideal~$I_0$. Set $\Delta^{\rm op} = \si \circ \Delta$, where $\si : a \ot b \mapsto (-1)^{p(a)p(b)}b \ot a$ is the graded permutation operator. Observe that all central elements of $U(\mfa)$ are co-commutative:
$$
\Delta(c) = \Delta^{\rm op}(c) \tx{for} c\in\{h_i,k_i,u^\pm\} , \qu i\in\{1,2\}. 
$$
This is obvious for $h_i$ and $u^\pm$, while for $k_i$ we have $\Delta(k_i)-\Delta^{\rm op}(k_i)=k_i \ot (u^{\dt\mp2} - u^{\dt\pm2}) + (u^{\dt\pm2} - u^{\dt\mp2}) \ot k_i$, which is equal to zero only if \eqref{I0} holds. (Moreover, this is a necessary condition for existence of the $R$-matrix.) Also note that due to \eqref{I0} the algebra $U(\mfa)$ is actually a two-parameter family of Hopf algebras parametrized by $\al_i$. Moreover, we need to formally set $\deg\al_i=\dt\pm2$, for \eqref{I0} to respect the $\Z$-grading. (In the $AdS_3/CFT_2$ duality one usually sets $\al_1=-\al_2=-h/2$, where $h\in\C^\times$ plays the role of the coupling constant of the underlying field theory and the minus is to ensure unitarity.)

\begin{rmk}
Besides the Chevalley anti-automorphism $e_i\mapsto -f_i$, $f_i\mapsto -f_i$, $h_0\mapsto h_0$, $h_i\mapsto -h_i$, $k_i\mapsto -k_i$, $u^\pm \to u^\mp$ for $i\in\{1,2\}$, there are a number of involutive automorphisms of $U(\mfa)$ given by
\spl{
& f_i \mapsto e_i , && e_i \mapsto f_i , && h_i \mapsto h_i, && k_i \mapsto k_{j}, && h_0 \mapsto -h_0, && u^\pm \mapsto u^\mp , \\
& f_i \mapsto e_{j} , && e_i \mapsto f_{j} , && h_i \mapsto h_{j}, && k_i \mapsto k_i, && h_0 \mapsto -h_0, && u^\pm \mapsto u^\pm , \\
& f_i \mapsto f_{j} , && e_i \mapsto e_{j} , && h_i \mapsto h_{j}, && k_i \mapsto k_{j}, && h_0 \mapsto h_0, && u^\pm \mapsto u^\mp , \label{aut-Ua}
}
for all $i,j\in\{1,2\}$ such that $i\ne j$. These outer-automorphisms, that form a Klein-four group, are also Hopf algebra outer-automorphisms of $U(\mfa)$. 
\end{rmk}

\begin{rmk} \label{R:2.2}
The relation \eqref{a:Lie} in the algebra $U(\mfa)$ is equivalent to saying that
\eq{
[ e_i , f_j ] = \del_{ij} h_i + \al_i(1-\del_{ij})(u^{+2} - u^{-2}). 
}
We could have started our considerations using this relation as the initial definition, however we chose \eqref{a:Lie} (and \eqref{I0}) to keep our definitions and notation as close as possible to the traditional notation used in \cite{Be3,BL,BSS1,BSSS2,HPT,GH,PTW}). In Section \ref{Sec:5}, where we consider a $q$-deformation of $U(\mfa)$, we use a more natural definition of the central extensions, in such a way slightly deviating from the traditional notation used in \cite{BGM,BK,Ho}. We also find elements $k_i$ to be a good book-keeping notation when considering the highest-weight modules of $U(\mfa)$.

\end{rmk}

\subsection{Typical module} \label{Sec:2.2}

The typical module $K(\la_1,\la_2,\nu)$ is the four-dimensional highest-weight Kac module of $U(\mfa)$ defined as follows: let $v_0\in K(\la_1,\la_2,\nu)$ be the highest-weight vector such that
\eq{
h_i . v_0 = \la_i v_0 , \qu k_i . v_0 = \mu_i v_0 , \qu u^\pm . v_0 = \nu^{\pm1} v_0 , \qu e_i . v_0 = 0 , \qu h_0.v_0 = 0 , 
}
for $i\in\{1,2\}$, where $\la_i,\nu\in\C^\times$ are generic and $\mu_i=\al_i(\nu^2-\nu^{-2})$ due to \eqref{I0}. Set $v_i = f_i.v_0$ and $v_{21} = f_2 f_1 . v_0$. Then $K(\la_1,\la_2,\nu)\cong {\rm span}_\C\{v_0, v_1, v_2, v_{21}\}$ as a vector space. Moreover $h_0.v_i = - v_i$ and $h_0.v_{21} = -2 v_{21}$. This allows us to write the following weight space decomposition:
$$
K(\la_1,\la_2,\nu) = K_0(\la_1,\la_2,\nu) \op K_{-1}(\la_1,\la_2,\nu) \op K_{-2}(\la_1,\la_2,\nu) , 
$$
where $K_0(\la_1,\la_2,\nu)$ and $K_{-2}(\la_1,\la_2,\nu)$ are one-dimensional weight spaces accommodating vectors $v_0$ and $v_{21}$, respectively, and $K_{-1}(\la_1,\la_2,\nu)={\rm span}_\C\{v_1,v_2\}$.

Observe that $f_2.v_1=-f_1.v_2=v_{21}$ and
\spl{
e_1.v_{21} &= \mu_1 v_1 - \la_1 v_2 , \qu e_1.v_1=\la_1 v_0 , \qu e_1.v_2=\mu_1 v_0 , \\
e_2.v_{21} &= \la_2 v_1 - \mu_2 v_2 , \qu e_2.v_1=\mu_2 v_0 , \qu e_2.v_2=\la_2 v_0 . \label{ei.v21}
} 
Introduce linear combinations 
\spl{ \label{v'}
v'_1 &= \mu_1 v_1 - \la_1 v_2	, \qu f'_1 = \mu_1 f_1 - \la_1 f_2, \qu e'_1 = \mu_2 e_1 - \la_1 e_2 , \\
v'_2 &= \la_2 v_1 - \mu_2 v_2 , \qu f'_2 = \la_2 f_1 - \mu_2 f_2, \qu e'_2 = \la_2 e_1 - \mu_1 e_2 .
}
Then 
\spl{
f'_1.v_0 &= v'_1 , \qu f_1.v'_1 = \la_1 v_{21} , \qu f_1.v'_2 = \mu_2 v_{21} , \qu e_1.v'_2 = (\la_1\la_2 - \mu_1\mu_2) v_{0} , \\
f'_2.v_0 &= v'_2,  \qu f_2.v'_1 = \mu_1 v_{21}, \qu f_2.v'_2 = \la_2 v_{21} , \qu e_2.v'_1 = (\mu_1\mu_2-\la_1\la_2) v_{0} .
}
Note that elements $f'_i$, $e'_i$ and $v'_i$ are pairwise linearly independent for generic $\la_i$ and $\nu$. We call the set $\{v_0,v_1,v_2,v_{21}\}$ the {\it up-down} vector space basis and $\{v_0,v'_1,v'_2,v_{21}\}$ the {\it down-up} vector space basis of $K(\la_1,\la_2,\nu)$. The module diagrams for both bases are shown in Figure \ref{Fig:1} (a) and (b).

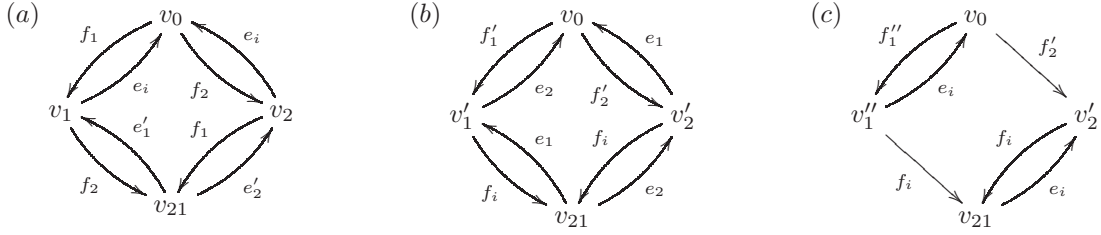
\begin{figure} 
$$(a)
\xymatrix{ 
& v_0 \ar@/_/[ld]_{f_1} \ar@/_/[rd]_{f_2} \\
v_1 \ar@/_/[rd]_{f_2} \ar@/_/[ru]_{e_i}  & & v_2 \ar@/_/[ld]_{f_1} \ar@/_/[lu]_{e_i} \\
& v_{21} \ar@/_/[ru]_{e'_2} \ar@/_/[lu]_{e'_1}
}
\qq\qq (b)
\xymatrix{ 
& v_0 \ar@/_/[ld]_{f'_1} \ar@/_/[rd]_{f'_2} \\
v'_1 \ar@/_/[rd]_{f_i} \ar@/_/[ru]_{e_2}  & & v'_2 \ar@/_/[ld]_{f_i} \ar@/_/[lu]_{e_1} \\
& v_{21} \ar@/_/[ru]_{e_2} \ar@/_/[lu]_{e_1}
}
\qq\qq (c)
\xymatrix{ 
& v_0 \ar@/_/[ld]_{f''_1} \ar[rd]^{f'_2} \\
v''_1 \ar[rd]_{f_i} \ar@/_/[ru]_{e_i}  & & v'_2 \ar@/_/[ld]_{f_i} \\
& v_{21} \ar@/_/[ru]_{e_i} 
}
$$
\caption{Typical module: (a) in up-down basis, (b) in down-up basis, (c) when $\la_1\la_2=\mu_1\mu_2$. } \label{Fig:1}

\end{figure}

\subsection{Atypical module} \label{Sec:2.3}

The atypical module $A(\la_1,\la_2,\nu)$ is the two-dimensional submodule of the typical module $K(\la_1,\la_2,\nu)$ when $\la_1\la_2=\mu_1\mu_2$. Let us show that is indeed true. The aforementioned constrain combined with \eqref{ei.v21} and \eqref{v'} implies that
\eq{ \label{gamma}
v'_1 = \ga^2 v'_2 , \qu f'_1 = \ga^2 f'_2  \qu\text{and}\qu e_1 e_2.v_{21}=0 \qu\text{where}\qu \ga^2 = \frac{\mu_1}{\la_2} = \frac{\la_1}{\mu_2}.
}
Set $v''_1 = \mu_1 v_1 + \la_1 v_2$ and $f''_1 = \mu_1 f_1 + \la_1 f_2$. Then clearly both $v''_1$, $v'_2$ and $f''_1$, $f'_2$ are linearly independent and 
\eq{
f''_1.v_0=v''_1, \qu f'_2.v''_1=-(\la_1\la_2+\mu_1\mu_2)v_{21}, \qu e_1.v''_1=2\la_1\mu_1 v_0, \qu e_2.v''_1=(\la_1\la_2+\mu_1\mu_2)v_0 .
}
The module diagram of $K(\la_1,\la_2,\nu)$ when $\la_1\la_2=\mu_1\mu_2$ is shown in Figure \ref{Fig:1} (c). It is clear from this diagram that $A(\la_1,\la_2,\nu)\cong{\rm span}_\C\{v'_2,v_{21}\}$ as a vector space. Moreover, it follows that 
\eq{
A(\la_1,\la_2,\nu)\cong K(\la_1,\la_2,\nu)/ A(\la_1,\la_2,\nu) .
}

For the purposes of the present paper it will be convenient to choose vector space basis the atypical module to be 
\eq{
A(\la_1,\la_2,\nu) = {\rm span}_\C\{w_0,w_1\}, \label{atypical-mod}
}
where $w_0 = v'_2$ and $w_1 = \ga^{-1} v_{21}$. (This choice of the basis is convenient for obtaining an elegant expression of the $R$-matrix \eqref{R}.) The action of $U(\mfa)$ is given by 
\eq{
h_1.w_j = \ga^2\mu_2\,w_j, \qu h_2.w_j = \ga^{-2}\mu_1\,w_j \qu k_i.w_j = \mu_i\,w_j \qu u^\pm.w_j = \nu^{\pm1} w_j
}
for $i,j\in\{1,2\}$ and
\spl{ \label{mod}
e_i.w_0 &= 0, && f_1.w_0 = \ga \mu_2\, w_1, && f_2.w_0 = \ga^{-1}{\mu_1}\, w_1, \\
f_i.w_1 &= 0, && e_1.w_1 = \ga\, w_0, && e_2.w_1 = \ga^{-1}w_0.
}

A connection with the traditional parametrization of the atypical module in terms of the $x^\pm$-variables used in \cite{BSS1,BSSS2} is given in Appendix~\ref{App:AdS}. 

\begin{rmk}
The module diagrams in Figure \ref{Fig:1} are very similar to those of the $U(\mfsl({1|1}))$ Lie superalgebra. Recall that $\mfsl({1|1})={\rm span}_\C\{f,h,e\}$ as a vector space. The highest-weight module $K(\la)$ of $U(\mfsl({1|1}))$ is the two-dimensional Kac module spanned by vectors $v_0$, $v_1$ such that $e.v_0=0$, $h.v_0=\la v_0$, $f.v_0= v_1$, $e.v_1=\la v_0$. The projective module $P(\la)$ of $U(\mfsl({1|1}))$ is a four-dimensional module spanned by vectors $u_0$, $u_0'$ and $u_{\pm1}$ such that $f.u_0 = u_1$, $e.u_0 = u_{-1}$, $fe.u_0 = u'_{0}$, $h.u_0=\la u_0$, $e.u_{1}=\la u_0-u'_0$, $e.u'_{0}=\la u_{-1}$. Then the diagram in Figure \ref{Fig:1} (c) exactly coincides with the one for $P(\la)$ upon identification $v''_1\to u_0$, $v_0\to u_{1}$, $v'_2\to u'_{0}$, $v_{21}\to u_{-1}$, $e_i\to f$ and $f_i,f'_2,f''_1\to e$. The diagram of $K(\la_1,\la_2,\nu)$ is the completion of the one for $P(\la)$ (and clearly the diagram of $A(\la_1,\la_2,\nu)$ is equivalent to the one of $K(\la)$).

\end{rmk}

\subsection{Tensor product of atypical modules} \label{Sec:2.4}

We show below that the tensor product of two atypical modules is isomorphic to the typical module of $U(\mfa)$. Let $w_i\ot w'{}_{\!\!j}\in A(\la_1,\la_2,\nu)\ot A(\la'_1,\la'_2,\nu')$ with $i,j\in\{0,1\}$. The action of $U(\mfa)$ on vectors $w_i\ot w'{}_{\!\!j}$ is given by
\spl{ \label{f:ww}
\Delta(f_1).(w_0 \ot w'_0) &= \ga\mu_2 \nu' w_1 \ot  w'_0 + (-1)^{p(w_0)} \ga' \mu'_2 \nu^{-1} w_0 \ot  w'_1 , \\
\Delta(f_2).(w_0 \ot  w'_0) &= \ga^{-1}\mu_1 \nu^{\prime-1}\, w_1 \ot  w'_0 + (-1)^{p(w_0)} \ga^{\prime-1} \mu'_1 \nu \, w_0 \ot  w'_1 , \\
\Delta(f_2f_1).(w_0 \ot  w'_0) &= (-1)^{p(w_0)} ( \ga^{-1}\ga' \mu_1 \mu'_2 \nu^{-1} {\nu}^{\prime-1} - \ga \ga^{\prime-1} \mu_2 \mu'_1\nu\nu')\, w_1 \ot  w'_1 
}
and
\spl{ \label{e:ww}
\Delta(e_1).(w_1 \ot  w'_1) &= \ga \nu^{\prime-1}  w_0 \ot  w'_1 - (-1)^{p(w_0)} \ga' \nu \, w_1 \ot  w'_0 , \\
\Delta(e_2).(w_1 \ot  w'_1) &= \ga^{-1} \nu' w_0 \ot  w'_1 - (-1)^{p(w_0)} \ga^{\prime-1}\nu^{-1} w_1 \ot  w'_0 , \\
\Delta(e_1e_2).(w_1 \ot  w'_1) &= (-1)^{p(w_0)}(\ga^{-1}\ga' \nu \nu' - \ga \ga^{\prime-1}\nu^{\prime-1} \nu^{-1} )\, w_0 \ot  w'_0 .
}
Set $\wt v_0=w_0\ot{w}'_0$ and $\wt{v}_i = \Delta(f_i).\wt{v}_0$, $\wt{v}_{21} = \Delta(f_2 f_1).\wt{v}_0$. Then 
\spl{
\Delta(h_1).\wt{v}_0 &= (\ga^{2}\mu_2 + \ga^{\prime2}\mu'_2 )\,\wt{v}_0 = (\la_1 + {\la}'_1 )\,\wt{v}_0 =: \wt{\la}_1\,\wt{v}_0 , \\
\Delta(h_2).\wt{v}_0 &= (\ga^{-2}\mu_1 + \ga^{\prime-2}\mu'_1 )\,\wt{v}_0 = (\la_2 + {\la}'_2 )\,\wt{v}_0 =: \wt{\la}_2\,\wt{v}_0 \\
\Delta(k_i).\wt{v}_0 &= (\mu_i{\nu}^{\prime\dt\mp2} + {\mu}'_i\nu^{\dt\pm2})\,\wt{v}_0 = \al_i(\nu^2\nu^{\prime2}-\nu^{-2}\nu^{\prime-2})\,\wt{v}_0=: \wt{\mu}_i\,\wt{v}_0 ,\label{tilde:la-mu}
}
where the last equality is due to \eqref{I0}. Using the relations above, we find
\spl{
\Delta(e_1).\wt{v}_{21} &= (\mu_1{\nu}^{\prime-2} + \mu'_1 \nu^2)\,\wt{v}_1 - (\ga^2\mu_2+{\ga}^{\prime2}{\mu}'_2)\,\wt{v}_2 = \wt{\mu}_1\,\wt{v}_1 - \wt{\la}_1\,\wt{v}_2 , \\
\Delta(e_2).\wt{v}_{21} &= (\la_2+{\la}'_2)\,\wt{v}_1 - (\mu_2 {\nu}^{\prime2}+\mu'_2 \nu^{-2})\,\wt{v}_2 = \wt{\la}_2\,\wt{v}_1 - \wt{\mu}_2\,\wt{v}_2 ,
}
which compared with \eqref{ei.v21} imply that 
\eq{
K(\wt\la_1,\wt\la_2,\wt\nu) \cong A(\la_1,\la_2,\nu)\ot A(\la'_1,\la'_2,\nu') \label{AA=K}
}
with $\wt\la_i=\la_i+\la'_i$ and $\wt\nu=\nu\nu'$.

\subsection{One-dimensional module} \label{Sec:2.5}

Set 
\eq{
\ms{1} = \ga w_1\ot {w}'_0 + (-1)^{p(w_1)}\, \ga' \nu {\nu}' w_0\ot{w}'_1 . \label{singlet}
}
Let us show that $a.\ms{1}=0$ if $\wt\la_i=\wt\mu_i=0$ for all $a\in U(\mfa)$, $a\ne u^\pm,h_0$. Notice, that it is enough to consider the action of $\Delta(e_i)$ and $\Delta(f_i)$ on $\ms{1}$ only. It follows straightforwardly that $\Delta(e_2).\ms{1}=0$. For $f_2$ we have $\Delta(f_2).\ms{1} = (-1)^{p(w_1)}\ga'\ga\nu( \ga^{\prime-2}\mu'_1 + \ga^{-2}\mu_1)\,w_1 \ot {w}'_1$, which is equal to zero if $\wt\la_2=0$. A simple computation for $f_1$ gives
$$
\Delta(f_1) . \ms{1} = (-1)^{p(w_1)}\ga\ga'\nu( {\mu}'_2 \nu^{-2} + \mu_2 {\nu}^{\prime2})\,w_1 \ot {w}'_1 = (-1)^{p(w_1)}\ga\ga'\nu\wt{\mu}_2\,w_1 \ot {w}'_1 = 0,
$$
if $\wt\mu_2=0$. It remains to consider the action of $e_1$. We have $\Delta(e_1) . \ms{1} = ( \ga^2\nu^{\prime-1} - \ga^{\prime2}\nu^2\nu')\,w_0 \ot w'_0$. Using $\ga^2=\mu_1/\la_2$, $\ga^{\prime2}=\mu'_1/\la'_2$ and $\la'_2=-\la_2$ we find
$$
\Delta(e_1).\ms{1} = \frac{\nu'}{\la_2}\,( \nu^{\prime-2}\mu_1 - \nu^{-2}\mu'_1 )\,w_0 \ot {w}'_0 = \frac{\nu'\wt\mu_1}{\la_2}\,w_0 \ot w'_0 = 0 ,
$$
if $\wt\mu_1=0$. On the other hand, we have $\mu'_2/\mu_2=-\nu^2\nu^{\prime2}$ giving 
$$
\Delta(e_1) . \ms{1} = \frac{1}{\nu' \mu_2}( \ga^2 \mu_2 + \ga^{\prime2} \mu'_2)\,w_0 \ot w'_0 = \frac{\wt\la_1}{\nu' \mu_2}\,w_0 \ot w'_0 = 0,
$$
if $\wt\la_1=0$. Finally, we have that $h_0.\ms{1} \in \C\ms{1}$, and requiring $\Delta(u^\pm).\ms{1}=\ms{1}$ implies $\nu\nu'=1$, which agrees with $\wt\mu_i=0$.

\subsection{R-matrix} \label{Sec:2.6}

Consider the $\Z_2$-graded vector space $\C^{1|1}$. Let $E_{ij}\in\End\,\C^{1|1}$ denote the usual supermatrix units. The algebra $\mf{gl}_{1|1}$ is a $\Z_2$-graded matrix algebra spanned by supermatrices $E_{ij}$ with $i,j\in\{1,2\}$ so that the grading is given by $\deg_2 E_{ij}=p(i)+p(j)$, where $p(1)=0$, $p(2)=1$, and satisfying
$$
[E_{ij}, E_{kl}] = \del_{jk} E_{il} - (-1)^{(i+j)(k+l)}\del_{il} E_{kj} .
$$
Moreover, for any $X,X',Y,Y'\in\mf{gl}_{1|1}$ we have $(X\ot Y)(X'\ot Y') = (-1)^{p(X') p(Y)} XX'\ot YY'$ and $\deg_2(X\ot Y) = \deg_2 X + \deg_2 Y$. In particular, $(E_{ij}\ot E_{kl})(E_{rs}\ot E_{tv}) = (-1)^{(k+l)(r+s)} E_{is} \ot E_{kv}$. 
The graded permutation operator $P\in \End\,(\C^{1|1} \ot \C^{1|1})$ is given by $P = \sum_{1\le i,j \le 2} (-1)^{j-1} E_{ij} \ot E_{ji}$.

Let $\deg_2 v_0 = 0$. Then $\deg_2 w_1=0$ and $\deg_2 w_0=1$. Set $V={\rm span}_\C\{w_1,w_0\}$ and identify $V$ with $\C^{1|1}$ in the natural way. Denote $I=E_{11}+E_{22}$. Then the two dimensional (atypical) representation $\pi : U(\mfa) \to \End(V)$, $a \mapsto \pi(a)$ is given by ({\it c.f.}~\eqref{mod})
\spl{ \label{rep1}
\pi(e_1) &= \ga\, E_{21} , \qu\;\;
\pi(f_1) = \ga\mu_2\, E_{12} , \qu
\pi(f_2) = \ga^{-1}\mu_1\, E_{12} , \qu
\pi(e_2) = \ga^{-1}E_{21} , \\
\pi(h_1) &= \ga^2\mu_2 \, I , \qu
\pi(h_2) = \ga^{-2}\mu_1\, I , \qu
\pi(k_1) = \mu_1 \, I , \qu
\pi(k_2) = \mu_2 \, I , \qu \pi(h_0) = -2E_{11}-E_{22},
}
where $\mu_i = \al_i (\nu^2-\nu^{-2})$. We will refer to the set of parameters $\{\ga,\nu\}$ as the representation {\it labels}. Let $V'={\rm span}_\C\{ w'_1, w'_0\}$ be a copy of $V$. We will denote by $\pi' : U(\mfa) \to \End V'$ the representation with the labels $\{\ga',\nu'\}$.

\begin{prop} \label{P:R}
The $R$-matrix $R(\ga,\nu;\ga',\nu')\in\End(V\ot V')$ intertwining the tensor product of atypical representations $\pi$ and $\pi'$ is given by
\eqa{ \label{R} 
R(\ga,\nu;\ga',\nu') &= \left(\frac{\ga' \nu \nu'}{\ga} - \frac{\ga}{\ga' \nu \nu'}\right) E_{11}\ot E_{11} + \left(\frac{\ga' \nu'}{\ga \nu} - \frac{\ga \nu}{\ga'\nu'}\right) E_{11}\ot E_{22} - \left(\nu^2-\nu^{-2}\right) E_{12}\ot E_{21} \nn\\[.25em]
 & \qu + \left(\nu^{\prime2}-\nu^{\prime-2}\right) E_{21}\ot E_{12} +  \left(\frac{\ga' \nu}{\ga \nu'} - \frac{\ga \nu'}{\ga' \nu}\right) E_{22}\ot E_{11} +  \left(\frac{\ga'}{\ga \nu \nu' } - \frac{\ga \nu \nu'}{\ga'}\right) E_{22}\ot E_{22} .
}
\end{prop}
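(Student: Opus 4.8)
\emph{Strategy.} An $R$-matrix intertwining the two coproducts is by definition an operator $R\in\End(V\ot V')$ satisfying
$$R\,(\pi\ot\pi')\Delta(a) = (\pi\ot\pi')\Delta^{\rm op}(a)\,R \qq\text{for all } a\in U(\mfa),$$
where $\Delta^{\rm op}=\si\circ\Delta$. Since $\Delta$ and $\Delta^{\rm op}$ are algebra homomorphisms, it suffices to impose this on the generators $e_i,f_i,h_0,h_i,k_i,u^\pm$. Existence and uniqueness up to an overall scalar are guaranteed a priori: by \eqref{AA=K} both $V\ot V'$ and $V'\ot V$ are isomorphic to the \emph{same} typical module $K(\wt\la_1,\wt\la_2,\wt\nu)$, the parameters $\wt\la_i=\la_i+\la'_i$ and $\wt\nu=\nu\nu'$ being symmetric in the two factors, and this module is irreducible for generic labels. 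Hence $(V\ot V',\Delta^{\rm op})\cong(V'\ot V,\Delta)\cong(V\ot V',\Delta)$ are all irreducible and isomorphic, and Schur's lemma forces the space of intertwiners to be one-dimensional. The content of the proposition is therefore the explicit matrix entries in the chosen normalization.

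\emph{Block-diagonalization by grading.} First I would exploit $h_0$. It is primitive, hence $\Delta^{\rm op}(h_0)=\Delta(h_0)$, so any intertwiner must commute with $\Delta(h_0)$. The latter acts diagonally on the basis $\{w_i\ot w'_j\}$ with eigenvalues $-4,-3,-3,-2$ (using $\pi(h_0)=-2E_{11}-E_{22}$), so $R$ preserves each $h_0$-weight space. The two extreme spaces $\C\,w_1\ot w'_1$ and $\C\,w_0\ot w'_0$ are one-dimensional, giving the $E_{11}\ot E_{11}$ and $E_{22}\ot E_{22}$ coefficients as scalars; on the degenerate two-dimensional space $\mathrm{span}_\C\{w_1\ot w'_0,\,w_0\ot w'_1\}$, $R$ is a $2\times2$ block supplying the $E_{11}\ot E_{22}$, $E_{22}\ot E_{11}$ (diagonal) and $E_{12}\ot E_{21}$, $E_{21}\ot E_{12}$ (off-diagonal) coefficients. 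Co-commutativity of $h_i,k_i,u^\pm$ (the latter two by \eqref{I0}) together with the fact that they act by scalars shows they impose nothing beyond this weight-preservation.

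\emph{Pinning down the entries.} The crux is the intertwining for $e_i,f_i$. I would read off the action of $\Delta(e_i),\Delta(f_i)$ on the tensor basis directly from \eqref{f:ww}--\eqref{e:ww}, and obtain $\Delta^{\rm op}(e_i),\Delta^{\rm op}(f_i)$ by the same computation with the two factors and their labels interchanged, carrying the graded flip $\si$. Evaluating $R\,\Delta(f_i)=\Delta^{\rm op}(f_i)\,R$ on $w_0\ot w'_0$ and $R\,\Delta(e_i)=\Delta^{\rm op}(e_i)\,R$ on $w_1\ot w'_1$ relates the two boundary scalars to the $2\times2$ block, while the remaining relations fix the ratios of the block entries. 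Solving this consistent linear system, fixing the free scalar by the normalization of \eqref{R}, and substituting $\mu_i=\al_i(\nu^2-\nu^{-2})$ together with $\ga^2=\mu_1/\la_2=\la_1/\mu_2$ and the primed analogues, one simplifies to \eqref{R}.

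\emph{Main obstacle.} The difficulty is bookkeeping rather than conceptual: correctly propagating the Koszul signs --- the $(-1)^{p(w_0)}$ factors already present in \eqref{f:ww}--\eqref{e:ww}, the graded composition rule $(X\ot Y)(X'\ot Y')=(-1)^{p(X')p(Y)}XX'\ot YY'$, and the signs introduced by $\si$ in $\Delta^{\rm op}$ --- and then massaging the resulting rational functions of $\ga,\nu,\ga',\nu'$ into the manifestly trigonometric form with coefficients of the type $\tfrac{\ga'\nu\nu'}{\ga}-\tfrac{\ga}{\ga'\nu\nu'}$. A useful consistency check is that the solution is compatible with the involutive automorphisms \eqref{aut-Ua}, which act on the labels and should permute the six coefficients of \eqref{R} accordingly.
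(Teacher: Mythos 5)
Your proposal is correct and follows essentially the same route as the paper: existence and uniqueness via the irreducibility of $V\ot V'$ (from \eqref{AA=K}) and Schur's lemma, restriction of $R$ to the six-entry block form (which the paper justifies by the isomorphism with the typical module, and you justify equivalently by $\Delta(h_0)$-weight preservation), and then explicit solution of the linear system coming from the intertwining relations for $e_i$ and $f_i$, exactly as carried out in Appendix~\ref{App:R}.
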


\begin{proof}
The existence of the $R$-matrix follows from the fact that $V\ot V'$ is an irreducible $U(\mfa)$ module, which is due to \eqref{AA=K}, and application of the Schur's lemma. Thus it only remain to compute it explicitly. This is shown in \mbox{Appendix \ref{App:R}}.
\end{proof}

It is a lengthy but direct computation to check that $R(\ga,\nu;\ga',\nu')$ satisfies the Yang-Baxter equation, namely, set $R_{12} = R(\ga,\nu;\ga',\nu')\ot I$, $R_{23} = I\ot R(\ga',\nu';\ga'',\nu'')$ and $R_{13} = (I\ot P)(R(\ga,\nu;\ga'',\nu'')\ot I)(I\ot P)$ for some (generic) set of labels $\{\ga,\nu;\ga',\nu';\ga'',\nu''\}$, then 
$R_{12} R_{13} R_{23} = R_{23} R_{13} R_{12}$.  

\begin{rmk} [1] \label{R:R}
Let $\nu = e^{i \theta_1}$, $\nu' = e^{i \theta_2}$ and $\ga = e^{i \varLambda}\ga'$.
Then $R$-matrix \eqref{R} can be written in the trigonometric form
\spl{ 
R(\theta_1,\theta_2,\varLambda) &= \sin(\theta_1 \!+\! \theta_2 \!-\! \varLambda)\, E_{11}\ot E_{11} - \sin(\theta_1\!-\!\theta_2 \!+\! \varLambda)\,E_{11}\ot E_{22} - \sin(2\theta_1)\, E_{12}\ot E_{21} \\[.25em]
& \qu + \sin(2\theta_2)\, E_{21}\ot E_{12} + \sin(\theta_1\!-\!\theta_2 \!-\! \varLambda)\, E_{22}\ot E_{11} - \sin(\theta_1\!+\!\theta_2 \!+\! \varLambda)\, E_{22}\ot E_{22} . \label{R:trig} 
}
The unitarity property is
\eq{
R(\theta_1,\theta_2,\varLambda) P R(\theta_2,\theta_1,-\varLambda) P = \frac{1}{2}\,(\cos(2\varLambda) - \cos(2\theta_1+2\theta_2))\,I .
}
In the limit when $\theta_1$, $\theta_2$ and $\varLambda$ are all small, we obtain the rational $R$-matrix of the $\mf{gl}_{1|1}$ Lie superalgebra
\spl{ 
R(\theta,\theta,\varLambda) \approx -(\varLambda I - 2 \theta P) .
}

\noindent (2){\bf .} Let $\bar V={\rm span}_\C\{\bar w_0,\bar w_1\}$ and $\bar V'={\rm span}_\C\{\bar w'_0,\bar w'_1\}$ be such that $\deg_2 \bar w_0= \deg_2 \bar w'_0=0$ and $\deg_2 \bar w_1=\deg_2 \bar w'_1=1$ (i.e.~the $\Z_2$-grading is opposite to that of $V$ and $V'$). Denote the $R$-matrix \eqref{R} by $R_{V,V'}$. Then the $R$-matrices $R_{V,\bar V'}$, $R_{\bar V, V'}$ and $R_{\bar V,\bar V'}$ can be obtained by conjugation $G\,R_{V,V'}\,G$, with a certain matrix $G$ given by the case-by-case basis as follows: 
\spl{
R_{V,\bar V'} \;:\; G = E_{11}\ot(E_{12}+E_{21}) - E_{22} \ot (E_{12} - E_{21} ), \\ 
R_{\bar V,V'} \;:\; G = (E_{12}+E_{21})\ot E_{11} - (E_{12} - E_{21} )\ot E_{22}, \\ 
R_{\bar V,\bar V'} \;:\; G = E_{12} \ot (E_{12}+E_{21}) + E_{21} \ot (E_{12} + E_{21} ) . \label{R:trig-2}
}
(3){\bf .} In the context of the $AdS_3/CFT_2$ duality, parameters $\theta_1$ and $\theta_2$ have an interpretation of the worldsheet momenta $p_{\rm w.s.}$ of individual magnons. The parameter $\varLambda$ can be interpreted as a difference of the pseudorapidity of individual magnons. The traditional notation used in the literature on $AdS_2/CFT_2$ duality can be recovered using relations given in Appendix \ref{App:AdS}. In this context (and using the traditional terminology), the representations $\pi : U(\mfa) \to V$ and $\pi' : U(\mfa) \to V'$ can be viewed as left-moving representations, while $\bar\pi : U(\mfa) \to \bar V$ and $\bar \pi' : U(\mfa) \to \bar V'$ are right-moving representations. With a suitable choice of $\ga$ or $\bar\ga$, which depends on choice of the $AdS_3/CFT_2$ background and additional fluxes, they correspond massless representations. Accordingly, the $R$-matrix $R_{V,V'}$ is the left-left $R$-matrix. Likewise, the ones in \eqref{R:trig-2} are left-right, right-left and right-right $R$-matrices, respectively.  
\end{rmk}


\section{Yangian}

In this section we construct a $u$-deformed Yangian $\mc{Y}(\mfa)$ of Drinfeld-New type \cite{Dr2,ST} having $U(\mfa)$ as its horizontal subalgebra, and obtain an evaluation homomorphism $ ev_\rho : \mcY(\mfa)\to U(\mfa)$.

\subsection{Yangian $\mc{Y}(\mfa)$}

\begin{defn}
The algebra $\mc{Y}(\mfa)$ is the unital associative superalgebra generated by elements $e_{i,r}$, $f_{i,r}$ and $h_{i,r}$, $k_{i,r}$, $h_{0,r}$ with $i,j\in\{1,2\}$, $r\ge0$, and satisfying 
\spl{ 
& [ e_{i,r} , f_{j,s} ] = \del_{ij} h_{i,r+s} + (1-\del_{ij})\, k_{i,r+s} , \qu
 [ h_{0,r} , f_{i,s} ] =  - f_{i,r+s} , \qu [ h_{0,r} , e_{i,s} ] =  e_{i,r+s} . \label{Y(a)}
}
The remaining relations are trivial.
\end{defn}

The $\Z_2$- and $\Z$-grading on $\mcY(\mfa)$ are induced by the ones on $U(\mfa)$, namely $\deg_2(a_{i,r})=\deg_2(a_{i})$ and $\deg(a_{i,r})=\deg(a_{i})$ for all $a_{i,r}\in\mcY(\mfa)$ and $a_i\in U(\mfa)$. The algebra $\mc{Y}(\mfa)$ admits a unique coproduct respecting the $\Z$-grading.
\begin{prop} \label{P:31}
There is a unique homomorphism of algebras $\Delta : \mc{Y}(\mfa) \to \mc{Y}(\mfa) \ot \mc{Y}(\mfa)$ respecting the $\Z$-grading and given by, for $i,j\in\{1,2\}$, $i\ne j$ and $r\ge0$: 
\spl{
\Delta(e_{i,r}) &= e_{i,r} \ot u^{\dt\mp} + u^{\dt\pm} \ot e_{i,r} + \sum_{l=1}^r \left( u^{\dt\pm} h_{i,r-l}\ot e_{i,l-1} + u^{\dt\mp} k_{i,r-l}\ot u^{\dt\mp2} e_{j,l-1} \right) , \\
\Delta(f_{i,r}) &= f_{i,r} \ot u^{\dt\pm} + u^{\dt\mp} \ot f_{i,r} + \sum_{l=1}^r \left( f_{i,r-l} \ot u^{\dt\pm} h_{i,l-1} + u^{\dt\mp2} f_{j,r-l} \ot u^{\dt\mp} k_{j,l-1} \right) , \\
\Delta(h_{i,r}) &= h_{i,r} \ot 1 + 1 \ot h_{i,r} + \sum_{l=1}^r \left( h_{i,r-l}\ot h_{i,l-1} + u^{\dt\mp2} k_{i,r-l}\ot u^{\dt\mp2} k_{j,l-1} \right) , \\
\Delta(k_{i,r}) &= k_{i,r} \ot u^{\dt\mp2} + u^{\dt\pm2} \ot k_{i,r} + \sum_{l=1}^r \left( k_{i,r-l}\ot u^{\dt\mp2} h_{j,l-1} + u^{\dt\pm2} h_{i,r-l}\ot k_{i,l-1} \right) , \\
\Delta(h_{0,r}) &= h_{0,r} \ot 1 + 1 \ot h_{0,r} - \sum_{l=1}^r \left( u^{+} f_{1,r-l} \ot u^{+} e_{1,l-1} + u^{-} f_{2,r-l} \ot u^{-} e_{2,l-1} \right) .  \label{Y(a):cop}
}
\end{prop}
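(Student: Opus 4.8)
The plan is to separate existence from uniqueness and to put essentially all the effort into existence, i.e.\ well-definedness of $\Delta$. Since $\mcY(\mfa)$ is presented as the free unital associative superalgebra on the generators $e_{i,r},f_{i,r},h_{i,r},k_{i,r},h_{0,r}$ modulo the ideal generated by the relations \eqref{Y(a)} together with the trivial relations, an algebra homomorphism out of $\mcY(\mfa)$ is completely determined by the images of the generators. Hence uniqueness is immediate: the displayed formulas prescribe $\Delta$ on every generator, and two algebra homomorphisms agreeing on generators coincide. The only thing to prove, then, is that this prescription respects the defining relations, so that it descends to the quotient; once this is done, the $\Z$-grading is respected because every summand of each formula carries the degree of the generator it defines, and a short check that the $r=0$ part reproduces \eqref{a:cop} confirms that $\Delta$ extends the horizontal coproduct.

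To handle the infinitely many mode relations at once I would pass to generating functions. Writing $x_i(z)=\sum_{r\ge0}x_{i,r}z^{r}$ for $x\in\{e,f,h,k\}$ and $h_0(z)=\sum_{r\ge0}h_{0,r}z^{r}$, each convolution tail collapses; for instance $\sum_{l=1}^{r}h_{i,r-l}\ot e_{i,l-1}$ is the $z^{r}$-coefficient of $z\,h_i(z)\ot e_i(z)$. In this language the coproduct reads, e.g.,
\[
\Delta(e_i(z)) = e_i(z)\ot u^{\dt\mp} + u^{\dt\pm}\ot e_i(z) + z\bigl(u^{\dt\pm}h_i(z)\ot e_i(z) + u^{\dt\mp}k_i(z)\ot u^{\dt\mp2}e_j(z)\bigr),
\]
and analogously for $f_i,h_i,k_i,h_0$, while the relations \eqref{Y(a)} become the single-current identities $[e_i(z),f_j(w)]=\tfrac{1}{z-w}\bigl(\del_{ij}(zh_i(z)-wh_i(w))+(1-\del_{ij})(zk_i(z)-wk_i(w))\bigr)$ and $[h_0(z),e_i(w)]=\tfrac{z\,e_i(z)-w\,e_i(w)}{z-w}$, $[h_0(z),f_i(w)]=-\tfrac{z\,f_i(z)-w\,f_i(w)}{z-w}$. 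The problem thus reduces to a finite set of two-variable current identities rather than an infinite family of mode relations.

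A first simplification disposes of most relations cheaply. The currents $\Delta(h_i(z))$ and $\Delta(k_i(z))$ are built entirely from the central elements $h_i,k_i,u^{\pm},1$ in each tensor leg, so they are manifestly central in $\mcY(\mfa)\ot\mcY(\mfa)$; this makes every Cartan relation automatic. Likewise $[\Delta(e_i(z)),\Delta(e_j(w))]=0$ and $[\Delta(f_i(z)),\Delta(f_j(w))]=0$ follow at once, since within a single leg the families $\{e,h,k\}$ and $\{f,h,k\}$ (super)commute, so after applying the graded multiplication rule $(X\ot Y)(X'\ot Y')=(-1)^{p(X')p(Y)}XX'\ot YY'$ the expanded cross terms cancel in pairs. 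This leaves the genuinely interacting relations $[e,f]$, $[h_0,e]$, $[h_0,f]$ and $[h_0,h_0]=0$.

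I expect the main obstacle to be the $[e,f]$ relation, where both coproducts contribute convolution tails, so that $[\Delta(e_i(z)),\Delta(f_j(w))]$ generates the largest number of terms: primitive pieces reproducing $\Delta$ of the right-hand side at lowest order, plus tail--primitive and tail--tail cross terms. These must be collapsed using the single-leg relations (each internal $[e,f]$ producing an $h$- or $k$-current with its $\tfrac{1}{z-w}$ pole) and then reassembled, via the telescoping identity $\tfrac{1}{z-w}(zX(z)-wX(w))=\sum_{r,s}X_{r+s}z^{r}w^{s}$, into exactly the tail structure of $\Delta(h_i)$ and $\Delta(k_i)$, so that the target is
\[
[\Delta(e_i(z)),\Delta(f_j(w))] = \del_{ij}\,\frac{z\Delta(h_i(z))-w\Delta(h_i(w))}{z-w} + (1-\del_{ij})\,\frac{z\Delta(k_i(z))-w\Delta(k_i(w))}{z-w}.
\]
The cancellations are controlled simultaneously by the graded signs (from transposing odd elements across the tensor product), the $u^{\pm}$-weights carried on each leg, and the rational-function bookkeeping in $z,w$; the subtlest point is the off-diagonal case $i\ne j$, where one must verify that the survivors assemble into $\Delta(k_i)$ rather than a spurious mixture of $h$- and $k$-currents, which forces the $u$-weights on the different legs to match precisely. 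The relations $[h_0,e]$, $[h_0,f]$ and $[h_0,h_0]=0$ are then verified by the same mechanism with one fewer tail and are correspondingly easier. Assembling these checks shows the prescription descends to $\mcY(\mfa)$, which completes the proof.
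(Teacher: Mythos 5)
Your existence argument is structured correctly and, modulo execution, is the same verification the paper performs: the Cartan currents $\Delta(h_i(z))$, $\Delta(k_i(z))$ are central in $\mcY(\mfa)\ot\mcY(\mfa)$, the $[e,e]$ and $[f,f]$ relations vanish by the Koszul-sign computation you indicate, and the whole burden falls on $[\Delta(e_i),\Delta(f_j)]$ and the $h_0$ relations. The paper does exactly this in mode form rather than via currents: it displays $[\Delta_\eps(e_{i,r}),\Delta_\eps(f_{i,s})]$ and $[\Delta_\eps(e_{i,r}),\Delta_\eps(f_{j,s})]$ and matches them against $\Delta_\eps(h_{i,r+s})$ and $\Delta_\eps(k_{i,r+s})$. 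Your generating-function packaging is a legitimate alternative, but you stop at ``these must be collapsed and reassembled'' precisely at the one computation that carries all the content — in particular the off-diagonal case, where the internal brackets produce $\al_j$ and $\al_i$ factors through \eqref{I0} and one must see the two convolution tails of $\Delta(k_{i,r+s})$ emerge with the correct $u$-weights. As written, the decisive step is asserted, not proved.

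The more serious divergence is uniqueness. You dispose of it by noting that an algebra homomorphism is determined by its values on generators; that is true but vacuous here, since the formulas \eqref{Y(a):cop} already prescribe those values — on that reading the word ``unique'' in the proposition carries no information. What the paper actually proves (cf.\ the sentence immediately preceding the proposition: $\mcY(\mfa)$ ``admits a unique coproduct respecting the $\Z$-grading'') is that the $\Z$-grading constraint \emph{forces} the shape of the coproduct: one writes the most general grading-respecting ansatz $\Delta_\eps$ with free coefficients $\eps_{i,1},\dots,\eps_{i,8},\eps_9,\eps_{10}$ on the tail terms, shows that the homomorphism property collapses these to two constants $\eps_1,\eps_2$, and then shows that the residual two-parameter family is absorbed by the rescaling automorphism \eqref{omega}, $f_{i,r}\mapsto\eps_i f_{i,r}$, $h_{i,r}\mapsto\eps_i h_{i,r}$, $k_{i,r}\mapsto\eps_j k_{i,r}$, so that all $\Delta_\eps$ are equivalent and one may normalise $\eps_i=1$. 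Note that this also means uniqueness is genuinely only up to automorphism — each $\Delta_\eps$ with $\eps_i\ne1$ is another grading-respecting algebra homomorphism extending \eqref{a:cop}-type leading terms — which your trivial argument cannot see. A secondary benefit of the paper's route is economy: the same two commutator computations that pin down the $\eps$'s simultaneously establish that the relations \eqref{Y(a)} are preserved, so existence and uniqueness are obtained in one pass. To repair your proposal you would need to (i) actually carry out the $[e,f]$ and $[h_0,\cdot]$ current identities, and (ii) either reinterpret the uniqueness claim as the paper does, via the graded ansatz and the automorphism $\omega$, or explicitly restrict the proposition to an existence statement.
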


\begin{proof}
Let $\Delta_\eps : \mc{Y}(\mfa) \to \mc{Y}(\mfa) \ot \mc{Y}(\mfa)$ be homomorphism of algebras. By requiring $\Delta_\eps$ to respect the $\Z$-grading we obtain the following ansatz
\spl{ 
\Delta_\eps(e_{i,r}) &= e_{i,r} \ot u^{\dt\mp} + u^{\dt\pm} \ot e_{i,r} + \sum_{l=1}^r \left( \eps_{i,1} u^{\dt\pm} h_{i,r-l}\ot e_{i,l-1} + \eps_{i,2} u^\mp k_{i,r-l}\ot u^{\dt\mp2} e_{j,l-1} \right) , \\
\Delta_\eps(f_{i,r}) &= f_{i,r} \ot u^{\dt\pm} + u^{\dt\mp} \ot f_{i,r} + \sum_{l=1}^r \left( \eps_{i,3} f_{i,r-l} \ot u^{\dt\pm} h_{i,l-1} + \eps_{i,4} u^{\dt\mp2} f_{j r-l} \ot u^{\dt\mp} k_{j,l-1} \right) , \\
\Delta_\eps(h_{i,r}) &= h_{i,r} \ot 1 + 1 \ot h_{i,r} + \sum_{l=1}^r \left( \eps_{i,5} h_{i,r-l}\ot h_{i,l-1} + \eps_{i,6}  u^{\dt\mp2} k_{i,r-l}\ot u^{\dt\mp2} k_{j,l-1} \right) , \\
\Delta_\eps(k_{i,r}) &= k_{i,r} \ot u^{\dt\mp2} + u^{\dt\pm2} \ot k_{i,r} + \sum_{l=1}^r \left( \eps_{i,7} k_{i,r-l}\ot u^{\dt\mp2} h_{j,l-1} + \eps_{i,8} u^{\dt\pm2} h_{i,r-l}\ot k_{i,l-1} \right) , \\
\Delta_\eps(h_{0,r}) &= h_{0,r} \ot 1 + 1 \ot h_{i,r} + \sum_{l=1}^r \left( \eps_{9} u^{+} f_{1,r-l}\ot u^{+} e_{1,l-1} + \eps_{10} u^{-} f_{2,r-l} \ot u^{-} e_{2,l-1} \right) , \\
}
with certain $\eps_{i,k}\in\C$ and $\eps_{9},\eps_{10}\in\C$. Let us focus on generators $e_{i,r}$, $f_{i,r}$, $h_{i,r}$, $k_{i,r}$ first. To find constants $\eps_{i,k}$ we compute the graded commutators
\spl{
{}[\Delta_\eps(e_{i,r}),\Delta_\eps(f_{i,s})] & = h_{i,r+s} \ot 1 + 1 \ot h_{i,r+s} \\
& + \sum_{l=1}^r \left( \eps_{i,1} h_{i,r-l}\ot h_{i,s+l-1} + \eps_{i,2} u^{\dt\mp2} k_{i,r-l}\ot u^{\dt\mp2} k_{j,s+l-1} \right) \\
& + \sum_{l=1}^s \left( \eps_{i,3} h_{i,r+s-l} \ot h_{i,l-1} + \eps_{i,4} \al_{j}\,u^{\dt\mp2} k_{i,r+s-l} \ot u^{\dt\mp2} k_{j,l-1} \right) 
}
and
\spl{
{}[\Delta_\eps(e_{i,r}),\Delta_\eps(f_{j,s})] & = k_{i,r+s} \ot u^{\dt\mp2} + u^{\dt\pm2} \ot k_{i,r+s} \\
& + \sum_{l=1}^r \left( \eps_{i,1} u^{\dt\pm2} h_{i,r-l}\ot k_{i,s+l-1} + \eps_{i,2}\al_i\,  k_{i,r-l}\ot u^{\dt\mp2} h_{j,s+l-1} \right) \\
& + \sum_{l=1}^s \left( \eps_{j,3} k_{i,r+s-l} \ot u^{\dt\mp2} h_{j,l-1} + \eps_{j,4} \al_{i}\,u^{\dt\pm2} h_{i,r+s-l} \ot k_{i,l-1} \right).
}
By requiring $\Delta_\eps$ to be algebra homomorphism and using the expressions above we find $\eps_{i,1} = \eps_{i,3} = \eps_{i,5} = \eps_{j,4} = \eps_{i,8}$ and $\eps_{i,2} = \eps_{i,4} = \eps_{j,3} = \eps_{i,6} = \eps_{i,7}$, where as before, $i\ne j$. Set $\eps_i=\eps_{i,1}$. Then, comparing both sides of $[\Delta(h_{0,r}),\Delta(e_{i,s})]=\Delta(e_{i,s})$ and $[\Delta(h_{0,r}),\Delta(f_{i,s})]=-\Delta(f_{i,s})$, we find that $\eps_9=-\eps_1$ and $\eps_{10}=-\eps_2$. (For completeness, the resulting coproduct is listed Appendix \ref{App:Cop}.)

It remains to show that for any choice of constants $\eps_i$ the coproduct is equivalent up to an automorphism of the Yangian $\mcY(\mfa)$. Consider the map given by
\eq{ \label{omega}
\omega \qu:\qu 
e_{i,r} \mapsto e_{i,r} , \qu f_{i,r} \mapsto \eps_i f_{i,r} , \qu h_{i,r} \mapsto \eps_i h_{i,r} , \qu k_{i,r} \mapsto \eps_{j} k_{i,r} , \qu h_{0,r} \mapsto h_{0,r} .
}
It is easy to see that \eqref{Y(a)} is invariant under the map $\omega$, thus it is an automorphism of $\mcY(\mfa)$. Then $\Delta_{\eps}$ is equivalent to $\Delta$ if $\Delta(a)=(\omega^{-1}\ot\omega^{-1})(\Delta_\eps\circ\omega)(a)$ holds for all $a\in\mcY(\mfa)$. Let us demonstrate that this is true for $a=f_{i,r}$. Using \eqref{Delta:e} we have:
\spl{ \label{omega-f}
& (\omega^{-1}\otimes\omega^{-1})(\Delta_\eps\circ\omega)(f_{i,r}) \\
& \qu = \eps_i\, (\omega^{-1}\otimes\omega^{-1}) \Big( f_{i,r} \ot u^{\dt\pm} + u^{\dt\mp} \ot f_{i,r} + \sum_{l=1}^r \left( \eps_i f_{i,r-l} \ot u^{\dt\pm} h_{i,l-1} + \eps_{j}\, u^{\dt\mp2} f_{j,r-l} \ot u^{\dt\mp} k_{j,l-1} \right) \! \Big) \! \\
& \qu = f_{i,r} \ot u^{\dt\pm} + u^{\dt\mp} \ot f_{i,r} + \sum_{l=1}^r \left( f_{i,r-l} \ot u^{\dt\pm} h_{i,l-1} + u^{\dt\mp2} f_{j,r-l} \ot u^{\dt\mp} k_{j,l-1} \right)  = \Delta(f_{i,r}) .
}
In a similar way one can show this equality is also true for generators $e_{i,r}$, $k_{i,r}$, $h_{i,r}$ and $h_{0,r}$. Thus, without loss of generality, we can set $\eps_i=1$ and obtain \eqref{Y(a):cop} as required.
\end{proof}

Next, we want to write the defining relation \eqref{Y(a)} and coproduct \eqref{Y(a):cop} in terms of generating series (Drinfeld currents). We introduce the following series in $\mc{Y}(\mfa)[[z^{-1}]]$
\eq{ \label{a(z)}
e_i(z)=\sum_{r\ge0} e_{i,r} z^{-r-1} , \qu\! f_i(z)=\sum_{r\ge0} f_{i,r} z^{-r-1} , \qu\! h_i(z)=1+\sum_{r\ge0} h_{i,r} z^{-r-1} , \qu\! k_i(z)=\sum_{r\ge0} k_{i,r} z^{-r-1} , \! 
}
for $i\in\{1,2\}$ and $i=0$ should also be included for $h_i(z)$.
\begin{prop}
The defining relations \eqref{Y(a)} are equivalent to the following identities in $\mcY(\mfa)[[z^{-1},w^{-1}]]$ ($i,j\in\{1,2\}$):
\spl{
& \qq (w-z)\, [e_i(z),f_j(w)] = \del_{ij} (h_i(z)-h_i(w)) + (1-\del_{ij})(k_i(z)-k_i(w)) ,\\
& (w-z)\, [h_0(z),f_i(w)] = f_i(w)-f_i(z) , \qu (w-z)\, [h_0(z),e_i(w)] = e_i(z)-e_i(w) ,  \label{fields}
}
The coproduct \eqref{Y(a):cop} is equivalent to the following identities in $(\mcY(\mfa)\ot \mcY(\mfa))[[z^{-1}]]$   ($i,j\in\{1,2\}$, $i\ne j$): 
\spl{ \label{Y(a):dcop}
\Delta(e_i(z)) &= e_i(z) \ot u^{\dt\mp} + u^{\dt\pm} h_i(z) \ot e_i(z) + u^{\dt\mp} k_i(z) \ot u^{\dt\mp2} e_{j}(z) , \\
\Delta(f_i(z)) &= u^{\dt\mp} \ot f_i(z) + f_i(z) \ot u^{\dt\pm} h_i(z) + u^{\dt\mp2} f_{j}(z) \ot u^{\dt\mp}k_{j}(z) , \\
\Delta(h_i(z)) &= h_i(z) \ot h_i(z) + u^{\dt\mp2} k_i(z) \ot u^{\dt\mp2} k_{j}(z) , \\
\Delta(k_i(z)) &= k_i(z) \ot u^{\dt\mp2} h_{j}(z) + u^{\dt\pm2} h_i(z)\ot k_{i}(z) , \\
\Delta(h_0(z)) &= h_0(z) \ot 1 + 1 \ot h_0(z) - u^{+} f_{1}(z) \ot u^+ e_1(z) - u^{-} f_2(z)\ot u^- e_{2}(z) .
}
\end{prop}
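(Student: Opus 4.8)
The statement bundles two independent assertions, one about the defining relations and one about the coproduct, and in both cases the plan is to pass between the modes and the currents \eqref{a(z)} by an elementary formal-series manipulation and then compare coefficients, the comparison being reversible so that ``equivalent'' is established in both directions.

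For the defining relations I would start from \eqref{Y(a)}, form $[e_i(z),f_j(w)]=\sum_{r,s\ge0}[e_{i,r},f_{j,s}]\,z^{-r-1}w^{-s-1}$, and multiply by $(w-z)$. The entire computation rests on the telescoping identity
\[
\sum_{r+s=n}\left(z^{-r-1}w^{-s}-z^{-r}w^{-s-1}\right)=z^{-n-1}-w^{-n-1},\qquad n\ge0.
\]
Since the mode relation $[e_{i,r},f_{j,s}]=\del_{ij}h_{i,r+s}+(1-\del_{ij})k_{i,r+s}$ depends only on $r+s$, summing over the diagonals collapses the left-hand side to $\del_{ij}(h_i(z)-h_i(w))+(1-\del_{ij})(k_i(z)-k_i(w))$, which is the first line of \eqref{fields}; the two $h_0$--relations come out identically, with the signs $-f_i(z)$ resp. $+e_i(z)$ inherited from $[h_{0,r},f_{i,s}]=-f_{i,r+s}$ and $[h_{0,r},e_{i,s}]=e_{i,r+s}$. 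For the converse I would extract the coefficient of $z^{-a}w^{-b}$: the mixed terms $a,b\ge1$ force the bracket to depend only on $r+s$, while the pure terms fix its value, so \eqref{fields} returns \eqref{Y(a)}.

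For the coproduct I would multiply each line of \eqref{Y(a):cop} by $z^{-r-1}$ and sum over $r\ge0$, extending $\Delta$ coefficientwise to $\mcY(\mfa)[[z^{-1}]]$ so that $\sum_r\Delta(a_{i,r})z^{-r-1}=\Delta(a_i(z))$. Each convolution $\sum_{l=1}^r(\,\cdot\,)_{r-l}\ot(\,\cdot\,)_{l-1}$ is precisely the Cauchy product of two currents under the reindexing $l=q+1$, $r-l=p$. The one point that needs care is the normalization $h_i(z)=1+\sum_r h_{i,r}z^{-r-1}$ (and likewise for $h_0(z)$): the constant term $1$ is exactly what produces the un-summed contributions, e.g.\ $u^{\dt\pm}\ot e_i(z)$ in $\Delta(e_i(z))$, $f_i(z)\ot u^{\dt\pm}$ in $\Delta(f_i(z))$, and $h_{i,r}\ot1+1\ot h_{i,r}$ in $\Delta(h_i(z))$, whereas the non-constant part reproduces the $l$-sums. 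Matching coefficients line by line then yields \eqref{Y(a):dcop}, and reading the coefficients back off gives the converse.

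The argument presents no conceptual obstacle; the only genuine work is bookkeeping. The main thing to get right is the interplay between the index shifts in the convolution (the $l-1$ versus $r-l$ splitting) and the constant term of $h_i(z)$ and $h_0(z)$, together with tracking the dotted signs $\dt\pm,\dt\mp$ and the shift factors $u^{\dt\pm2}$ on each tensor leg, so that the five lines of \eqref{Y(a):dcop} emerge with exactly the factors displayed.
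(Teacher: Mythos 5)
Your proposal is correct and takes essentially the same route as the paper: both directions rest on the telescoping identity $(w-z)\sum_{r+s=n}z^{-r-1}w^{-s-1}=z^{-n-1}-w^{-n-1}$ applied after multiplying the mode relations by $z^{-r-1}w^{-s-1}$ and summing, and the coproduct part is the same Cauchy-product/coefficient-matching argument, with the constant term of $h_i(z)$ accounting for the un-summed terms exactly as you describe.
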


\begin{proof}
To see that \eqref{Y(a)} imply \eqref{fields} we need to multiply \eqref{Y(a)} by $z^{-r-1} w^{-s-1}$. Then sum over $r,s\ge 0$ and use 
$$
(z-w) \sum_{r,s\ge 0,\; r+s=t} z^{-r-1} w^{-s-1}= w^{-t-1} - z^{-t-1} .
$$
This gives \eqref{fields}. Conversely, using the identity above we deduce that 
$$
h_i(z) - h_i(w) = (w-z) \sum_{r,s\ge0} z^{-r-1} w^{-r-1} h_{i,r+s}.
$$
Equating this with $(w-z)[e_i(z),f_i(w)]$ yields $[e_{i,r},f_{i,s}]=h_{i,r+s}$. The remaining relations are obtained similarly. 

To see that \eqref{Y(a):cop} imply \eqref{Y(a):dcop} we only need to multiply \eqref{Y(a):cop} by $u^{-r-1}$ and sum over $r\ge 0$. Conversely, take coefficients at $z^{-r-1}$ with $r\ge 0$ of \eqref{Y(a):dcop}. This reproduces \eqref{Y(a):cop}. 
\end{proof}

Set $H(z)=h_1(z)h_2(z) - k_1(z) k_2(z)$. The current $H(z)$ has a well-defined inverse. Indeed,
\spl{ \label{H:inv}
H(z)^{-1} = \frac{h_1(z)^{-1} h_2(z)^{-1}}{1- k_1(z) k_2(z)h_1(z)^{-1} h_2(z)^{-1}} =  \sum_{l\ge0} (k_1(z) k_2(z))^l(h_1(z)^{-1} h_2(z)^{-1})^{l+1} .
}
Since $h_i(z)$ are invertible, the series on the right hand side are elements in $\mc{Y}(\mfa)[[z^{-1}]]$, and the coefficients of $z^{-r}$ are finite sums of monomials $(k_{1,r_1})^{l_1}(k_{2,r_2})^{l_2}(h_{1,r_3})^{l_3}(h_{2,r_4})^{l_4}$ such that $\sum_{i=1}^4 l_i r_i < r$. We are now ready to define the Hopf algebra structure on $\mc{Y}(\mfa)$.

\begin{thrm} \label{Y:Hopf}
The algebra $\mc{Y}(\mfa)$ has a hopf algebra structure, which is given by the coproduct \eqref{Y(a):dcop}, counit ($i,j\in\{1,2\}$)
\eq{
\veps(e_i(z))=\veps(f_i(z))=\veps(k_i(z))=0, \qq \veps(h_0(z))=\veps(h_i(z))=1,
}
and the antipode defined by $S(u^\pm)=u^{\mp}$ and ($i,j\in\{1,2\}$, $i\ne j$)
\spl{
S(e_i(z)) &= - \left(e_i(z) h_{j}(z) - e_{j}(z) k_i(z)\right)H(z)^{-1}, \qq S(h_i(z)) = h_{j}(z)\, H(z)^{-1}, \\
S(f_i(z)) &= - \left(f_i(z) h_{j}(z) - f_{j}(z) k_{j}(z)\right)H(z)^{-1} , \qq S(k_i(z)) = - k_i(z)\, H(z)^{-1} ,\\
S(h_0(z)) &= 1 - h_0(z) + S(f_1(z)) e_1(z) + S(f_2(z)) e_2(z).
}
\end{thrm}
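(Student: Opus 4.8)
The plan is to verify the Hopf algebra axioms for the proposed structure, using the currents \eqref{Y(a):dcop} as the coproduct. Since Proposition \ref{P:31} already establishes that $\Delta$ is a well-defined algebra homomorphism respecting the $\Z$-grading, and the counit axioms are immediate to check from \eqref{Y(a):dcop} (setting one tensor factor to its counit value recovers the identity map), the substantive content of the theorem is the verification of the antipode axiom
\eqn{
m \circ (S \ot \id) \circ \Delta = \eta \circ \veps = m \circ (\id \ot S) \circ \Delta ,
}
where $m$ denotes multiplication and $\eta$ the unit. First I would confirm that $S$ is an anti-automorphism of $\mcY(\mfa)$; because the defining relations \eqref{Y(a)} are of the form $[e_{i,r},f_{j,s}] = (\text{central})$ together with the $h_0$-relations, checking that the proposed images satisfy the opposite-multiplied relations reduces to a finite set of current identities. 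The key technical ingredient here is that $H(z)^{-1}$ exists as a genuine element of $\mcY(\mfa)[[z^{-1}]]$, which is exactly what the discussion around \eqref{H:inv} guarantees, so all the expressions $S(e_i(z))$, $S(f_i(z))$, $S(h_i(z))$, $S(k_i(z))$ are legitimate.

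The main computation is the antipode axiom applied to each generating current. For the Cartan-type currents this amounts to showing that $H(z)$ and its cofactor structure behave like an inverse: writing the grouplike/Cartan part of $\Delta$ as the $2\times 2$ "matrix coproduct"
\eqn{
\Delta \begin{pmatrix} h_1(z) & k_1(z) \\ k_2(z) & h_2(z) \end{pmatrix} = \begin{pmatrix} h_1(z) & k_1(z) \\ k_2(z) & h_2(z) \end{pmatrix} \dot\ot \begin{pmatrix} h_2(z) & k_2(z) \\ k_1(z) & h_1(z) \end{pmatrix}^{?}
}
suggests that the antipode formulas for $h_i$, $k_i$ are precisely the entries of the inverse of this $2\times2$ matrix with determinant $H(z)$. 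I would make this observation precise by noting that the quantities $h_i(z)$, $k_i(z)$ generate a commutative subalgebra (their cross-relations are trivial), so verifying $m\circ(S\ot\id)\circ\Delta$ on $h_i(z)$ and $k_i(z)$ becomes a purely $2\times2$ matrix-inverse identity over the commutative ring $\C[[z^{-1}]][h_i,k_i]$, which is routine once $S(h_i(z))=h_j(z)H(z)^{-1}$ and $S(k_i(z))=-k_i(z)H(z)^{-1}$ are substituted. The $h_0(z)$ current is then handled by its recursive formula $S(h_0(z)) = 1 - h_0(z) + S(f_1(z))e_1(z) + S(f_2(z))e_2(z)$, which is engineered so that the antipode axiom for $h_0$ follows from those already established for $e_i$, $f_i$.

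The hard part, and where I expect the real obstacle to lie, is the antipode axiom on the fermionic currents $e_i(z)$ and $f_i(z)$, where the coproduct \eqref{Y(a):dcop} mixes $e_i$ with $e_j$ (via $k_i(z)$) and likewise for $f_i$. Explicitly, $m\circ(S\ot\id)\circ\Delta(e_i(z))$ expands into a sum of four terms involving $S(e_i(z))$, $S(u^{\dt\pm}h_i(z))$, $S(u^{\dt\mp}k_i(z))$ multiplied against $u^{\dt\mp}$, $e_i(z)$, $u^{\dt\mp2}e_j(z)$ respectively; one must show the total collapses to $\veps(e_i(z))=0$. This requires substituting the closed forms for $S(e_i(z))$ and $S(e_j(z))$ and repeatedly using $H(z)=h_1(z)h_2(z)-k_1(z)k_2(z)$ together with the commutativity of the Cartan currents to cancel the $H(z)^{-1}$ factors against the cofactor combinations $e_i(z)h_j(z)-e_j(z)k_i(z)$. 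The bookkeeping of the $u^\pm$ powers and the $\Z_2$-grading signs is the delicate point; I would organize it by checking the identity at the level of currents (rather than mode-by-mode), since \eqref{Y(a):dcop} is a single current equation and the cancellations are visibly multiplicative in $H(z)$. Once the fermionic currents are verified the remaining axioms, including coassociativity which already follows from $\Delta$ being the grading-respecting homomorphism of Proposition \ref{P:31}, complete the proof.
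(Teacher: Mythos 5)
Your proposal is correct and follows essentially the same route as the paper: reduce everything to the antipode axiom $M\circ(S\ot\id)\circ\Delta=\iota\circ\veps=M\circ(\id\ot S)\circ\Delta$, verify it current-by-current using the invertibility of $H(z)$ from \eqref{H:inv} and the commutativity of the Cartan currents, and dispatch $h_0(z)$ via its recursive formula once the $e_i$, $f_i$ cases are done. Your $2\times2$ matrix-inverse framing of the Cartan block is just a repackaging of the paper's identity $(h_j(z)h_i(z)-k_i(z)k_j(z))H(z)^{-1}=1$, and the fermionic-current cancellations you describe are exactly the ones carried out explicitly in the paper's proof.
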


\begin{proof}
The coproduct is a homomorphism by Proposition \ref{P:31}. The counit follows from \eqref{Y(a)} and \eqref{a(z)}. Thus we only need to check that
\eq{ \label{antipode}
M\circ(S\ot id)\circ\Delta(a_i(z))=\iota\circ\veps(a_i(z)), \qq M\circ(id\ot S)\circ\Delta(a_i(z))=\iota\circ\veps(a_i(z)),
}
for all $a(z)\in \mc{Y}(\mfa)$; here $id$ is the identity map, $M : \mc{Y}(\mfa) \ot \mc{Y}(\mfa) \to \mc{Y}(\mfa)$ denotes the associative multiplication and $\iota : \C \to \mc{Y}(\mfa)$ is the unit map. Let us compute \eqref{antipode} explicitly for all $a(z)$: \medskip

\noindent$\circ$ $h_i(z)$: $\left( h_{j}(z) h_i(z) - k_{i}(z) k_{j}(z)\right) H(z)^{-1} = 1$ for both equalities, \medskip

\noindent$\circ$ $k_i(z)$: $\left(-k_i(z) h_{j}(z) + h_{j}(z) k_i(z) \right) u^{\dt\mp2} H(z)^{-1} = 0$ and $\left(k_i(z) h_{i}(z) - h_{i}(z) k_i(z) \right) u^{\dt\pm2} H(z)^{-1} = 0$, \medskip

\noindent$\circ$ $e_i(z)$: $\left( - \left(e_i(z) h_{j}(z) - e_{j}(z) k_i(z)\right)  + h_{j}(z) e_i(z) - k_i(z) e_{j}(z)\right) u^{\dt\mp} H(z)^{-1} = 0$ and 
\eqn{
\;\;\left(e_i(z) H(z) - h_i(z)\left(e_i(z) h_{j}(z) - e_{j}(z) k_i(z)\right)  - k_i(z) \left(e_{j}(z) h_i(z) - e_i(z) k_{j}(z)\right) \right)u^{\dt\pm} H(z)^{-1}&  \\
= \left(e_i(z) H(z) - e_i(z) h_i(z) h_{j}(z) + e_i(z) k_i(z) k_{j}(z) \right) u^{\dt\pm} H(z)^{-1} &= 0 , 
}

\noindent$\circ$ $f_i(z)$: $\left( f_i(z) H(z) - \left(f_i(z) h_{j}(z) - f_{j}(z) k_{j}(z)\right) h_i(z) - \left(f_{j}(z) h_i(z) - f_i(z) k_i(z)\right) k_{j}(z) \right) u^{\dt\pm} H(z)^{-1} $ \medskip

$ \hspace{5.42cm} = \left( f_i(z) H(z) - f_i(z) \left( h_i(z)  h_{j}(z) - k_i(z) k_{j}(z)\right) \right) u^{\dt\pm} H(z)^{-1} = 0$ \medskip
 
$\qq\;$ and $\left(- \left(f_i(z) h_{j}(z) - f_{j}(z) k_{j}(z)\right) + f_i(z) h_{j}(z) - f_{j}(z) k_{j}(z)\right) u^{\dt\mp} H(z)^{-1} = 0$. 

\medskip

\noindent$\circ$ To compute \eqref{antipode} for $h_0(z)$ we will use the identity
\eqn{
& S(f_1(z)) e_1(z) + S(f_2(z)) e_2(z) \\
& \qq = -(f_1(z) h_2(z) - f_2(z) k_2(z)) H(z)^{-1} e_1(z) - (f_2(z) h_1(z) - f_1(z) k_1(z)) H(z)^{-1} e_2(z) \\
& \qq = -f_1(z) (e_1(z) h_2(z) - e_2(z) k_1(z) ) H(z)^{-1} - f_2(z) ( e_2(z) h_1(z) -  e_1(z) k_2(z)) H(z)^{-1} \\
& \qq = f_1(z) S(e_1(z)) + f_1(z) S(e_1(z)).
}
Then $h_0(z)$: $S(h_0(z)) + h_0(z) - S(f_1(z)) e_1(z) - S(f_2(z)) e_2(z)  = 1$ for both equalities.
\end{proof}

In the previous section we noted that $U(\mfa)$ is a two-parameter family of Hopf algebras. The same is true for $\mc{Y}(\mfa)$. Indeed, by requiring $\Delta(k_i(z))=\Delta^{\rm op}(k_i(z))$ we find ($i,j\in\{1,2\}$, $i\ne j$)
\eq{
k_i(z) \ot ( u^{\dt\mp2} h_{j}(z) - u^{\dt\pm2} h_i(z) ) = (u^{\dt\mp2} h_{j}(z) - u^{\dt\pm2} h_i(z)) \ot k_{i}(z) ,
}
which, together with \eqref{a(z)}, implies that
\eq{ \label{ki(z)}
k_i (z) = \al_i \left(u^2 h_1(z) - u^{-2} h_2(z) \right) z^{-1}.
}
Moreover, it follows that $\Delta(h_i(z))=\Delta^{\rm op}(h_i(z))$ is also true. Finally, the coefficients of $z^{-1}$ in \eqref{ki(z)} reproduce \eqref{I0} as required.

\subsection{Evaluation homomorphism}

Recall that the algebra $U(\mfa)$ is isomorphic to the subalgebra of $\mc{Y}(\mfa)$ generated by the elements $e_{i,0}$, $f_{i,0}$, $h_{0,0}$, $h_{i,0}$, $k_{i,0}$. 

\begin{prop}
There is a homomorphism of algebras given by
\eq{ \label{ev}
ev_\rho \;:\; \mc{Y}(\mfa) \to U(\mfa) , \qu a_{i,r} \mapsto \rho^r a_{i} \qu\text{where}\qu \rho = \frac{u^2 h_1 - u^{-2} h_2}{u^2-u^{-2}} , 
}
for all $a_{i,r}\in \mc{Y}(\mfa)$ and $a_i\in U(\mfa)$ with $i=1,2$, $r\ge0$, $a\in\{e,f,h,k\}$.
\end{prop}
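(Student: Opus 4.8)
The plan is to verify that the assignment $a_{i,r}\mapsto\rho^r a_i$ is compatible with the defining relations \eqref{Y(a)} of $\mcY(\mfa)$, which is all that is needed since $\mcY(\mfa)$ is presented by generators and relations. First I would observe that $\rho$ is central in $U(\mfa)$: it is built from the central elements $h_1,h_2,u^{\pm}$, so it commutes with everything and no ordering ambiguity arises when we write $\rho^r a_i$. I would also record that under the ideal relation \eqref{I0} we have $k_i=\al_i(u^2-u^{-2})$, hence
\eq{
\al_i\,\rho = \al_i\,\frac{u^2 h_1 - u^{-2} h_2}{u^2 - u^{-2}} = \frac{\al_i(u^2 h_1 - u^{-2} h_2)}{u^2-u^{-2}},
}
so that $k_i\,\rho = \al_i(u^2 h_1 - u^{-2} h_2)$; this identity is the bridge between the level-$r$ structure of $\mcY(\mfa)$ (governed by $\rho$-powers) and the $k_i$-relations in $U(\mfa)$.

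The key computation is to check each family of relations in \eqref{Y(a)} after applying $ev_\rho$. For $i=j$ the relation $[e_{i,r},f_{i,s}]=h_{i,r+s}$ maps to $[\rho^r e_i,\rho^s f_i]=\rho^{r+s}[e_i,f_i]=\rho^{r+s}h_i$, using centrality of $\rho$; this matches $ev_\rho(h_{i,r+s})=\rho^{r+s}h_i$ immediately. The delicate case is $i\ne j$: the relation $[e_{i,r},f_{j,s}]=k_{i,r+s}$ maps to $\rho^{r+s}[e_i,f_j]=\rho^{r+s}k_i$, and I must confirm this equals $ev_\rho(k_{i,r+s})=\rho^{r+s}k_i$ — which it does tautologically, so the content is simply that the same $\rho$ governs both the $h$- and $k$-towers consistently. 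The relations $[h_{0,r},f_{i,s}]=-f_{i,r+s}$ and $[h_{0,r},e_{i,s}]=e_{i,r+s}$ map to $\rho^{r+s}[h_0,f_i]=-\rho^{r+s}f_i$ and $\rho^{r+s}[h_0,e_i]=\rho^{r+s}e_i$, again by centrality of $\rho$ together with $[h_0,f_i]=-f_i$, $[h_0,e_i]=e_i$ from \eqref{a:Lie}.

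The only genuine subtlety — and the step I expect to be the main obstacle — is verifying that $ev_\rho$ is well-defined as a map out of $\mcY(\mfa)$, i.e.\ that the assignment on generators respects not merely the displayed relations but the full associative algebra structure, including the implicit fact that $\rho^r a_i$ and $\rho^s b_j$ multiply correctly. Because $\rho$ is central this reduces to checking that no spurious relations are forced: one must confirm that the generating series picture \eqref{ki(z)} is reproduced, namely that $ev_\rho$ is consistent with $k_i(z)=\al_i(u^2 h_1(z)-u^{-2}h_2(z))z^{-1}$. Concretely, under $ev_\rho$ one has $ev_\rho(h_i(z))=1+\sum_{r\ge0}\rho^r h_i z^{-r-1}=1+h_i(z-\rho)^{-1}$ and similarly for $k_i$, so the series identity \eqref{ki(z)} becomes an identity among geometric series in $z^{-1}$, which I would verify by matching coefficients of $z^{-r-1}$ and invoking $k_i\rho^r=\al_i(u^2 h_1-u^{-2}h_2)\rho^{r-1}=\al_i(u^2 h_1-u^{-2}h_2)\,ev_\rho$-consistently via the bridge identity above. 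Once this coherence is confirmed, $ev_\rho$ extends uniquely to an algebra homomorphism by the universal property of $\mcY(\mfa)$.
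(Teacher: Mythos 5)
Your proposal is correct and follows essentially the same route as the paper: the homogeneous relations \eqref{Y(a)} are automatically preserved because $\rho$ is central, and the real content is compatibility with the constraint $k_{i,r+1}=\al_i(u^2h_{1,r}-u^{-2}h_{2,r})$ coming from \eqref{ki(z)}, which reduces to your bridge identity $\rho\,k_i=\al_i(u^2h_1-u^{-2}h_2)$. The only difference is direction: the paper uses that identity together with $k_i=\al_i(u^2-u^{-2})$ to \emph{derive} the stated formula for $\rho$, whereas you \emph{verify} that the given $\rho$ satisfies it — the key computation is identical.
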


\begin{proof}
It follows from \eqref{Y(a)} and \eqref{a:Lie}, \eqref{I0} that $ev_\rho(a_{i,r}) = \rho^r a_{i}$ for some $\rho\in\C[h_i,k_i,u^\pm]$. Taking the coefficients of $z^{-r-2}$ at both sides of \eqref{ki(z)} we find 
\eq{ \label{kir}
k_{i,r+1} = \al_i \left(u^2 h_{1,r} - u^{-2} h_{2,r} \right) ,
}
for $r\ge0$. Since $k_i = \al_i(u^2-u^{-2})$, the evaluation map applied to \eqref{kir} gives
\eq{
\rho^{r+1} \al_i  (u^2-u^{-2}) = \rho^r \al_i \left(u^2 h_{1} - u^{-2} h_{2} \right) ,
}
which implies \eqref{ev} as required.
\end{proof}

It is a direct computation to check that $R$-matrix \eqref{R} intertwines evaluation representations of $\mcY(a)$, namely 
\eq{
(\pi \ot \pi')(ev_\rho \ot ev_\rho)(\Delta(a))\, R(\ga,\nu;\ga',\nu')=R(\ga,\nu;\ga',\nu')\,(\pi \ot \pi')(ev_\rho \ot ev_\rho)(\Delta^{\rm op}(a))
}
for all~$a\in\mcY(\mfa)$. Also note that evaluation homomorphism for the generating series can be written as
\eq{
ev_\rho \;:\; a_i(z) \mapsto i_z \frac{1}{z-\rho}\, a_i ,
}
where $i_z$ denotes the expansion in the domain $|z|\to\infty$.


\section{The deformed superalgebra $U_q(\C\ltimes\mfsl({1|1})^2\op{\C U^{\pm}})$ and its highest-weight modules} \label{Sec:5}

In this section we present a quantum deformation of the superalgebra considered in Section \ref{Sec:2}. This section follows a similar strategy to the one presented in Section \ref{Sec:2}, namely we start by considering a $q$-deformed universal enveloping superalgebra $U_q(\C \ltimes\mfsl({1|1})^2\ot\C^2)$, and then we obtain $ U_q(\mfa)=U_q(\C \ltimes\mfsl({1|1})^2\op{\C U^{\pm}})$ as the quotient of an extension of the previous superalgebra. (Here $U$ is considered as the $q$-deformed analogue of $u$ in $U(\mfa)=U(\C \ltimes\mfsl({1|1})^2\op{\C u^{\pm}})$.)


\subsection{Algebra}

Let $q\in\C^\times$ be generic (not a root of unity). Set 
$$
[x]_q=\frac{x-x^{-1}}{q-q^{-1}}.
$$

\begin{defn}
The centrally extended superalgebra $U_{q}(\C \ltimes\mfsl({1|1})^2\op\C^2)$ is the unital associative Lie superalgebra generated by elements $E_i$, $F_i$, $K^\pm_0$ and central elements $K^\pm_i$, $L^\pm_i$ with $i,j\in\{1,2\}$ satisfying 
\spl{
& K_0^\pm K_0^\mp=K_i^\pm K_i^\mp=L_i^\pm L_i^\mp=1 , \qu K^+_0 E_i K^-_0 = q E_i, \qu K^-_0 F_i K^+_0 = q F_i \tx{for} i\in\{1,2\},\\[.25em] 
& [ E_i , F_j ] = \del_{ij} \frac{K^{+2}_i - K^{-2}_i}{q-q^{-1}} + \al_i\,(1-\del_{ij})\, \frac{L^{+}_i - L^{-}_i}{q-q^{-1}}  \tx{for} i,j\in\{1,2\}.  \label{Uq:Lie}
}
The remaining relations are trivial. The $\Z_2$-grading is given by $\deg_2(K^\pm_0)=\deg_2(K^\pm_i)=\deg_2(L^\pm_i)=0$ and $\deg_2(E_i)=\deg_2(F_i)=1$.
\end{defn}

\begin{rmk} \label{R:aut-Uqa}
Let $\hbar \in \C^\times$ be generic and set $q=e^{\hbar}$, $K^\pm_i = e^{\pm\hbar H_i/2}$, $L^\pm_i = e^{\pm\hbar J_i}$ with an appropriate defining relations for the new elements $H_i$, $J_i$ implied by \eqref{Uq:Lie}. Then, viewed as a $\C[[\hbar]]$-algebra, $U_{q}(\C \ltimes\mfsl({1|1})^2\op\C^2)$ has outer-automorphism group $GL(2)^2$ acting by
\spl{
& \qq\qq \left(\!\!\begin{array}{c} E_1 \\ E_2 \end{array}\!\!\right) \mapsto A \left(\!\!\begin{array}{c} E_1 \\ E_2 \end{array}\!\!\right) , \qu 
\left(\!\!\begin{array}{c} F_1 \\ F_2 \end{array}\!\!\right) \mapsto B \left(\!\!\begin{array}{c} F_1 \\ F_2 \end{array}\!\!\right) , 
\\
&\left(\!\!\begin{array}{cc} K^\pm_1 & \!\!\al_1(L^{\pm}_1)^{1/2} \\ \al_2 (L^{\pm}_2)^{1/2} & K^\pm_2 \end{array}\!\!\right) \mapsto A \left(\!\!\begin{array}{cc} K^\pm_1 & \!\!\al_1(L^{\pm}_1)^{1/2} \\ \al_2 (L^{\pm}_2)^{1/2} & K^\pm_2 \end{array}\!\!\right) B^t , \label{aut-a}
}
for any $(A,B)\in GL(2)^2$. The elements $K^\pm_0$, which act as outer-automorphisms on the subalgebra $U_{q}(\mfsl({1|1})^2\op\C^2)$, are invariant under the action of $GL(2)^2$.
\end{rmk}

Following similar steps as we did in Section \ref{Sec:2}, we want to enlarge the algebra by central elements $U^\pm$ satisfying $U^\pm U^\mp=1$. We will denote this extended algebra by $U_q(\mfa_0)=U_q(\C \ltimes \mfsl({1|1})^2\op\C^2\op\C U^{\pm})$. The next observation follows straightforwardly.

\begin{prop}
The vector space basis of $ U_q(\mfa_0)$ is given in terms of monomials
\spl{ \label{PBW:q}
(E_2)^{r_2} (E_1)^{r_1} (K^+_0)^{l_0} (K^+_1)^{l_1} (K^+_2)^{l_2} (L^+_1)^{l_3} (L^+_2)^{l_4} (U^+)^{l_5} (E_1)^{s_1} (E_2)^{s_2} 
}
with $r_i,s_i \in \{0,1\}$ and $l_i\in\Z$.
\end{prop}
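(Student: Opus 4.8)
The plan is to read this as a Poincaré--Birkhoff--Witt statement for $U_q(\mfa_0)$, entirely parallel to the undeformed basis \eqref{PBW}, and to prove it with Bergman's Diamond Lemma. First I would fix a normal order on the generators: the lowering generators $F_i$ to the left (in the order $F_2,F_1$), the group-like Cartan and central generators $K^\pm_0,K^\pm_i,L^\pm_i,U^\pm$ in the middle (in the fixed order displayed in \eqref{PBW:q}), and the raising generators $E_i$ to the right (in the order $E_1,E_2$). Since each invertible generator $X^+$ satisfies $X^+X^-=1$, I treat the $X^-$ as formal inverses with reductions $X^+X^-\to 1$ and $X^-X^+\to 1$, so that the middle block is encoded by integer exponents $l_i\in\Z$ exactly as claimed, while $E_i^2=0=F_i^2$ (the trivial graded commutators of the odd generators) force $r_i,s_i\in\{0,1\}$.

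Next I would read the reduction rules off \eqref{Uq:Lie}: the quadratic rules $E_i^2\to0$, $F_i^2\to0$, $E_2E_1\to -E_1E_2$, $F_1F_2\to -F_2F_1$; the straightening rules moving central elements out of the way ($E_ic\to cE_i$, $cF_i\to F_ic$ for central $c$, and $cK^+_0\to K^+_0c$ to order the middle block); the $q$-twist rules $E_iK^+_0\to q^{-1}K^+_0E_i$ and $K^+_0F_i\to q^{-1}F_iK^+_0$ coming from $K^+_0E_iK^-_0=qE_i$ and $K^-_0F_iK^+_0=qF_i$; and the mixed rule
$$
E_iF_j\;\to\;-F_jE_i+\del_{ij}\,\frac{K^{+2}_i-K^{-2}_i}{q-q^{-1}}+\al_i(1-\del_{ij})\,\frac{L^+_i-L^-_i}{q-q^{-1}},
$$
whose correction term lies entirely in the central middle block. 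Each rule either decreases the number of generators or pushes an out-of-order pair towards the normal form; since the relations are homogeneous for the $\Z$-grading (analogous to Remark \ref{R:Z5}), termination follows from a standard length-plus-inversion ordering. Spanning of $U_q(\mfa_0)$ by the monomials \eqref{PBW:q} is then immediate, as every word reduces to a normal-ordered one.

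The substance of the proof is the confluence (linear independence) half: I would check that every overlap ambiguity resolves. The purely quadratic overlaps among the $E_i$, among the $F_i$, and among the central generators are the standard exterior-/polynomial-algebra checks and close trivially. The essential ambiguities are the cubic overlaps mixing raising and lowering generators, $E_iE_iF_j$, $E_iF_jF_j$, $E_iF_jF_k$ with $j\ne k$, and $E_iE_jF_k$, together with the overlaps $E_iK^+_0F_j$ that entangle the mixed rule with the $q$-twist. Each closes because the correction term $[E_i,F_j]$ is \emph{central}, hence commutes with every $E$, $F$ and with $K^\pm_0$: for instance, reducing $E_iE_iF_j$ by $E_i^2\to0$ gives $0$, while reducing the inner $E_iF_j$ first produces $-[E_i,F_j]E_i+E_i[E_i,F_j]$, which vanishes by centrality; $E_iF_jF_j$ resolves identically. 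For the $K^+_0$-overlap the two paths agree because $K^\pm_0$ commutes with each product $E_iF_j$ (the factor $q$ from $E_i$ cancelling the $q^{-1}$ from $F_j$), which is exactly consistent with $K^\pm_0$ commuting with the central right-hand side.

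The only real obstacle I anticipate is the bookkeeping of these overlap diagrams, in particular confirming that the $q$-powers generated when $K^\pm_0$ is commuted past an $E_iF_j$ pair match those obtained after the pair is first straightened to $-F_jE_i$ plus a central term; once centrality of the correction term and the relations $K^+_0E_i=qE_iK^+_0$, $K^+_0F_i=q^{-1}F_iK^+_0$ are invoked, all ambiguities close, and the Diamond Lemma identifies the irreducible monomials with precisely those in \eqref{PBW:q}. Should a reader prefer to avoid the confluence bookkeeping, I would instead prove independence by realizing $U_q(\mfa_0)$ as operators on the free module over $U^-_q\ot U^0_q$ with the $E_i$ acting as the evident nilpotent lowering operators and checking \eqref{Uq:Lie} directly; this yields a faithful action whose orbit of the cyclic vector is exactly the list \eqref{PBW:q}.
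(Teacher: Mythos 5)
Your argument is correct, and it supplies an actual proof where the paper offers none: the paper simply asserts that the statement ``follows straightforwardly'' from the defining relations, exactly as in the undeformed case. Your Bergman/Diamond Lemma route is the natural way to make that assertion rigorous, and you correctly isolate the two points on which everything hinges: (i) the correction term in $[E_i,F_j]$ lies in the commutative, central middle block, so all cubic ambiguities $E_iE_iF_j$, $E_iF_jF_j$, $E_iF_jF_k$, $E_iE_jF_k$ close; and (ii) the $q$ from $K^+_0E_i=qE_iK^+_0$ cancels the $q^{-1}$ from $K^+_0F_j=q^{-1}F_jK^+_0$, so $K^\pm_0$ commutes with every product $E_iF_j$ and the overlaps $E_iK^+_0F_j$ resolve consistently with centrality of the right-hand side of \eqref{Uq:Lie}. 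The alternative you sketch at the end (a faithful action on the induced module over $U^-_{q0}\ot U^0_{q0}$) is the other standard proof and is equally valid. Two small remarks: you are implicitly reading the leftmost factors of \eqref{PBW:q} as $(F_2)^{r_2}(F_1)^{r_1}$ rather than the $(E_2)^{r_2}(E_1)^{r_1}$ printed in the statement --- that is the correct reading (compare \eqref{PBW} and the triangular decomposition $U_q(\mfa_0)\cong U^-_{q0}.U^0_{q0}.U^+_{q0}$ stated immediately afterwards), but the discrepancy with the printed formula is worth flagging as a typo; and the nilpotency $E_i^2=F_i^2=0$ comes from the ``trivial'' graded commutators $[E_i,E_i]=2E_i^2=0$, as you use, so your constraint $r_i,s_i\in\{0,1\}$ is justified.
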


The monomials \eqref{PBW:q} give a Poincar\'e--Birkhoff--Witt type basis of $ U_q(\mfa_0)$, and $ U_q(\mfa_0)\cong U^-_{q0}.U^0_{q0} .U^+_{q0}$ as vector spaces, where $U^-_{q0}$ and $U^+_{q0}$ are the nilpotent subalgebras generated by elements $F_i$ and $E_i$ with $i=1,2$, respectively, and $U^0_{q0}$ is generated by all the remaining elements. Next we give a remark which follows from analogous Remark \ref{R:Z5} for $U(\mfa_0)$.

\begin{rmk}
The algebra $ U_q(\mfa_0)$ admits a $\Z$-grading given by
\spl{
\deg(K^\pm_0)=\deg(K^\pm_i)=\deg(U^\pm)=0, \qu \deg(E_i)=\dt\pm1, \qu \deg(F_{i})=\dt\mp1 \qu \deg(\al_i) = \dt\pm2.
}
\end{rmk}

Let $I_{q0}$ be the ideal of $U_q(\mfa_0)$ generated by the relations
\eq{ \label{I0q}
L^+_i = K^+_1 K^+_2 U^{\dt\pm2} , \qq L^-_i = K^-_1 K^-_2 U^{\dt\mp2} . 
}
Set $ U_q(\mfa)= U_q(\mfa_0)/I_{q0}$. Then one can define a Hopf algebra structure on $ U_q(\mfa)$ by introducing the coproduct
\spl{ \label{Uq:cop}
\Delta(E_i) &= E_i \ot  U^{\dt\mp}  K_i^- + U^{\dt\pm} K_i^+ \ot E_i , \qu  \Delta(F_i) = F_i \ot U^{\dt\pm} K_i^- + U^{\dt\mp} K_i^+\ot F_i , \qu \Delta(C) = C \ot C ,\!\!
}
and the counit and antipode
\spl{
\veps(E_i)=\veps(F_i)=0, \qu S(E_i) = - E_i, \qu S(F_i) = - F_i, \qu \veps(C)=1, \qu S(C)=C^{-1} ,
}
for $C\in\{K^\pm_0,K^\pm_i,L^\pm_i,U^\pm\}$. \medskip

The coproduct \eqref{Uq:cop} is a homomorphism of algebras $\Delta :  U_q(\mfa) \to  U_q(\mfa) \ot  U_q(\mfa)$. Indeed, for $i,j\in\{1,2\}$ and $i\ne j$ we have
$$
[\Delta(E_i),\Delta(F_{j})] =  \frac{\al_i(L^+_i-L^-_i)}{q-q^{-1}} \ot U^{\dt\mp2} K^{-}_1 K^{-}_2 + U^{\dt\pm2} K^{+}_1 K^{+}_2 \ot \frac{\al_i(L^+_i-L^-_i)}{q-q^{-1}}
$$
and
$$
\Delta([E_i,F_{j}]) = \frac{\al_i (L^+_i \ot L^+_i - L^-_i \ot L^-_i)}{q-q^{-1}} ,
$$
which agree with each other provided \eqref{I0q} holds.

\begin{rmk}
Beside the Chevalley anti-automorphism $E_i\mapsto -F_i$, $F_i\mapsto -E_i$, $K^\pm_0\mapsto K^{\pm}_0$, $K^\pm_i\mapsto -K^{\mp}_i$, $L^\pm_i\mapsto L^\mp_i$, $U^\pm \to U^\mp$ for $i\in\{1,2\}$, there are a number of involutive authomorphisms of $ U_q(\mfa)$ given by ($i,j\in\{1,2\}$, $i\ne j$)
\spl{
& E_i \mapsto F_i , && F_i \mapsto E_i , && K^\pm_i \mapsto K^\pm_i, && L^\pm_i \mapsto L^\pm_{j}, && K^\pm_0 \mapsto K^\mp_0,  && U^\pm \mapsto U^\mp , && \al_i \mapsto \al_{j}, \\
& E_i \mapsto F_{j} , && F_i \mapsto E_{j} , && K^\pm_i \mapsto K^\pm_{j}, && L^\pm_i \mapsto L^\pm_i, && K^\pm_0 \mapsto K^\mp_0, && U^\pm \mapsto U^\pm , && \al_i \mapsto \al_i,\\
& E_i \mapsto E_{j} , && F_i \mapsto F_{j} , && K^\pm_i \mapsto K^\pm_{j}, && L^\pm_i \mapsto L^\pm_{j}, && K^\pm_0 \mapsto K^\pm_0, && U^\pm \mapsto U^\mp, && \al_i \mapsto \al_{j},
}
which form the Klein-four outer-automorphism group of $ U_q(\mfa)$. Note that it is also a group of the Hopf algebra outer-automorphisms of $ U_q(\mfa)$.
\end{rmk}

\subsection{Typical module}

The typical module $K_q(\la_1,\la_2,\nu)$ is the four-dimensional highest-weight Kac module of $ U_q(\mfa)$ defined as follows. Let $v_0$ be the highest-weight vector such that
\eq{ \label{qmod}
K^\pm_i . v_0 = q^{\pm\la_i/2} v_0 , \qu L^\pm_i . v_0 = q^{\pm\mu_i/2} v_0 , \qu U^\pm.v_0 = \nu^\pm v_0, \qu E_i . v_0 = 0 , \qu K^\pm_0 . v_0 = 0
}
for $i\in\{1,2\}$, where $\la_i,\nu\in\C^\times$ are generic and 
\eq{ \label{mumu}
q^{\mu_1} = q^{\frac{\la_1+\la_2}2} \nu^{+2}, \qq q^{\mu_2} = q^{\frac{\la_1+\la_2}2} \nu^{-2}
}
due to \eqref{I0q}. Set $v_i = F_i.v_0$ and $v_{21} = F_2 F_1 . v_0$. Thus $K_q(\la_1,\la_2,\nu)\cong {\rm span}_\C\{v_0, v_1, v_2, v_{21}\}$ as a vector space. Since $K^\pm_0.v_i=q^{-1}v_i$ and $K^\pm_0.v_{21}=q^{-2}v_{21}$ we obtain the following weight space decomposition
$$
K_q(\la_1,\la_2,\nu) = K_{q,0}(\la_1,\la_2,\nu) \op K_{q,-1}(\la_1,\la_2,\nu) \op K_{q,-2}(\la_1,\la_2,\nu) , 
$$
satisfying $v_0 \in K_{q,0}(\la_1,\la_2,\nu)$,\, $v_1, v_2 \in K_{q,-1}(\la_1,\la_2,\nu)$ and $v_{21} \in K_{q,-2}(\la_1,\la_2,\nu)$. We have
\spl{
E_1.v_{21} &= \al_1 [\mu_1]_q v_1 - [\la_1]_q v_2, \qu E_1.v_1=[\la_1]_q v_0 , \qu E_1.v_2=\al_1 [\mu_1]_q v_0 , \\
E_2.v_{21} &= [\la_2]_q v_1 - \al_2 [\mu_2]_q v_2, \qu E_2.v_1= \al_2 [\mu_2]_q v_0 , \qu E_2.v_2=[\la_2]_q v_0 . \label{ei.v21:q}
} 
Define linear combinations
\spl{
v'_1 &= \al_1 [\mu_1]_q v_1 - [\la_1]_q v_2	, \qu F'_1 = \al_1 [\mu_1]_q F_1 - [\la_1]_q F_2, \qu E'_1 = \al_2 [\mu_2]_q E_1 - [\la_1]_q E_2 , \\
v'_2 &= [\la_2]_q v_1 - \al_2 [\mu_2]_q v_2 , \qu F'_2 = [\la_2]_q F_1 - \al_2 [\mu_2]_q F_2, \qu E'_2 = [\la_2]_q E_1 - \al_1 [\mu_1]_q E_2 .
}
Then 
\spl{
F'_1.v_0 &= v'_1 , \qu F_1.v'_1 = [\la_1]_q v_{21} , \qu F_1.v'_2 = \al_2 [\mu_2]_q v_{21} , \qu E_1.v'_2 = \vartheta_- v_{0} , \\
F'_2.v_0 &= v'_2,  \qu F_2.v'_1 = \al_1 [\mu_1]_q v_{21}, \qu F_2.v'_2 = [\la_2]_q v_{21} , \qu E_2.v'_1 = -\vartheta_- v_{0} .
}
where we have introduced a short-hand notation $\vartheta_\pm = [\la_1]_q[\la_2]_q \pm \al_1 \al_2 [\mu_1]_q[\mu_2]_q$. Clearly, elements $F'_i$, $E'_i$ and $v'_i$ are pairwise linearly independent for generic $\la_i$ and $\mu_i$. In the same way as for $U(\mfa)$, we call the set $\{v_0,v_1,v_2,v_{21}\}$ the {\it up-down} vector space basis and $\{v_0,v'_1,v'_2,v_{21}\}$ the {\it down-up} vector space basis of $K_q(\la_1,\la_2,\nu)$. The module diagram for both bases are equivalent to the ones shown in Figure \ref{Fig:1} (a) and (b).

\subsection{Atypical module}

The atypical module $A_q(\la_1,\la_2,\nu)$ is the two-dimensional submodule of $K_q(\la_1,\la_2,\nu)$ when its weights satisfy the relation
\eq{ \label{atypical:q}
[\la_1]_q[\la_2]_q=\al_1 \al_2\, [\mu_1]_q[\mu_2]_q ,
}
giving $\vartheta_-=0$. It follows that
\eq{ \label{gaga}
v'_1 = \ga^2 v'_2 , \qu F'_1 = \ga^2 F'_2 \tx{and} E_1E_2.v_{21}=0 \tx{where} \ga^2=\frac{\al_1 [\mu_1]_q}{[\la_2]_q}=\frac{[\la_1]_q}{\al_2 [\mu_2]_q}.
}
Set $v''_1 = \al_1 [\mu_1]_q v_1 + [\la_1]_q v_2$ and $F''_1 = \al_1 [\mu_1]_q F_1 + [\la_1]_q F_2$. Then clearly both $v''_1$, $v'_2$ and $F''_1$, $F'_2$ are linearly independent and 
\spl{
F''_1.v_0 &= v''_1, \qu E_1.v''_1 = 2\al_1 [\la_1]_q[\mu_1]_q v_0, \qu F'_2.v''_1=- \vartheta_+ v_{21}, \qu E_2.v''_1=\vartheta_+ v_0 .
}
The module diagram of $K_q(\la_1,\la_2,\nu)$ when \eqref{atypical:q} holds is equivalent to the one shown in Figure \ref{Fig:1} (c). Hence $A_q(\la_1,\la_2,\nu)\cong{\rm span}_\C\{v'_2,v_{21}\}$ as a vector space, and 
\eq{
A_q(\la_1,\la_2,\nu)\cong K_q(\la_1,\la_2,\nu)/ A_q(\la_1,\la_2,\nu).
}

As before, it will be convenient to choose the vector space basis the atypical module to be
\eq{
A_q(\la_1,\la_2,\nu) = {\rm span}_\C\{w_0,w_1\},
}
where $w_0 = v'_2$ and $w_1 = \ga^{-1} v_{21}$. The action of $ U_q(\mfa)$ is given by 
\eq{
K^\pm_i.w_j = q^{\pm \la_i/2} w_j, \qu L^\pm_i.w_j = q^{\pm\mu_i/2} w_j, \qu U^\pm.w_j = \nu^{\pm1} w_j, \qu K^\pm_0.w_j = q^{\mp(1+j)} w_j
}
for $i,j\in\{1,2\}$ and
\spl{ \label{qmod}
E_i.w_0 &= 0, && F_1.w_0 = \al_2\ga [\mu_2]_q\, w_1, && F_2.w_0 = \al_1 \ga^{-1}[\mu_1]_q\, w_1, \\
F_i.w_1 &= 0, && E_1.w_1 = \ga\, w_0, && E_2.w_1 = \ga^{-1}w_0.
}

A connection with the traditional deformed parametrization of the atypical module in terms of the $x^\pm$ variables
used in \cite{BGM,Ho} is given in Appendix \ref{App:q-AdS}.

\subsection{Tensor product of atypical modules}

Let $w_i \ot w'_j \in A_q(\la_1,\la_2,\nu)\ot A_q(\la'_1,\la'_2,\nu')$ with $i,j\in\{0,1\}$.
A direct computation shows that the action of $ U_q(\mfa)$ on vectors $w_i \ot w'_j$ is given by
\spl{ \label{f:ww:q}
\Delta(F_1).(w_0 \ot  w'_0) &= \al_2 \ga \nu' q^{-\frac{\la'_1}2}[\mu_2]_q  \, w_1 \ot  w'_0 + (-1)^{p(w_0)} \al_2 \ga'\nu^{-1} q^{\frac{\la_1}2} [\mu'_2]_q \, w_0 \ot  w'_1 , \\
\Delta(F_2).(w_0 \ot  w'_0) &= {\nu}^{\prime-1}q^{-\frac{\la'_2}2}[\la_2]_q \, w_1 \ot  w'_0 + (-1)^{p(w_0)} \nu\,q^{\frac{\la_2}2} [\la'_2]_q\, w_0 \ot  w'_1 , \\
\Delta(F_2F_1).(w_0 \ot  w'_0) &= \al_2 {\ga\ga'}(-1)^{p(w_0)} \big((\nu\nu')^{-1} q^{\frac{\la_1-\la'_2}2} [\la_2]_q [\mu'_2]_q - \nu\nu' q^{\frac{\la_2-\la'_1}2} [\mu_2]_q [\la'_2]_q \big)\, w_1 \ot  w'_1 
}
and
\spl{ \label{e:ww:q}
\Delta(E_1).(w_1 \ot  w'_1) &= \ga\nu^{\prime-1} q^{-\frac{\la'_1}2} \, w_0 \ot  w'_1 - (-1)^{p(w_0)} \ga' \nu q^{\frac{\la_1}2}\, w_1 \ot  w'_0 , \\
\Delta(E_2).(w_1 \ot  w'_1) &= \ga^{-1}\bar{\nu}q^{-\frac{\la'_2}2} \, w_0 \ot  w'_1 - (-1)^{p(w_0)}\ga^{\prime-1}\nu^{-1}q^{\frac{\la_2}2} \, w_1 \ot  w'_0 , \\
\Delta(E_1E_2).(w_1 \ot  w'_1) &= (-1)^{p(w_0)} \big (\ga^{-1}\ga' \nu \bar{\nu}q^{\frac{\la_1-\la'_2}2} - \ga\ga^{\prime-1} \nu^{\prime-1}\nu^{-1}q^{\frac{\la_2-\la'_1}2} \big)\, w_0 \ot  w'_0 .
}
Set $\wt v_0=w_0\ot w'_0$ and $\wt{v}_i = \Delta(F_i).\wt{v}_0$, $\wt{v}_{12} = \Delta(F_1F_2).\wt{v}_0$. Then 
\spl{
\Delta(K^\pm_i).\wt{v}_0 &= q^{\pm\frac{\la_i+\la'_i}2}\,\wt{v}_0 =: q^{\pm\frac{\wt{\la}_i}2}\,\wt{v}_0 , \qq	
\Delta(L^\pm_i).\wt{v}_0 = q^{\pm\frac{\mu_i+\mu'_i}2}\,  \wt{v}_0 =: q^{\pm\frac{\wt{\mu}_i}2}\, \wt{v}_0 . \label{tilde:la-mu:q}
}
Using the relation $q^{\mu_i}=q^{\frac{\la_1+\la_2}2}\nu^{\dt\pm2}$ and the expressions above we find
\spl{
\Delta(E_1).\wt{v}_{21} &= \al_1\big({\nu}^{\prime-2}q^{-\frac{\la'_1+\la'_2}2}[\mu_1]_q + \nu^2 q^{\frac{\la_1+\la_2}2}[\mu'_1]_q\big)\,\wt{v}_1 - \al_2(\ga q^{-\la'_1}[\mu_2]_q+\ga'q^{\la_1}[\mu'_2]_q)\,\wt{v}_2 \\
&= \al_1\big(q^{-\mu'_1}[\mu_1]_q + q^{\mu_1}[\mu'_1]_q\big)\,\wt{v}_1 - \big(q^{-\la'_1}[\la_1]_q + q^{\la_1}[\la'_1]_q\big)\,\wt{v}_2 = \al_1[\wt{\mu}_1]_q\,\wt{v}_1 - [\wt{\la}_1]_q\,\wt{v}_2 , \\
\Delta(E_2).\wt{v}_{21} &= [\la_2+\la'_2]_q\,\wt{v}_1 - \al_2\big(\nu^{\prime2} q^{-\frac{\la'_1+\la'_2}2}[\mu_2]_q + \nu^{-2} q^{\frac{\la_1+\la_2}2} [\mu'_2]_q\big) \,\wt{v}_2 \\
&= [\la_2+\la'_2]_q\,\wt{v}_1 - \al_2\big(q^{-\mu'_2}[\mu_2]_q + q^{\mu_2} [\mu'_2]_q\big) \,\wt{v}_2 = [\wt{\la}_2]_q\,\wt{v}_1 - \al_2[\wt{\mu}_2]_q\,\wt{v}_2 ,
}
which compared with \eqref{ei.v21:q} imply that ({\it c.f.}~\eqref{AA=K})
\eq{
K_q(\wt\la_1,\wt\la_2,\wt\nu) \cong A_q(\la_1,\la_2,\nu)\ot A_q(\la'_1,\la'_2,\nu')
}
with $\wt\la_i=\la_i+\la'_i$ and $\wt\nu=\nu\nu'$. Moreover, $q^{\wt\mu_i}=q^{(\wt\la_1+\wt\la_2)/2} \wt\nu^{\,2}$.

\subsection{One-dimensional module} \label{Sec:5.4}

Set 
\eq{ \label{1q}
\ms{1}_q = \ga w_1\ot w'_0 + (-1)^{p(w_1)}\, q^{-\wt\la_2/2}\ga' \nu \nu' w_0\ot w'_1 .
}
Its clear that $K^\pm_0.\ms{1}_q \in \C \ms{1}_q$. We need to show that $a.\ms{1}_q = 0$ if $\wt\la_i=\wt\mu_i=0$ for all $a\in U_q(\mfa)$, $a\ne U^\pm, K^\pm_0$. It follows straightforwardly that $\Delta(E_2).\ms{1}_q=0$. For $F_1$ and $F_2$ we have 
\eqn{
\Delta(F_2).\ms{1}_q &= (-1)^{p(w_1)}\ga \ga'\nu q^{-\frac{\la_2}2}\big( q^{\la_2}[\la'_2]_q + q^{-\la'_2}[\la_2]_q\big)\,w_1 \ot w'_1 = (-1)^{p(w_1)}\ga\ga'\nu q^{-\frac{\la_2}2} [\wt\la_2]_q\,w_1 \ot w'_1, \\
\Delta(F_1) . \ms{1}_q &= \al_2(-1)^{p(w_1)}\ga\ga'\big( \nu^{-1} q^{\frac{\la_1}2}[\mu'_2]_q  + \nu\nu^{\prime2} q^{-\frac{\la'_2+\la'_1}2} [\mu_2]_q\big) \,w_1 \ot w'_1 \\
& = \al_2(-1)^{p(w_1)}\ga\ga'\nu q^{-\frac{\la_2}2}\big( \nu^{-2} q^{\frac{\la_1+\la_2}2}[\mu'_2]_q  + \nu^{\prime2} q^{-\frac{\la'_1+\la'_2}2} [\mu_2]_q\big) \,w_1 \ot w'_1 \\
& = \al_2(-1)^{p(w_1)}\ga\ga'\nu q^{-\frac{\la_2}2}[\wt\mu_2]_q \,w_1 \ot w'_1 , 
}
which are zero only if $\wt\la_2=\wt\mu_2=0$. Now consider the action of $E_1$: 
$$
\Delta(E_1) . \ms{1}_q = \big( \ga^2 \nu^{\prime-1} q^{-\frac{\la'_1}2} - \ga^{\prime2}\nu^2\nu' q^{\frac{\la_1}2}\big)\,w_0 \ot w'_0.
$$
Using $\ga^2=\al_1[\mu_1]_q/[\la_2]_q$, $\ga^{\prime2}=\al_1[\mu'_1]_q/[\la'_2]_q$ and $\la'_2=-\la_2$ we obtain
$$
\Delta(E_1).\ms{1}_q = \al_1\nu' \,\frac{q^{\frac{\la'_2}2}}{[\la_2]_q}\,\big( \nu^{\prime-2} q^{-\frac{\la'_1+\la'_2}2}[\mu_1]_q + \nu^2 q^{\frac{\la_1+\la_2}2} [\mu'_1]_q \big)\,w_0 \ot w'_0  = \al_1\nu' q^{\frac{\la'_2}2}\,\frac{[\wt\mu_1]_q}{[\la_2]_q}\,w_0 \ot w'_0 ,
$$
which is zero if $\wt\mu_1=0$. Finally, using $\ga^2=\al_2^{-1}[\la_1]_q/[\mu_2]_q$, $\ga^{\prime2}=\al_2^{-1}[\la'_1]_q/[\mu'_2]_q$ and $\la'_2=-\la_2$, $\mu'_2=-\mu_2$, we get
\eqn{
\Delta(E_1).\ms{1}_q &= \frac{\nu'}{\al_2[\mu_2]_q} \big( \nu^{\prime-2}q^{-\frac{\la'_1}2}[\la_1]_q + \nu^2q^{\frac{\la_1}2}[\la'_1]_q \big) w_0 \ot w'_0 \\
& = \frac{\nu'}{\al_2[\mu_2]_q} \big( q^{\mu'_2}q^{-\la'_1-\la'_2/2}[\la_1]_q + q^{-\mu_2} q^{\la_1+\la_2/2}[\la'_1]_q \big) w_0 \ot w'_0 
= \frac{\nu' q^{-\mu_2+\la_2/2}}{\al_2[\mu_2]_q} [\wt\la_1]_q \, w_0 \ot w'_0 ,
}
which gives zero if $\wt\la_1=0$. Now $\wt\la_i=\wt\mu_i=0$ and requiring $\Delta(U^\pm).\ms{1}_q = \ms{1}_q$ implies $\wt\nu=1$.

\subsection{$R$-matrix}

We use the same notation as in Section \ref{Sec:2.6}. The two dimensional (atypical) representation $\pi :  U_q(\mfa) \to \End(V)$, $a \mapsto \pi(a)$ is given by ({\it c.f.}~\eqref{qmod})
\spl{ \label{rep1:q}
\pi(E_1) &= \ga\, E_{21} , \qu
\pi(F_1) = \al_2\ga [\mu_2]_q\, E_{12} , \qu
\pi(K^\pm_i) = q^{\pm\la_i/2}\, I , \qu \pi(U^\pm) = \nu^\pm I ,\\
\pi(E_2) &= \ga^{-1} E_{21} , \qu
\pi(F_2) = \al_1\ga^{-1}[\mu_1]_q\, E_{12} , \qu
\pi(L^\pm_i) = q^{\pm\mu_i/2}\, I , \qu \pi(K^\pm_0) = q^{\mp2}E_{11}-q^{\mp}E_{22}.
}
Likewise, we denote by $\pi' :  U_q(\mfa) \to \End V'$ the representation with labels $\{\ga',\nu'\}$.

\begin{prop} \label{P:Rq}
The $R$-matrix $R_q(\ga,\nu;\ga',\nu')\in\End(V\ot V')$ intertwining the tensor product of atypical representations $\pi$ and $\pi'$ is given by 
\eqa{ \label{Rq}
R_q(\ga,\nu;\ga',\nu') &= \Big(q^{\frac{\la_1-{\la'}_2}{2}} \frac{{\ga'} \nu{\nu'}}{\ga}  - q^{\frac{\la_2-{\la'}_1}{2}} \frac{\ga}{\ga' \nu \nu'} \Big) E_{11}\ot E_{11} 
+ \Big(q^{-\frac{\la_1+{\la'}_2}{2}}\frac{{\ga'}\nu'}{\ga \nu}  - q^{-\frac{{\la'}_1+\la_2}{2}} \frac{\ga \nu}{\ga'\nu'}  \Big) E_{11}\ot E_{22} \el[.25em]
& - \big(q^{\frac{\la_1-\la_2}2} \nu^2 - q^{-\frac{\la_1-\la_2}2}\nu^{-2}\big) E_{12}\ot E_{21} 
+ \big(q^{\frac{\la'_1-\la'_2}2} \nu^{\prime2} - q^{-\frac{\la'_1-\la'_2}2}\nu^{\prime-2} \big) E_{21}\ot E_{12} \el[.5em]
& + \Big( q^{\frac{\la_1+\la'_2}{2}} \frac{\ga' \nu}{\ga \nu'} - q^{\frac{\la'_1+\la_2}{2}} \frac{\ga \nu'}{\ga'\nu}\Big) E_{22}\ot E_{11} 
+ \Big( q^{-\frac{\la_1-\la'_2}{2}}\frac{\ga'}{\ga \nu\nu'} - q^{-\frac{\la_2-\la'_1}{2}} \frac{\ga \nu \nu'}{\ga'} \Big) E_{22}\ot E_{22} .
}
\end{prop}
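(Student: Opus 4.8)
The plan is to reuse the two-step strategy behind Proposition~\ref{P:R}: first secure existence and uniqueness of the intertwiner abstractly, and only then pin down its matrix entries by a weight-space analysis combined with the intertwining relations on the odd generators. For existence, the tensor product computation of the previous subsection gives $K_q(\wt\la_1,\wt\la_2,\wt\nu)\cong A_q(\la_1,\la_2,\nu)\ot A_q(\la'_1,\la'_2,\nu')$, and for generic labels this typical Kac module is irreducible. Consequently the two $U_q(\mfa)$-module structures carried by $V\ot V'$, namely $(\pi\ot\pi')\Delta$ and $(\pi\ot\pi')\Delta^{\rm op}$, are irreducible with identical central and highest-weight data, hence isomorphic. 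Schur's lemma then produces a unique-up-to-scalar operator $R_q\in\End(V\ot V')$ satisfying $R_q\,(\pi\ot\pi')\Delta(a)=(\pi\ot\pi')\Delta^{\rm op}(a)\,R_q$ for all $a\in U_q(\mfa)$, so it only remains to compute its entries explicitly.

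Since $\Delta(K^\pm_0)=K^\pm_0\ot K^\pm_0$ is diagonal with three distinct eigenvalues on $w_1\ot w'_1$, on the pair $\{w_1\ot w'_0,\,w_0\ot w'_1\}$, and on $w_0\ot w'_0$, the intertwiner $R_q$ must preserve these three weight spaces. This forces the ansatz
\[
R_q = a\,E_{11}\ot E_{11} + b\,E_{11}\ot E_{22} + c\,E_{12}\ot E_{21} + d\,E_{21}\ot E_{12} + e\,E_{22}\ot E_{11} + f\,E_{22}\ot E_{22},
\]
where $a$ and $f$ are the scalars by which $R_q$ acts on the extremal vectors $w_1\ot w'_1$ and $w_0\ot w'_0$, while $\bigl(\begin{smallmatrix} b & c\\ d & e\end{smallmatrix}\bigr)$ is its block on the two-dimensional middle space in the ordered basis $(w_1\ot w'_0,\,w_0\ot w'_1)$.

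To determine the six entries I would impose the intertwining relation on the odd generators only; the Cartan and central elements are intertwined automatically because $R_q$ respects weights and the two module structures share the same central character. Concretely, using that $w_0\ot w'_0$ is the highest-weight vector (annihilated by every $\Delta(E_i)$) and $w_1\ot w'_1$ the lowest (annihilated by every $\Delta(F_i)$), the relations $R_q\,\Delta(F_i)(w_0\ot w'_0)=f\,\Delta^{\rm op}(F_i)(w_0\ot w'_0)$ and the mirror $R_q\,\Delta(E_i)(w_1\ot w'_1)=a\,\Delta^{\rm op}(E_i)(w_1\ot w'_1)$, for $i=1,2$, give a homogeneous linear system in $a,b,c,d,e,f$. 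Feeding in the explicit actions \eqref{f:ww:q} and \eqref{e:ww:q}, together with their graded-flipped counterparts for $\Delta^{\rm op}$, and fixing the overall normalization (say by the $E_{12}\ot E_{21}$ coefficient displayed in \eqref{Rq}), solves the system and reproduces \eqref{Rq}. The consistency of the over-determined system is guaranteed a priori by the existence argument above.

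I expect the only real difficulty to be the sign- and exponent-bookkeeping rather than anything conceptual. The coproduct \eqref{Uq:cop} carries many half-integer powers $q^{\pm\la_i/2}$ and $q^{\pm\mu_i/2}$, and $\Delta^{\rm op}$ introduces the signs $(-1)^{p(w_0)}$ coming from the graded flip $\si$. The cleanest way through is to eliminate the $\mu_i$ systematically using $q^{\mu_1}=q^{(\la_1+\la_2)/2}\nu^{2}$ and $q^{\mu_2}=q^{(\la_1+\la_2)/2}\nu^{-2}$ from \eqref{mumu}, and to keep $\ga^2=\al_1[\mu_1]_q/[\la_2]_q=[\la_1]_q/(\al_2[\mu_2]_q)$ from \eqref{gaga} at hand to collapse the mixed $\al_i[\mu_i]_q$ factors. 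As a continuous check at each stage, the $q\to1$ limit of the resulting entries should recover the undeformed coefficients of \eqref{R}.
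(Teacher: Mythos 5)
Your proposal is correct and follows essentially the same route as the paper: existence and uniqueness via the isomorphism $K_q(\wt\la_1,\wt\la_2,\wt\nu)\cong A_q\ot A_q$, irreducibility for generic labels and Schur's lemma, then the weight-preserving six-entry ansatz and a linear system from the intertwining relations on the odd generators, normalized by one fixed coefficient. The only (immaterial) difference is the subset of equations used — you take $\Delta(F_i)$ at the highest-weight vector and $\Delta(E_i)$ at the lowest-weight vector, while the paper's Appendix imposes the full relations for $E_1,E_2$ and verifies $F_i$ afterwards; both subsets determine all entries up to the overall scalar.
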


\begin{proof}
The proof follows by the same arguments as those in the proof of Proposition \ref{P:R}. Explicit calculations are given in Appendix \ref{App:Rq}.
\end{proof}

Checking that the deformed $R$-matrix \eqref{Rq} satisfies the Yang-Baxter equation uses the same method described just below Proposition \ref{P:R}. 

\begin{rmk}
[1] Note that $\la_i=\la_i(\ga,\nu)$ and $\la'_i=\la'_i(\ga',\nu')$ in \eqref{Rq} due to \eqref{mumu} and \eqref{atypical:q}. 

\noindent (2){\bf.} Taking the $q\to1$ limit \eqref{Rq} coincides with \eqref{R}.

\noindent (3){\bf.} All what was said in Remark \ref{R:R} (2) and (3) is also true in the deformed case. The traditional deformed notation is briefly discussed in Appendix \ref{App:q-AdS}.
\end{rmk}


\section{Affinization}

Wes affinize the algebra $ U_q(\mfa)$ by doubling its nodes. Our approach is inspired by a similar affinization presented in \cite{BGM}. In this section we will use the additional notation $(i)=(-1)^{i-1}$, which will appear in the powers of the central elements only.

\begin{defn}
The quantum affine algebra $U_{q}(\wh\mfa)$ is the unital associative Lie superalgebra generated by elements $E_i$, $F_i$, $K^\pm_0$ and central elements $K^\pm_i$, $U^{\pm}$, $V^{\pm}$ with $1\le i\le 4$ satisfying
\spl{ \label{Uqa:Lie}
& K_0^\pm K_0^\mp=K_i^\pm K_i^\mp=U^\pm U^\mp=V^\pm V^\mp=1, \qu\! K^+_0 E_i K^-_0 = q E_i , \qu\! K^-_0 F_i K^+_0 = q F_i \!\tx{for}\! 1\le i \le 4, \\[.25em]
& [E_i, F_j] = \del_{ij}\, \frac{K^{+2}_i - K^{-2}_i}{q-q^{-1}} + \al_i (1-\del_{ij})\, \frac{L^+_{i} - L^-_{i}}{q-q^{-1}} \tx{for} i,j \in\{1,2\} \;\text{or}\;\, \{3,4\} , \\[0.25em] 
& [[E_3,F_2],[E_4,F_1]] = \frac{K^{+} - K^{-}}{q-q^{-1}}, \qu [[E_i,F_{i+2}],[E_{j+2},F_{j}]] = \frac{L^+_{i} L^+_{j+2} - L^-_{i} L^-_{j+2}}{q-q^{-1}} \tx{and} \\[0.25em]
& [E_i,F_{j+2}] = \al_i\,\frac{U^{+}V^{+} (K^{+}_i K^{+}_{j+2})^{\sqb{i}} - U^{-}V^{-} (K^{-}_i K^{-}_{j+2})^{\sqb{i}}}{q-q^{-1}}  \tx{for} i,j \in\{1,2\}, \; i\ne j ,
}
where 
\eq{ \label{KL}
K^\pm= \prod_{1\le l \le 4} K^\pm_l, \qu L^{\pm}_i = (U^{\pm2})^{\sqb{i}} K^\pm_1 K^\pm_2 , \qu L^{\pm}_j = (V^{\pm2})^{\sqb{j}} K^\pm_3 K^\pm_4 \tx{for} i\in\{1,2\}, \; j\in\{3,4\}.
}
The remaining relations are trivial. The $\Z_2$-grading is given by $\deg_2(K^\pm_i)=\deg_2(U^{\pm})=\deg_2(V^{\pm})=0$ and $\deg_2(E_i)=\deg_2(F_i)=1$.
\end{defn}

Notice that the affine extension is such that elements with $i=3,4$ together with $V^\pm$ generate a Hopf subalgebra of $U_q(\widehat\mfa)$ isomorphic to the subalgbra of $U_q(\mfa)$ generated by all its elements except $K^\pm_0$. We will refer to the relations in the third line of \eqref{Uqa:Lie} as the quantum Serre relations and to the relation in the fourth line as the compatibility relation. The choice of these additional relations will be explained a little bit further.

\begin{rmk}
Assuming $i\in\{1,2,3,4\}$ and $j\in\{1,2\}$ the $\Z$-grading on $U_{q}(\wh\mfa_0)$ is
\spl{
\deg(E_{2j-1})=\deg(F_{2j})=1, \qu &\deg(E_{2j})=\deg(F_{2j-1})=-1 ,  \\
\deg(K^\pm_0)=\deg(K^\pm_i)=\deg(U^{\pm})=\deg(V^{\pm})=0 , \qu &\deg(\al_{2j-1})=2, \qu \deg(\al_{2j})=-2.
}
\end{rmk}

We can define a Hopf algebra structure on $ U_q(\widehat\mfa)$ as follows.

\begin{thrm}
The algebra $ U_q(\widehat\mfa)$ has a Hopf algebra structure given by the coproduct $\Delta(C) = C \ot C$ and
\spl{ \label{Uqa:cop}
\Delta(E_i) &= E_i \ot U^{{-}\sqb{i}} K^-_i + U^{+\sqb{i}} K^+_i \ot E_i , \qu \Delta(F_i) = F_i \ot U^{+\sqb{i}} K^-_i + U^{-\sqb{i}} K^+_i \ot F_i \qu\text{for } i=1,2,  \!\!\! \\
\Delta(E_i) &= E_i \ot V^{{-}\sqb{i}} K^-_i + V^{+\sqb{i}} K^+_i \ot E_i , \qu \Delta(F_i) = F_i \ot V^{+\sqb{i}} K^-_i + V^{-\sqb{i}} K^+_i \ot F_i \qu\text{for } i=3,4 , \!\!\!
}
counit and antipode
\eq{
\veps(E_i)=\veps(F_i)=0, \qu S(E_i)=-E_i, \qu S(F_i)=-F_i, \qu \veps(C)=1, \qu S(C)=C^{-1}
}
for $C\in\{K^\pm_0,K^\pm_i,U^\pm,V^\pm\}$.
\end{thrm}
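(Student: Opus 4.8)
The plan is to verify the four defining properties of a Hopf algebra in turn, observing that all but one are immediate because every ``tail'' appearing in the coproduct \eqref{Uqa:cop} is an even central group-like monomial in $U^\pm$, $V^\pm$, $K_i^\pm$. Granting for the moment that $\Delta$ is an algebra homomorphism, coassociativity and the counit axiom reduce to checks on generators that unwind at once from $\Delta(C)=C\ot C$; for instance $(\veps\ot\id)\,\Delta(E_i)=\veps(E_i)\,U^{-\sqb{i}}K_i^- + \veps(U^{+\sqb{i}}K_i^+)\,E_i = E_i$. The antipode axiom is equally direct on generators: using $S(E_i)=-E_i$, $S(U^{+\sqb{i}}K_i^+)=U^{-\sqb{i}}K_i^-$ and centrality of the tails, $M\circ(S\ot\id)\circ\Delta(E_i)=-E_i\,U^{-\sqb{i}}K_i^- + U^{-\sqb{i}}K_i^-\,E_i = 0 = \iota\circ\veps(E_i)$, and symmetrically for the opposite convolution and for $F_i$ and the central $C$. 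Finally $\veps$ is visibly a homomorphism, and $S$ extends to a well-defined anti-automorphism because each relation in \eqref{Uqa:Lie} is preserved up to an overall sign under $E_i\mapsto -E_i$, $F_i\mapsto -F_i$, $C\mapsto C^{-1}$.

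Thus the substance is to show that $\Delta$ respects the relations \eqref{Uqa:Lie}. The engine is a single bookkeeping lemma: if $x,y$ are homogeneous of odd degree with $\Delta(x)=x\ot\beta_x+\alpha_x\ot x$ and $\Delta(y)=y\ot\beta_y+\alpha_y\ot y$ for even central tails $\alpha_\bullet,\beta_\bullet$, then in the graded tensor square the two mixed terms of $[\Delta(x),\Delta(y)]$ cancel against each other, leaving
\[
\Delta([x,y]) = [x,y]\ot\beta_x\beta_y + \alpha_x\alpha_y\ot[x,y],
\]
and the same identity holds when $x,y$ are even composite generators built from such commutators. The relations involving $K_0^\pm$ hold immediately, since $K_0^\pm$ is group-like and commutes with the central tails, so $\Delta(K_0^+)\Delta(E_i)\Delta(K_0^-)=q\,\Delta(E_i)$ factor by factor. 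The intra-node relations $[E_i,F_j]$ with $i,j$ in $\{1,2\}$ or $\{3,4\}$ are handled by the lemma exactly as in the computation for $U_q(\mfa)$ below \eqref{Uq:cop}: the tails multiply to $K_i^{\pm2}$ on the diagonal and to $L_i^\pm$ off the diagonal, the mixed $+\ot-$ terms cancel, and one recovers $\Delta$ of the group-like right-hand side. The trivial relations $[E_i,E_j]=[F_i,F_j]=0$ are then immediate from the lemma.

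The crux is the compatibility relation and the two families of quantum Serre relations, which mix the two nodes. For the compatibility relation one applies the lemma once with $x=E_i$, $y=F_{j+2}$ ($i,j\in\{1,2\}$, $i\ne j$); the products $\alpha_x\alpha_y$ and $\beta_x\beta_y$ come out to be precisely the two group-like combinations $U^+V^+(K_i^+K_{j+2}^+)^{\sqb{i}}$ and $U^-V^-(K_i^-K_{j+2}^-)^{\sqb{i}}$ on the right-hand side of \eqref{Uqa:Lie}, so that substituting this right-hand side for $[x,y]$ makes the two mixed contributions cancel and leaves exactly $\Delta$ of the group-like right-hand side. For the Serre relations one applies the lemma twice: writing $c=[E_3,F_2]$, $d=[E_4,F_1]$ (resp.\ $c'=[E_i,F_{i+2}]$, $d'=[E_{j+2},F_j]$), the first application produces even elements with explicit central tails, and the second, applied to the commutator of these two even elements, gives $[c,d]\ot\beta_c\beta_d + \alpha_c\alpha_d\ot[c,d]$. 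The decisive simplification is the collapse of the outer tails: $\alpha_c\alpha_d=K^+$ and $\beta_c\beta_d=K^-$ for the first relation (the $U^\pm$ and $V^\pm$ factors pairing off to $1$), and $\alpha_{c'}\alpha_{d'}=L_i^+L_{j+2}^+$, $\beta_{c'}\beta_{d'}=L_i^-L_{j+2}^-$ for the second by \eqref{KL}; substituting the respective group-like right-hand sides then reproduces $\Delta$ of that right-hand side after the mixed terms cancel.

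The main obstacle is exactly these tail identities. They hold only because the exponents $\sqb{i}=(-1)^{i-1}$ and the placement of $U^\pm$, $V^\pm$ in \eqref{Uqa:Lie} and \eqref{KL} are arranged so that the central prefactors pair off to $1$ (or to the intended $U^{\pm2}V^{\pm2}$) and the unwanted $+\ot-$ terms cancel; equivalently, the quantum Serre and compatibility relations are precisely the relations forcing the coproduct to close. Verifying the theorem therefore reduces to confirming these coincidences by the careful but mechanical bookkeeping described above, and no deeper difficulty arises beyond keeping track of the graded signs and the central-element exponents.
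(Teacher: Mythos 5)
Your proposal is correct and follows exactly the route the paper intends: the paper's own proof is the single line ``The proof follows by a direct computation,'' and your argument is that computation carried out in full (the odd--odd commutator lemma with central group-like tails, the cancellation of the mixed terms, and the verification that the tail products collapse to $K^\pm$, $L_i^\pm L_{j+2}^\pm$ and $U^\pm V^\pm(K_i^\pm K_{j+2}^\pm)^{\sqb{i}}$ as required by \eqref{Uqa:Lie} and \eqref{KL}). I verified the key tail identities, including the use of $\sqb{j}=-\sqb{i}$ for $i\ne j$, and they check out, so your writeup supplies precisely the details the paper omits.
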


\begin{proof}

The proof follows by a direct computation. \qedhere

\end{proof}

\subsection{Evaluation homomorphism} \label{Sec:6.1}

\begin{prop}
There exists a homomorphisms of algebras $ U_q(\widehat\mfa) \to  U_q(\mfa)$ given by
\eq{ \label{evq}
ev_\rho \;\;:\;\; \begin{cases}
E_{i+2} \mapsto -\rho^{-1}(L^+_{j} - L^-_{j}) E_{i} , \qq K^\pm_{i+2} \mapsto K^\mp_{i}, \qu \al_{i+2} \mapsto \al_{i} , \hspace{2.35cm} \\
F_{i+2} \mapsto \rho\, (L^+_{j} - L^-_{j})^{-1} F_{i} , \qq V^\pm \mapsto U^\pm ,  
\end{cases}
}
for $i,j\in\{1,2\}$ and $i\ne j$, where
\spl{ \label{rhoq}
\rho = U^{2}K^+_1 K^-_2 - U^{-2}K^-_1 K^+_2 .
}
\end{prop}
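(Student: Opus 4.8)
The plan is to check that $ev_\rho$ preserves each defining relation of $U_q(\wh\mfa)$ in \eqref{Uqa:Lie}--\eqref{KL}; since $U_q(\wh\mfa)$ is presented by generators and relations, this is exactly what guarantees that $ev_\rho$ extends to an algebra homomorphism into the localization of $U_q(\mfa)$ at the central elements $\rho$ and $L^+_j-L^-_j$. I would begin by noting that $ev_\rho$ fixes $E_1,E_2,F_1,F_2,K^\pm_0,K^\pm_1,K^\pm_2,U^\pm$ and hence acts as the identity on the subalgebra they generate; consequently every relation of \eqref{Uqa:Lie} written only in these generators (the $i,j\in\{1,2\}$ block of the second line, and the $K^\pm_0$-relations for $i\in\{1,2\}$) is automatically respected, as are the trivial relations, whose $U_q(\mfa)$-counterparts (centrality, $E_i^2=F_i^2=0$, anticommutativity of the odd generators) hold. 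The structural feature that makes the remaining checks mechanical is that the prefactors $\rho^{\pm1}$ and $(L^+_j-L^-_j)^{\pm1}$ of \eqref{evq} are \emph{even and central}: they commute with everything and factor out of any graded bracket, while the inserted odd generators keep each graded bracket of images equal to a central multiple of the corresponding bracket in $U_q(\mfa)$ governed by \eqref{Uq:Lie}.

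For the relations carrying the affine generators $E_{i+2},F_{i+2},K^\pm_{i+2},V^\pm$ I would proceed block by block. The invertibility relations and the $K^\pm_0$-relations for $i=3,4$ follow at once from $K^\mp_iK^\pm_i=1$, $U^\pm U^\mp=1$ and centrality of the prefactors. For the second-block relations ($i,j\in\{3,4\}$) I factor out the central prefactors, reducing each $[E_{i+2},F_{j+2}]$ to a scalar multiple of $[E_i,F_j]$ and invoking \eqref{Uq:Lie}. Together with the images $K^\pm_{i+2}\mapsto K^\mp_i$ this gives $[E_3,F_3]\mapsto-[E_1,F_1]=(K^{-2}_1-K^{+2}_1)/(q-q^{-1})$, matching the image $(K^{+2}_3-K^{-2}_3)/(q-q^{-1})$ of the corresponding right-hand side; the mixed cases $[E_3,F_4]$ and $[E_4,F_3]$ match their right-hand sides $\al_3(L^+_3-L^-_3)/(q-q^{-1})$ and $\al_4(L^+_4-L^-_4)/(q-q^{-1})$ via the identities $L^+_3-L^-_3\mapsto-(L^+_2-L^-_2)$ and $L^+_4-L^-_4\mapsto-(L^+_1-L^-_1)$, which are read off from \eqref{KL} and \eqref{I0q}.

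The compatibility relation in \eqref{Uqa:Lie} is the step that forces the definition \eqref{rhoq} of $\rho$. For $(i,j)=(1,2)$, since $E_1$ is fixed and $F_4\mapsto\rho\,(L^+_1-L^-_1)^{-1}F_2$, the central factor pulls out and the $L$-factors cancel against $[E_1,F_2]=\al_1(L^+_1-L^-_1)/(q-q^{-1})$, yielding $[E_1,F_4]\mapsto\al_1\rho/(q-q^{-1})$. The right-hand side $\al_1\big(U^+V^+(K^+_1K^+_4)^{\sqb{1}}-U^-V^-(K^-_1K^-_4)^{\sqb{1}}\big)/(q-q^{-1})$ maps, through $K^\pm_4\mapsto K^\mp_2$ and $V^\pm\mapsto U^\pm$, to $\al_1\big(U^{+2}K^+_1K^-_2-U^{-2}K^-_1K^+_2\big)/(q-q^{-1})$, which equals $\al_1\rho/(q-q^{-1})$ exactly by \eqref{rhoq}; the case $(i,j)=(2,1)$ is the same after commuting the central $K$'s. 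Thus $\rho$ is pinned down precisely so that this relation is preserved.

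The main, if short, obstacle is the quantum Serre relation $[[E_3,F_2],[E_4,F_1]]=(K^+-K^-)/(q-q^{-1})$ together with $[[E_i,F_{i+2}],[E_{j+2},F_j]]=(L^+_iL^+_{j+2}-L^-_iL^-_{j+2})/(q-q^{-1})$, where two independent collapses must coincide. First, each inner commutator maps to a \emph{central} element of $U_q(\mfa)$: for example $[E_3,F_2]\mapsto-\rho^{-1}(L^+_2-L^-_2)[E_1,F_2]$ and $[E_4,F_1]\mapsto-\rho^{-1}(L^+_1-L^-_1)[E_2,F_1]$ are central by \eqref{Uq:Lie}, so the outer graded commutator vanishes and the left-hand sides map to $0$. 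The right-hand sides must then vanish too, and they do: since $K^\pm_{i+2}\mapsto K^\mp_i$, the total product $K^+=\prod_lK^+_l\mapsto K^+_1K^+_2K^-_1K^-_2=1$ and $K^-\mapsto1$, so $K^+-K^-\mapsto0$; likewise $L^+_1L^+_4\mapsto U^{+2}K^+_1K^+_2\,U^{-2}K^-_1K^-_2=1=L^-_1L^-_4$ by \eqref{KL}. The delicate point is that these two cancellations hinge on the exact sign exponents $\sqb{i}$ in \eqref{KL}; confirming that the algebraic vanishing of the double commutators and the telescoping of the central right-hand sides to the unit happen simultaneously is the heart of the verification, after which all remaining instances follow by the $1\leftrightarrow2$ symmetry of the construction.
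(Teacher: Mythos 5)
Your proposal is correct and follows essentially the same route as the paper: a relation-by-relation check in which the central prefactors $\rho^{\pm1}$ and $(L^+_j-L^-_j)^{\pm1}$ factor out of every graded bracket, the compatibility relation $[E_i,F_{j+2}]$ is the computation that forces the formula \eqref{rhoq} for $\rho$, and the quantum Serre relations hold because both sides trivialize ($ev_\rho(K^\pm)=1$, $ev_\rho(L^\pm_i L^\pm_{j+2})=1$, and the inner brackets map to central elements so the double commutators vanish). You are in fact slightly more explicit than the paper on the points it labels ``easy to see'' (the vanishing of the double brackets, the mixed case $[E_3,F_4]$ via $L^+_3-L^-_3\mapsto-(L^+_2-L^-_2)$, and the need to localize at $\rho$ and $L^+_j-L^-_j$), which is all consistent with its argument.
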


\begin{proof}
It is easy to see that $ev_\rho ([E_{i+2} , F_{i+2}])=[ev_\rho(E_{i+2}),ev_\rho(F_{i+2})]$ and the same is true for the quantum Serre relations, since $ev_\rho(K^\pm)=1$ and $ev_\rho(L^\pm_{i,i+2}L^\pm_{j+2,j}) =1$, which can be deduced from \eqref{KL} (and is true only if $ev_\rho(V^\pm)=U^\pm$ and $ev_\rho(K^{\pm}_{3} K^{\pm}_{4})=K^\mp_{1} K^\mp_{2}$). 
Now consider the compatibility relation. We have
\eq{
ev_\rho([E_i,F_{j+2}])=\al_i\,\frac{U^{2} K^{+\sqb{i}}_i K^{-\sqb{i}}_{j} - U^{-2} K^{-\sqb{i}}_i K^{+\sqb{i}}_{j}}{q-q^{-1}} = \al_i\,\frac{\rho}{q-q^{-1}} 
}
and
\eq{
[ev_\rho(E_i),ev_\rho(F_{j+2})]= \frac{\rho}{L^+_i-L^-_i}\,[E_i, F_{j}] = \al_i\,\frac{\rho}{q-q^{-1}} .
}
Finally,
\eq{
ev_\rho([E_{i+2},F_{j+2}]) = \al_i\, \frac{U^{+2\sqb{i}} K^{-}_i K^{-}_{j} - U^{-\sqb{i}}K^{+}_i K^{+}_{j}}{q-q^{-1}} =\al_i\, \frac{ L^-_{j} - L^+_{j}}{q-q^{-1}} 
}
and
\eq{
[ev_\rho(E_{i+2}),ev_\rho(F_{j+2})] = -\frac{L^+_{j} - L^-_{j}}{L^+_{i} - L^-_{i}}\,[E_i, F_{j}] = \al_i\,\frac{L^-_{j} - L^+_{j}}{q-q^{-1}} ,
}
which agree with each other, as required.
\end{proof}

It is a direct computation to check that $R$-matrix \eqref{Rq} intertwines evaluation representations of $U_q(\widehat\mfa)$, namely
\eq{
(\pi \ot \pi')(ev_\rho \ot ev_\rho)(\Delta(a))\, R_q(\ga,\nu;\ga',\nu')=R_q(\ga,\nu;\ga',\nu')\,(\pi \ot \pi')(ev_\rho \ot ev_\rho)(\Delta^{\rm op}(a))
}
for all~$a\in U_q(\widehat\mfa)$.

\begin{rmk}
[1] The affinization of $ U_q(\mfa)$ presented above is unique up to an isomorphism. For example, one could choose the additional relations in \eqref{Uqa:Lie} to be
\eqn{
& [[E_3,F_1],[E_4,F_2]] = \frac{K^{+} - K^{-}}{q-q^{-1}}, \qu [[E_i,F_{j+2}],[E_{i+2},F_{\tilde\imath}]] = \frac{L^+_{i} L^+_{i+2} - L^-_{i} L^-_{i+2}}{q-q^{-1}} \tx{and} \\[0.25em]
& [E_i,F_{i+2}] = \al_i\,\frac{U^{+}V^{-} K^{+\sqb{i}}_i K^{+\sqb{i}}_{i+2} - U^{-}V^{+} K^{-\sqb{i}}_i K^{-\sqb{i}}_{j+2}}{q-q^{-1}}  \tx{for} i,j \in\{1,2\} , \; i\ne j ,
}
leading to
\spl{
& ev_\rho \;\;:\;\;\begin{cases}
E_{i+2} \mapsto -\rho^{-1}\,(L^+_{i}-L^-_{i}) E_{j} , \qq K^\pm_{i+2} \mapsto K^\mp_{j}, \qu \al_{i+2} \mapsto \al_{j} ,\\
F_{i+2} \mapsto \rho (L^+_{i}-L^-_{i})^{-1} F_{j} , \qq\; V^\pm \mapsto U^\mp .
\end{cases}
}

(2){\bf.} Let $\beta\in\C$ be such that $\beta^2=1$. Then one could substitute the map $V^\pm \mapsto U^\pm$ by $V^\pm \mapsto \beta U^\pm$ in \eqref{evq}, since $U^\pm$ only appear squared in the algebra $ U_q(\mfa)$ (via \eqref{I0q}).
\end{rmk}


\enlargethispage{0.25em}

\section{Conclusions and outlook}

In this paper we have demonstrated novel algebraic structures that are inspired by the $AdS_3/CFT_2$ duality. The main results are the $u$-deformed Yangian $\mc{Y}(\mfa)$ presented in Section 3, the double-deformed quantum affine algebra $ U_q(\widehat\mfa)$ presented in Section 5. The main goal of this study was to construct infinite dimensional deformed superalgebras and double-deformed superalgebras, and obtain their evaluation modules that could further be studied using similar methods to the ones in \cite{HZh,RZh1,RZh2}. We also showed that the $R$-matrix of $U(\mfa)$ can be written in an elegant trigonometric form. (However we were unable to find an elegant parametrization of the deformed $R$-matrix; we leave this question for a further study.) This allow us to study the spectral problem of the $AdS_3/CFT_2$ duality, addressed in \cite{BDS,BSS2,BSZ,SS}, using algebraic methods developed in \cite{BFLMS,BR,MaMe,RS} in the Yangian case, and in \cite{HJ,FH} in the quantum deformed case. We will address these questions in a future publication.

There is a number of important recent developments related the integrability of the $AdS_3/CFT_2$ duality for various backgrounds with mixed NS-NS and R-R fluxes \cite{BSSS1,BSSS2,BSSST1,HPT,HRT,HT,LSSS}. In many cases the underlying symmetry of the worldsheet scattering is the centrally extended $\mfsl({1|1})^2$ or $\mfsl({1|1})^4$ superalgebra, and the complete worldsheet $R$-matrix has a certain block structure, with building blocks in many instances being the four-dimensional $\mfsl({1|1})^2\op\C u^\pm$--symmetric $R$-matrices given by \eqref{R:trig} and \eqref{R:trig-2}, that describe scattering of appropriate species of the worldsheet magnons. The trigonometric parametrization of the $R$-matrix reveals some properties of the scattering that are not that obvious in the traditional notation (which uses the $x^\pm$ variables introduced in \cite{Be1}). Let us illustrate this. In terms of the terminology used in \cite{BSSS2}, the eigenvalues of the magnon Hamiltonian $H=h_1+h_2$ and angular momentum $M=h_1-h_2$ operators are
$$
H(p) = -4h \sin(2\varLambda_p)\sin(2\theta_p) , \qq 
M(p) = 4ih\cos(2\varLambda_p) \sin(2\theta_p) ,
$$
where $\varLambda_p$ is what we call the pseudorapidity, and $\theta_p (= p_{\rm w.s.}/4)$ is the worldsheet momentum of an individual magnon. Since $R(\theta_1,\theta_2,\varLambda)=P$ only if $\theta_1=\theta_2$ and $\varLambda(=\varLambda_1-\varLambda_2)=0$ (in the principal region), the transmission channel is only allowed for magnons having the same pseudorapidity. The modes having no angular momentum, $M(p)=0$, (i.e.~massless modes in backgrounds with no flux) are obtained by setting the pseudorapidity to $\varLambda_p = \pi/4$ or $\varLambda_p = 3\pi/4$ (in the principal region). Their scattering is described by $R(\theta_1,\theta_2,\pi/2)$ and $R(\theta_1,\theta_2,0)$ (this, for example, exactly reproduces the two branches of scattering of massless modes). 

There are several other possible directions of further study. First, it would be interesting to generalize the constructions presented in this paper for centrally extended superalgebras $\mfsl({1|1})^n\op\C^n$, with $n>2$ for different ``linkings'' by extending \eqref{a:Lie}, namely $[e_i,f_j] = \delta_{ij} h_i + a_{ij} (1-\delta_{ij}) k_i$, where $(a_{ij})_{1\le i,j\le n}$ is the matrix of a connected graph and $[\cdot\hspace{.5mm},\cdot]$ denotes the graded commutator; the question we want to ask is what types of graphs lead to ``interesting'' $u$-deformed Hopf algebras having a $u$-deformed coproduct \eqref{a:cop}. Then we would like to compare the representations of these new algebras with the classification obtained in \cite{Kac}. Second, it would be interesting to construct a Drinfeld New presentation of $ U_q(\widehat\mfa)$ following the construction presented in \cite{He}. Third, a similar superalgebra $\mfsl({1|1})\op{\C u^{\pm}}$ emerges in the $AdS_2 \times S^2$ duality \cite{HPT,GH}. We hope the present paper will serve as a guideline for analogous constructions in this duality. Moreover, the $u$-deformed algebras are known to have additional so-called ``secret symmetries'' \cite{MMT,LMMRT,PTW}. The role of the ``secret symmetry'' of the Yangian $\mcY(\mfa)$ is played by the generating function $h_0(u)$; it would be interesting to find its analogue for the quantum affine algebra $ U_q(\widehat\mfa)$. Following the arguments presented in \cite{LRT}, it is natural to expect that there exist two ``secret symmetry'' extensions of $ U_q(\widehat\mfa)$ that in the evaluation representation are symmetries of (i.e.~intertwine with) the deformed $R$-matrix. Lastly, it is well known that both $AdS_5/CFT_4$ spin chain and its $q$-deformed model are closely related to the one-dimensional Hubbard model \cite{Be2,MaMe,RG} and its deformation \cite{BK,BGM}; we believe that the $AdS_3/CFT_2$ spin chain and its deformation can also be linked to the one-dimensional Hubbard model or some generalization thereof ({\it e.g.\@} \cite{FFR}). For example, one could consider the quotient of \eqref{a:Lie} by the ideal $<e_i-a_i,f_i-a^\dag_i,h_i-1,k_i-\al_i>$, with $\al_i\in\C$, giving the algebra $[a_i,a_j]=0$, $[a^\dag_i,a^\dag_j]=0$, $[a_i,a_j^\dag]=\del_{ij} + (1-\del_{ij})\al_i$, which can be interpreted as an ``algebra of interacting electrons''. \smallskip

{\it Acknowledgements.} The author thanks Andrea Prinsloo, Eric Ragoucy and Alessandro Torrielli for many useful suggestions and comments. Part of this work was done during the author’s visit to the Centro di Ricerca Matematica (CRM) Ennio De Giorgi for the research program ``Perspectives in Lie Theory''. The author thanks the CRM Ennio De Giorgi for the hospitality and support, and also gratefully acknowledges the UK EPSRC for the Postdoctoral Fellowship under the grant EP/K031805/1 ``New Algebraic Structures Inspired by Gauge/Gravity Dualities''.


\appendix


\section{Computing $R$-matrix}

\subsection{R-matrix} \label{App:R}

Let $R\in\End(V\ot V')$ be an arbitrary $4\times4$ matrix with elements $r_{ij}$ and $1\le i,j\le4$. We need to solve the intertwining equation
\eq{ \label{intw}
(\pi \ot \pi')(\Delta^{\rm op}(a))\, R = R\, (\pi \ot \pi')(\Delta(a)) 
}
for all $a\in U(\mfa)$.	Since the tensor product of two atypical modules is isomorphic to the typical module we can restrict the matrix $R$ to
\eq{ \label{R0}
R = r_{11} E_{11}\ot E_{11} + r_{22} E_{11}\ot E_{22} + r_{23} E_{12}\ot E_{21} + r_{32} E_{21}\ot E_{12} + r_{33} E_{22}\ot E_{11} + r_{44} E_{22}\ot E_{22}.    
}
Let $a=e_1$. Then \eqref{intw} is equivalent to the following set of linear equations:
\eqn{
\ga' \nu'(r_{11}-\nu ^2 r_{22})+\ga \nu r_{32} &= 0, \qq
\ga\nu({\nu}^{\prime2} r_{22} - r_{44}) - \ga' \nu' r_{32} = 0, \\
\ga (\nu^{\prime2} r_{11} - r_{33}) - \ga' \nu \nu' r_{32} &= 0 , \qq
\ga' (r_{33}- \nu^2 r_{44}) + \ga\nu \nu' r_{23} = 0 ,
}
having a solution
$r_{23} = {\ga'\nu'(\ga\nu)^{-1}(\nu^2 r_{22}-r_{11})}$, $r_{33} = \nu'(\nu'r_{11} - \ga^{-1}\ga'\nu r_{32})$, $r_{44} = \nu^{\prime2} r_{22} - \ga'\nu' (\ga\nu)^{-1} r_{32}$. 
Next, let $a=e_2$. Now \eqref{intw} gives
\eqn{
 (\ga^{\prime2} \nu^{\prime2}-\ga^2 \nu^2)r_{11} +  (\ga^2 -\nu^2 \ga^{\prime2} \nu^{\prime2}) r_{22} &= 0, \\
\nu'(\ga^2 \nu^2-\ga^{\prime2} \nu^{\prime2}) r_{32} + \ga \ga'\nu (\nu^{\prime4}-1) r_{22} &= 0, \\
\nu'(\ga^2 - \nu^2 \ga^{\prime2} \nu^{\prime2}) r_{32} + \ga \ga'\nu (\nu^{\prime4}-1) r_{11} &=0, \\ 
\ga^2 \nu \nu^{\prime2} (r_{22}-\nu^2 r_{11})+ (\nu^4-1) \ga\ga'\nu'r_{32}+\ga^{\prime2}(\nu r_{11} -\nu^3r_{22}) &= 0,
}
the solution of which is 
$$
r_{22} = (\ga^2\nu^2-\ga^{\prime2} \nu^{\prime2})(\ga^2 - \ga^{\prime2}\nu^2 \nu^{\prime2})^{-1}r_{11},\qu 
r_{32} =(\ga\ga'\nu (\nu^{\prime4} - 1))(\ga^{\prime2} \nu^2 \nu^{\prime3}-\ga^2\nu')^{-1}r_{11}.
$$
Finally, upon setting $r_{11}=\ga'\ga^{-1}\nu\nu'-\ga(\ga'\nu\nu')^{-1}$, we obtain \eqref{R}. It remains to check that \eqref{intw} holds when $a=f_i$, which follows by similar computations. 
%


\subsection{$q$-deformed R-matrix} \label{App:Rq}

This time we need to solve the intertwining equation \eqref{intw} for all $a\in U_q(\mfa)$. We restrict the matrix $R$ to the form given in \eqref{R0} and choose $a=E_1$. Then \eqref{intw} gives
\eqn{
q^{-\frac{\la_1}2} \ga'\nu'(q^{\la_1} \nu^2 r_{22} - r_{11}) - q^{-\frac{{\la}'_1}2} \ga \nu r_{23} &= 0, \qq
\ga( q^{{\la}'_1} \nu^{\prime2} r_{11} - r_{33}) - q^{\frac{\la_1+\la'_1}2} \ga'\nu \nu' r_{32} = 0, \\
q^{-\frac{\la_1}2} \ga'\nu' r_{32} + q^{-\frac{\la'_1}2} \ga \nu r_{44} - q^{\frac{\la'_1}2}  \ga \nu \nu^{\prime2} r_{22} &= 0 , \qq
q^{\frac{{\la}_1+\la'_1}2} \ga \nu \nu'r_{23} + \ga'(r_{33} - q^{\la_1} \nu^2 r_{44} ) = 0 ,
}
having a solution 
\begin{gather*}
r_{23} = q^{-\frac{\la_1-\la'_1}2} \, \ga'\nu'(\ga \nu)^{-1} \left( q^{\la_1} \nu^2 r_{22} - r_{11}\right), \qu
r_{33} = q^{\la'_1} \nu^{\prime2} r_{11} - q^{\frac{\la_1+\la'_1}2} \ga'\ga^{-1} \nu \nu' r_{32}, \\
r_{44} = q^{\la'_1} \nu^{\prime2} r_{22} - q^{\frac{\la'_1-\la_1}2}\, \ga'\nu'(\ga\nu)^{-1} r_{32}.
\end{gather*}
Now let $a=E_2$. Then \eqref{intw} for the remaining elements gives
\eqn{
 \left( q^{\frac{\la'_1-\la'_2-\la_1}2} \ga^{\prime2} \ga^{-2} \nu^{\prime2} - q^{-\frac{\la_2}{2}} \nu^2 \right) r_{11} + \left(q^{\frac{\la_2}{2}} - q^{\frac{\la'_1-\la'_2+\la_1}2  }\ga^{\prime2} \ga^{-2} \nu^2 \nu^{\prime2} \right) r_{22} &= 0, \\[0.25em]
q^{\frac{\la'_2+\la_2}2} \ga^2 \nu r_{32} - q^{\frac{\la'_1+\la_1}2} \ga^{\prime2}  \nu^2 \nu^{\prime3} r_{32} - \ga \ga'\nu \big(q^{\la'_2}-\nu^{\prime4} q^{\la'_1}\big) r_{11}  &= 0, \\[0.25em]
 \left( q^{-\frac{\la_2}{2}} \ga^{\prime-1} \nu  -  \ga^{-2} \ga'\nu^{-1} \nu^{\prime2} q^{\frac{1}{2} \left(\la'_1-\la'_2-\la_1\right)}\right) r_{32} -  q^{-\frac{\la'_2}{2}} \ga^{-1} {\nu }^{\prime-1} \big(q^{\la'_2}-\nu^{\prime4} q^{\la'_1}\big) r_{22} &= 0, \\[0.25em]
\ga \ga'\nu'\left(q^{\la_2}-\nu ^4 q^{\la_1}\right) r_{32} + q^{\frac{\la'_1+\la_1}2} \ga^2 \nu \nu^{\prime2}  \left(\nu^2 r_{11}- q^{\la_2} r_{22} \right) + q^{\frac{\la'_2+\la_2}2} \ga^{\prime2}  \nu \left(q^{\la_1} \nu^2 r_{22} - r_{11}\right) &=0 ,
}
the solution of which is
$$
r_{22} = \frac{ \ga^2 \nu^2 q^{\frac{\la'_2+\la_1}2}-\ga^{\prime2} \nu^{\prime2} q^{\frac{\la'_1+\la_2}2}}{\ga^2 q^{\frac{\la'_2+\la_1+2 \la_2}2}-\nu^2 \ga^{\prime2} \nu^{\prime2} q^{\frac{\la'_1+2 \la_1+\la_2}2}} \, r_{11}, \qq
r_{32} = \frac{\ga \nu \ga'\big(q^{\la'_2}-\nu^{\prime4} q^{\la'_1}\big)}{\ga^2 \nu' q^{\frac{\la'_2+\la_2}2} - \ga^{\prime2} \nu^2 \nu^{\prime3} q^{\frac{\la'_1+\la_1}2}} \, r_{11}  .
$$
Then, upon setting $r_{11}=q^{\frac{\la_1-\la'_2}2} \ga'\ga^{-1}\nu\nu' - q^{\frac{\la_2-\la'_1}2} {\ga}({\ga'\nu\nu'})^{-1}$, we obtain \eqref{Rq}. It remains to check that \eqref{intw} holds when $a=F_i$, which follows by a lengthy but direct computations and the usage of the identities \eqref{mumu} and \eqref{gaga} for parameters $\ga$ and $\ga'$.


\section{Traditional notation} \label{App:AdS}

We briefly recall the traditional notation used to describe the atypical module of $U(\mfa)$ in the literature on $AdS_3/CFT_2$ duality. We will mostly refer to \cite{BSS1,BSSS2}, where atypical modules are conveniently called short representations. Depending on the choice of the grading of vectors $w_0$ and $w_1$ ({\it c.f.} \eqref{atypical-mod}) these will be called left-moving or right-moving modules. We will use barred notation to describe the right-moving module, i.e.~$\bar w_0$, $\bar w_1$, etc.~(as in \cite{BSS1}; a tilde notation and subsripts {\tiny L} and {\tiny R} are used in \cite{BSSS2} instead). Note that generators of $U(\mfa)$ in the traditional notation can be identified with our notation by (see Appendix B: {\it most symmetric frame} in \cite{BSSST1})
$$
e_1 = \mf{Q}_{L} , \qu\! f_1 = \mf{S}_{L}, \qu\! e_2 = \mf{S}_{R}, \qu\! f_2 = \mf{Q}_{R}, \qu\! h_1 = \mf{H}_{L} , \qu\! h_2 = \mf{H}_{R} , \qu\! k_1 = \mf{P} , \qu\! k_2 = \mf{P}^\dagger , \qu\! u^{\pm} = e^{\pm i\frac{p}{4}}.
$$

\subsection{Left-moving module} \label{sec:left}

Consider the atypical module $A(\la_1,\la_2,\nu)={\rm span}_\C\{w_1,w_0\}$ defined in \eqref{atypical-mod} and choose $\deg_2 w_1=0$ and $\deg_2 w_0=1$ (this corresponds to the same setup as in Section \ref{Sec:2.6}). We introduce the notation 
\eq{
|\phi_p\rangle = \ga d_p w_1, \qq |\psi_p\rangle = w_0 , \label{left-basis}
}
for some $d_p \in \C^\times$, where $p=p_{\rm w.s.}$ denotes the worldsheet momentum. Vectors $|\phi_p\ran$ and $|\psi_p\ran$ are interpreted as bosonic and fermionic left-moving worldsheet magnons, respectively (see e.g.~\cite[Sec.~3.2]{BSS1}). Set $a_p = \ga^2 d_p$, $b_p = \mu_2/d_p$ and $c_p = \ga^{-2}\mu_1/d_p$. It follows from \eqref{mod} that 
\eqa{ \label{lmod}
e_1 \,|\phi_p\rangle = a_p \,|\psi_p\rangle , \qu 
f_1 \,|\psi_p\rangle = b_p \,|\phi_p\rangle , \qu
f_2 \,|\psi_p\rangle = c_p \,|\phi_p\rangle , \qu
e_2 \,|\phi_p\rangle = d_p \,|\psi_p\rangle ,
}
and we recover the parametrization used in \cite[Sec.~4.1]{BSS1}. Recall that $\la_1= \ga^2\mu_2 $, $\la_2= \ga^{-2}\mu_1$ and  $\mu_i = \al_i(\nu^2 - \nu^{-2})$ for $i\in\{1,2\}$. By requiring the module to be unitary we find $a_p^*= b_p$, $c_p^*=d_p$, $\nu^*=\nu^{-1}$ and $\mu_i^* = \mu_j$ for $i,j\in\{1,2\}$ and $i\ne j$. Fix $p,M \in \C$, where $M$ denotes the angular momentum of a magnon (see e.g.~\cite[Sec.~2.4.2 and Sec.~4.2]{BSSS2} for the description of $M$ for massive and massless modes) and introduce parameters $x^\pm_p \in \C^\times$ (usually called Zhukovski variables, see \cite{Be1}) satisfying
\eq{
\frac{x^+_p}{x^-_p} = e^{\sqrt{-1}\,p} , \qq
x^+_p + \frac{1}{x^+_p} - x^-_p - \frac{1}{x^-_p} =  \frac{\sqrt{-1}\,M}{h} , \label{mass-shell}
}
where $h\in\C$ is identified with the coupling constant of the model. 
Then, without loss of generality, we can choose $\al_1=-\al_2=-h$ and $\nu^4 = \frac{x^+_p}{x^-_p}$ such that $h^*=h$ and $(x^\pm_p)^*=x^\mp_p$. This gives 
\eq{ \label{param1}
\mu_1 \,(=a_p c_p)= h \nu^2\left(\frac{x^-_p}{x^+_p}-1\right)  , \qq \mu_2 \,(=b_p d_p)= h \nu^{-2}\left(\frac{x^+_p}{x^-_p}-1\right) .
}
Set $\eta_p^2=\sqrt{-1}\,(x^-_p - x^+_p)$. Then the parametrization
\eq{ \label{abcd}
a_p = \sqrt{h}\, \eta_p \,\nu_p , \qq 
b_p = \sqrt{h}\, \frac{\eta_p}{\nu_p}, \qq 
c_p = - \sqrt{-h}\, \frac{\eta_p \nu_p}{x^+_p} , \qq 
d_p = \sqrt{-h}\, \frac{\eta_p }{x^-_p \nu_p} .  
}
satisfies the required constraints, since $\eta_p^* = \eta_p$. (Here we have denoted $\nu_p=\nu$ for homogeneity of the notation.) Thus we find that
\eq{ \label{param2}
\ga^2(= a_p/d_p) = - \sqrt{-1} \, \nu^2_p x^-_p , \;\;
\la_1 (=a_p b_p)= \sqrt{-1} h\, (x^-_p - x^+_p ) , \;\; 
\la_2 (=c_p d_p) = \sqrt{-1} h \left( \frac{1}{x^+_p} - \frac{1}{x^-_p}\right) \!,\!
}
which can be used to rewrite the $R$-matrix \eqref{R} (or equivalently \eqref{R:trig}) in the traditional notation. Note that constraint $\la_1\la_2 = \mu_1 \mu_2$ becomes trivial in this parametrization.


\subsection{Right-moving module} \label{sec:right}

Consider the atypical module $A(\bar\la_1,\bar\la_2,\bar\nu)={\rm span}_\C\{\bar w_1,\bar w_0\}$, such that $\deg_2 \bar w_0=0$ and $\deg_2 \bar w_1=1$. Introduce the notation 
\eq{
|\bar\psi_p\rangle = \bar\ga\bar b_p \bar w_1, \qq |\bar\phi_p\rangle = \bar w_0 ,  \label{right-basis}
}
for some $\bar b_p \in \C^\times$. Vectors $|\bar \phi_p\ran$ and $|\bar \psi_p\ran$ can be interpreted as bosonic and fermionic right-moving worldsheet magnons, respectively. Similarly as before, we introduce parameters $\bar x^\pm_p$ and angular momentum $\bar M$ satisfying analogous relations to those in \eqref{mass-shell}. Then, by requiring an analogous relation to \eqref{param1} to hold, namely $[e_1,f_2]\,|\bar\varphi_p\rangle = \bar a_p \bar c_p\,|\bar\varphi_p\rangle$ and $[e_2,f_1]\,|\bar\varphi_p\rangle= \bar b_p \bar d_p\,|\tl\varphi_p\rangle$ for $\bar\varphi\in\{\bar\phi,\bar\psi\}$, we find
\spl{ \label{rmod}
e_1 \,|\bar\psi_p\rangle = \bar c_p \,|\bar\phi_p\rangle , \qu 
f_1 \,|\bar\phi_p\rangle = \bar d_p \,|\bar\psi_p\rangle , \qu
f_2 \,|\bar\phi_p\rangle = \bar a_p \,|\bar\psi_p\rangle , \qu
e_2 \,|\bar\psi_p\rangle = \bar b_p \,|\bar\phi_p\rangle ,
}
where parameters $\bar a_p$, $\bar b_p$, $\bar c_p$, $\bar d_p$ have the same explicit form as those in \eqref{abcd} except $x^\pm_p$ are substituted by $\bar x^\pm_p$, and similarly for $\eta_p$ and $\nu_p$ (as in \cite[Sec.~4.2]{BSS1}). Next, by comparing the expression above with \eqref{mod}, we find that $\bar c_p = \bar\ga^2\,\bar b_p$, $\bar d_p = \bar\mu_2 / \bar b_p$, $\bar a_p = \bar\ga^{-2}\bar \mu_1 / \bar b_p$ and thus 
\eq{ \label{param3}
\bar\ga^2 (= \bar c_p/\bar b_p) = -\sqrt{-1} \frac{\bar\nu^2_p}{\bar x^+_p}, \qu 
\bar \la_1 (=\bar  c_p \bar d_p)= \sqrt{-1}h \left( \frac{1}{\bar x^+_p} - \frac{1}{\bar x^-_p}\right) , \qu 
\bar\la_2 (=\bar a_p \bar b_p)= \sqrt{-1}h (\bar x^-_p - \bar x^+_p ) .
}
which together with \eqref{param2} can be used to write $R$-matrices given by \eqref{R:trig-2} in the traditional notation. Also, as before, the constraint $\bar\la_1\bar\la_2 = \bar\mu_1 \bar\mu_2$ is trivial in this parametrization.


\section{Traditional deformed notation} \label{App:q-AdS}

Here we present the traditional notation (as in \cite{BGM,Ho}) that can be used to describe the atypical module of the $q$-deformed algebra $ U_q(\mfa)$. Generators of $ U_q(\mfa)$ in the traditional notation (as in \cite{Ho}) can be identified with our notation by $U^{\pm2} = \mf{U}^{\pm1}$ and
\eqn{
& E_1 = (\mf{U}\,\mf{V}_{\sm{$L$}})^{-\frac{1}{2}}\mf{Q}_+, \qu F_1 = (\mf{U}\,\mf{V}_{\sm{$L$}})^{\frac{1}{2}}\mf{S}_-, \qu K^{\pm}_1 = \mf{V}^{\pm\frac{1}{2}}_{\sm{$L$}}, \qu \al_1(L^+_1 - L^-_1) = (q-q^{-1})\,\mf{U}^{-1}(\mf{V}_{\sm{$L$}}\mf{V}_{\sm{$R$}})^{-\frac{1}{2}}\mf{P},\\
& E_2 = (\mf{U}\,\mf{V}_{\sm{$R$}})^{\frac{1}{2}}\mf{S}_+, \qu F_2 = (\mf{U}\,\mf{V}_{\sm{$R$}})^{-\frac{1}{2}}\mf{Q}_-, \qu  K^{\pm}_2 = \mf{V}^{\pm\frac{1}{2}}_{\sm{$R$}}, \qu \al_2(L^+_2 - L^-_2) = (q-q^{-1})\,\mf{U}^{+1}(\mf{V}_{\sm{$L$}}\mf{V}_{\sm{$R$}})^{\frac{1}{2}}\mf{K}.
}

\subsection{Left-moving module} \label{sec:qleft}

Consider the atypical module $A_q(\la_1,\la_2,\nu)={\rm span}_\C\{w_0,w_1\}$ and choose $\deg_2 w_1=0$ and $\deg_2 w_0=1$. Following the steps in Section \ref{sec:left} we define new vectors
$|\phi_p\rangle= \ga d_p w_1$ and $|\psi_p\rangle= w_0$ for some $d_p\in\C^\times$. Set $a_p = \ga^2 d_p$, $b_p = \al_2[\mu_2]_q/d_p$ and $c_p = \al_1 \ga^{-2}[\mu_1]_q/d_p$. It follows from \eqref{qmod} that 
\spl{ \label{lqmod}
E_1 \,|\phi_p\rangle = a_p \,|\psi_p\rangle , \qu 
F_1 \,|\psi_p\rangle = b_p \,|\phi_p\rangle , \qu
F_2 \,|\psi_p\rangle = c_p \,|\phi_p\rangle , \qu
E_2 \,|\phi_p\rangle = d_p \,|\psi_p\rangle .
}
Denote $\sigma = q^{(\la_1+\la_2)/4}$ and $\del = \la_1 - \la_2$, and set $\al_1=\al_2=h$. We have ($i,j\in\{1,2\}$, $i\ne j$)
\eq{ \label{qmod-cons}
[E_i, F_i]\,|\varphi_p\rangle = [\la_i]_q\, |\varphi_p\rangle, \qu\; 
[E_i, F_{j}]\,|\varphi_p\rangle = h\, [\mu_i]_q\, |\varphi_p\rangle 
\;\tx{for}\; \varphi_p \in \{\phi_p,\psi_p\}.
}
Thus the representation labels must satisfy the following set of identities
\spl{ \label{q:cons}
a_p b_p \,(= [\la_1]_q) &= \frac{q^{\del/2}\si^2 - q^{-\del/2}\si^{-2}}{q-q^{-1}} , \qq a_p c_p \,(= \al_1[\mu_1]_q) = h\, \frac{\nu^2\si^2 - \nu^{-2}\si^{-2}}{q-q^{-1}} ,\\
c_p d_p \,(= [\la_2]_q) &= \frac{q^{-\del/2}\si^2 - q^{\del/2}\si^{-2}}{q-q^{-1}} , \qq b_p d_p \,(= \al_2[\mu_2]_q) = h\, \frac{\nu^{-2}\si^{2} - \nu^{2}\si^{-2}}{q-q^{-1}} . 
}
Moreover, the module shortening constraint given in \eqref{gaga} becomes
\eq{ \label{q:short}
h^2 \left(\nu^2\si^2 - \nu^{-2}\si^{-2} \right)\left( \nu^{-2}\si^{2} - \nu^{2}\si^{-2}\right) = \left( \si^2 - q^{-\del}\si^{-2} \right)\left( \si^2 - q^{\del}\si^{-2} \right).
}
Inspired by \cite{BGM} we choose the following $x^\pm$-parametrization:
\spl{ \label{q:vs}
\nu^4 = q^\del \frac{x^+}{x^-}\frac{\xi x^- + 1}{\xi x^+ + 1} = q^{-\del} \frac{x^+ + \xi}{x^- + \xi} , \qq \si^4 = q^\del \frac{x^+}{x^-}\frac{x^- + \xi }{x^+ + \xi} = q^{-\del} \frac{\xi x^+ + 1}{\xi x^- + 1},	
}
where parameters $x^\pm$ and $\xi$ satisfy
\spl{ \label{q:zeta}
q^{-\del}\zeta(x^+) = q^\del \zeta(x^-), \qq  \zeta(x) = -\frac{x+x^{-1}+\xi + \xi^{-1}}{\xi - \xi^{-1}} , \qq h^2 = \frac{\xi^2}{\xi^2-1}.
}
It is a direct computation to verify that this parametrization satisfies \eqref{q:short}. Using an analogy to \eqref{abcd}, we set $\eta_p^2=i(x^-_p-x^+_p)$ and fix the expression for $a_p$ to give
\spl{ \label{q:abcd}
a_p &= \sqrt{h}\, \eta_p \nu_p \si_p , \qq
b_p = i\sqrt{h}\, \frac{q^{\del/2} \xi\, \eta_p \si_p }{h \nu_p (q-q^{-1}) (\xi x^+_p+1)} , \\
c_p &= i\sqrt{h}\, \frac{\eta_p\nu_p\si_p}{(q-q^{-1})\,x^+_p}, \qq
d_p = \sqrt{h}\, \frac{q^{-\del/2} \xi\, \eta_p \si_p}{h \nu_p (x^-_p+\xi)} ,
}
where we have added extra subscripts to all the parameters for the homogeneity of the notation. We have that $\ga^2 = a_p / d_p$, which together with the relations (\ref{q:cons}-\ref{q:abcd}) can be used to write $R$-matrix \eqref{Rq} in the traditional notation. 


\subsection{Right-moving module} \label{sec:qright}

Consider the atypical module $A_q(\bar\la_1,\bar\la_2,\bar\nu) = {\rm span}_\C\{\bar w_0,\bar w_1\}$ and set $\deg_2 \bar w_0=0$ and $\deg_2 \bar w_1=1$. As in Section \ref{sec:right}, we introduce the notation $|\bar\psi_p\ran= \bar\ga\bar b_p \bar w_{1}$ and $|\bar\phi_p\rangle= \bar w_0$. Then, requiring ($i,j\in\{1,2\}$, $i\ne j$)
\eq{ 
[E_i, F_i]\,|\bar\varphi_p\ran = [\bar\la_i]_q\, |\bar\varphi_p\ran, \qu\; 
[E_i, F_{j}]\,|\varphi_p\ran = h\, [\bar\mu_i]_q\, |\bar\varphi_p\ran 
\;\tx{for}\; \bar\varphi \in \{\bar\phi,\bar\psi\}. \label{}
} 
and $\bar a_p \bar c_p = h\,[\bar \mu_1]_q$ and $\bar b_p \bar d_p = h\,[\bar \mu_2]_q$ ({\it c.f.\@} \eqref{q:cons}) we find
\spl{ \label{rqmod}
E_1 \,|\bar\psi_p\ran &= \bar c_p \,|\bar\phi_p\rangle , \qu E_2 \,|\bar\psi_p\ran = \bar b_p \,|\bar\phi_p\rangle , \qu F_1 \,|\bar\phi_p\ran = \bar d_p\,|\bar\psi_p\rangle , \qu 
F_2 \,|\bar\phi_p\ran = \bar a_p\,|\bar\psi_p\rangle , 
}
where parameters $\bar a_p$, $\bar b_p$, $\bar c_p$, $\bar d_p$ have the same explicit form as those in \eqref{q:abcd} subject to the bar notation. We have that $\bar\ga^2 = \bar c_p / \bar b_p$, which together with (\ref{q:cons}-\ref{q:abcd}), subject to the bar notation, can be used to write deformed analogues of the $R$-matrices discussed Remark \ref{R:R} (3) in the traditional notation.


\section{Coproduct $\Delta_\eps$} \label{App:Cop}

The coproduct $\Delta_\eps$ is given by ($i,j\in\{1,2\}$, $i\ne j$)
\spl{ \label{Delta:e}
\Delta_\eps(e_{i,r}) &= e_{i,r} \ot u^{\dt\mp} + u^{\dt\pm} \ot e_{i,r} + \sum_{l=1}^r \left( \eps_i\, u^{\dt\pm} h_{i,r-l}\ot e_{i,l-1} + \eps_{j}\, u^{\dt\mp} k_{i,r-l}\ot u^{\dt\mp2} e_{j,l-1} \right) , \\
\Delta_\eps(f_{i,r}) &= f_{i,r} \ot u^{\dt\pm} + u^{\dt\mp} \ot f_{i,r} + \sum_{l=1}^r \left( \eps_i f_{i,r-l} \ot u^{\dt\pm} h_{i,l-1} + \eps_{j}\, u^{\dt\mp2} f_{j,r-l} \ot u^{\dt\mp} k_{j,l-1} \right) , \\
\Delta_\eps(h_{i,r}) &= h_{i,r} \ot 1 + 1 \ot h_{i,r} + \sum_{l=1}^r \left( \eps_i h_{i,r-l}\ot h_{i,l-1} + \eps_{j}\, u^{\dt\mp2} k_{i,r-l}\ot u^{\dt\mp2} k_{j,l-1} \right) , \\
\Delta_\eps(k_{i,r}) &= k_{i,r} \ot u^{\dt\mp2} + u^{\dt\pm2} \ot k_{i,r} + \sum_{l=1}^r \left( \eps_{j}\, k_{i,r-l}\ot u^{\dt\mp2} h_{j,l-1} + \eps_i\,  u^{\dt\pm2} h_{i,r-l}\ot k_{i,l-1} \right) ,\\
\Delta(h_{0,r}) &= h_{0,r} \ot 1 + 1 \ot h_{0,r} - \sum_{l=1}^r \left( \eps_1\, u^{+} f_{1,r-l} \ot u^{+} e_{1,l-1} + \eps_2\, u^{-} f_{2,r-l} \ot u^{-} e_{2,l-1} \right) 
}
Setting $\eps_i=1$ one obtains $\Delta$ given by \eqref{Y(a):cop}.


\medskip

\end{document}